\theoremstyle{definition}
\newtheorem{example}{Example}
\newtheorem{problem}{Problem}
\newtheorem{theorem}{Theorem}
\newtheorem{lemma}{Lemma}[section]
\newtheorem{corollary}[lemma]{Corollary}
\newtheorem{remark}[lemma]{Remark}
\newtheorem{prop}[lemma]{Proposition}
\newtheorem{claim}[lemma]{Claim}
\newtheorem{definition}[lemma]{Definition}
\newcommand{\problemName}{optimal ordering for optimal stopping}
\newcommand{\ProblemName}{Optimal ordering for optimal stopping}
\newcommand{\sacomment}[1]{}
\newcommand{\xzcomment}[1]{}
\newcommand{\jscomment}[1]{}
\newcommand{\blue}[1]{#1} 
\begin{document}
	\title{On optimal ordering in the optimal stopping problem}
	\author{Shipra Agrawal\\
	sa3305@columbia.edu\and
	Jay Sethuraman\\
	js1353@columbia.edu \and
	Xingyu Zhang\\
	xz2464@columbia.edu}
	\maketitle
	\begin{abstract}
In the classical optimal stopping problem, a player is given a sequence of random variables $X_1,\ldots, X_n$  with known distributions. After observing the realization of $X_i$, the player can either accept the observed reward from $X_i$ and stop, or reject the observed reward from $X_i$ and continue to observe the next variable $X_{i+1}$ in the sequence. Under any fixed  ordering of the random variables, an optimal stopping policy, one that maximizes the player's expected reward, is given by the solution of a simple dynamic program. In this paper, we investigate a relatively less studied question of selecting the  order in which the random variables should be observed so as to maximize the expected reward at the stopping time. Perhaps surprisingly, we demonstrate that this ordering problem is NP-hard even in a very restricted case where each random variable $X_i$ has a distribution with $3$-point support of form $\{0, m_i, 1\}$, and provide an FPTAS. We also provide a simple $O(n^2)$ algorithm for finding an optimal ordering in the case of $2$-point distributions. Further, we demonstrate the benefits of order selection, by proving a novel prophet inequality for $2$-point distributions that shows the optimal ordering can achieve an expected reward  within a factor of $1.25$ of the expected hindsight maximum; this is an improvement over the corresponding factor of $2$ for the worst-case ordering.

	\end{abstract}
	\section{Introduction}\label{Introduction}
Consider a player who can probe a sequence of $n$ independent random variables $X_1, \ldots, X_n$ with known distributions. After observing (the realized value of) $X_i$, the player needs to decide whether to stop and earn reward $X_i$, or reject the reward and probe the next variable $X_{i+1}$. The goal is to maximize the expected reward at the stopping time. This is an instance of the optimal stopping problem, which is a fundamental problem studied from many different aspects in mathematics, statistics, and computer science, and has found a wide variety of applications in sequential decision making and mechanism design. 

 When the order in which the random variables $X_1, \ldots, X_n$ are probed is fixed, the optimal stopping strategy can be found by solving a simple dynamic program. Under this strategy, at every step $i$, the player would compare the realized value of the current random variable $X_i$ to the expected reward (under the optimal strategy for the remaining subproblem) from the remaining variables $X_{i+1}, \ldots, X_n$, and stop if the former is greater than the latter.
The celebrated prophet inequalities compare the expected reward of the optimal stopping strategy to $E[\max(X_1, X_2, \ldots, X_n)]$, where the latter can be interpreted as the expected reward of a prophet who can foresee (or `prophesize') the values of all random variables in advance and therefore always stops at the random variable with maximum value. 

 A seminal result of Krengel and Sucheston~\cite{krengel1977semiamarts} upper bounds the ratio of the prophet's expected reward and that of an optimal stopping strategy by $2$,  for an arbitrary sequence of $n$ random variables. 
 Furthermore, they show that this bound is tight even for $n=2$.
 Surprisingly, Samuel-Cahn~\cite{samuel1984comparison} shows that we can achieve an approximation ratio of 2 using simpler stopping strategies (rather than an optimal stopping strategy); subsequent work by Chawla et al.~\cite{chawla2010multi} and Kleinberg and Weinberg~\cite{kleinberg2012matroid} identifies other stopping strategies that establish the same bound.   Prophet inequalities have since been used to design simple, sequential, posted-price mechanisms that guarantee a constant fraction of the social welfare or revenue achievable by any mechanism~ (see e.g. \cite{chawla2010multi, hajiaghayi2007automated}).

 In this paper, we focus on the relatively less studied question of optimizing the {\it order} in which the random variables should be probed. Specifically, besides choosing a stopping strategy, if the player is free to choose the order in which the random variables are probed,  then which ordering would maximize the expected reward at the stopping time? This question is motivated both by practical considerations and theoretical observations about the optimal stopping problem.

In practice, many decision-making settings allow the player such a choice of ordering. Consider for example, the problem of sequentially  interviewing candidates for a position, which is often presented as a canonical example of the optimal stopping problem. Assuming that the decision of hire/no-hire needs to be made for each candidate immediately after the interview, the interviewer wants to stop interviewing on reaching the candidate with the highest quality. In such a setting, the interviewer may have the liberty to decide the ordering in which the candidates are invited for an interview. Intuitively, given statistical information about the quality of each candidate (based on their resume for example), some orderings of the candidates can make  the decision problem easier and can ensure higher expected quality of the hired candidate. This study aims to formalize this intuition by studying the question of  finding an optimal ordering, as well as quantifying the gains to the expected reward from this additional degree of freedom.

Some insights into the impact of ordering can be obtained by considering prophet inequalities. For an arbitrary ordering, the prophet inequality cannot be improved beyond the approximation factor of $2$. 
On the other hand,  a prophet inequality with a much improved factor of  $\approx 1.342$ has been shown for i.i.d. random variables \cite{abolhassani2017beating,correa2017posted}. This gap between the identical and the non-identical case can be closed through better ordering. An example is the study in \cite{esfandiari2017prophet} that showed prophet inequality with a factor of $e/(e-1) \approx 1.6$ for random ordering (and later beaten by \cite{azar2018prophet})

Optimal orderings have potential to close this gap even further.
Consider the special case when all distributions have supports on two or fewer points (hence-forth referred to as the 2-point case). For arbitrary ordering, the prophet inequality cannot be improved beyond the factor of $2$ even in this special case. Here is a simple example  (similar to the example in \cite{esfandiari2017prophet}) that demonstrates this limitation. Let $X_1$ be a random variable that takes values $\{0,1\}$, with probabilities $\{1-\epsilon, \epsilon\}$ respectively, and $X_2$ takes value $\epsilon$ with probability $1$. Then, the prophet's expected reward is $E[\max(X_1, X_2)]=\epsilon + (1-\epsilon)\epsilon = 2\epsilon-\epsilon^2$. However, for the ordering $(X_2, X_1)$, the reward earned by the player has an expected value $\epsilon$ whether or not $X_2$ is accepted, yielding only a 2-approximation. On the other hand, for the ordering $(X_1,X_2)$, an expected reward that is the same as the prophet can be achieved by the following strategy: probe $X_1$, stop if $X_1=1$, otherwise probe $X_2$. Thus, in this example, the best ordering is better than the worst-case ordering by a factor of $2$ and the best ordering attains the same value as the prophet. Furthermore, in this example, choosing a random ordering results in an expected reward of $\approx \frac{3}{2} \epsilon$; and therefore, the best ordering is better than the random ordering by a factor of $4/3$.

These observations motivate our investigation into optimal orderings.
 \blue{The optimal ordering problem has been previously studied in mathematics and statistics literature in the 80's (e.g.,  \citet{gilat1987best}, \citet{hill1985selection}, and  \citet{hill1983prophet}). However the focus there has been on analytically characterizing the optimal order for some special structured cases (like Bernoulli and exponential distributions). Our focus is on understanding the computational complexity and devising tractable algorithms. One difficulty in such a study is that the nature of this problem changes significantly depending on the type of distributions considered. For example, when distributions are Bernoulli or exponential, the optimal ordering can be found analytically \citep{hill1985selection}, but, the problem remains nontrivial for uniform distributions, and as we show in this paper, even for distributions with very small support.}

\blue{Unlike the fixed ordering case, the problem of finding an optimal ordering for optimal  stopping  cannot be easily solved in polynomial time by dynamic programming. The ordering problem is an instance of the more general {\it stochastic dynamic programs}. Recently, \citet{fu2018ptas} provided a polynomial time approximation scheme (PTAS) for a class of stochastic dynamic programs under the assumption that all the distributions involved are supported on a constant number of points. The ordering problem studied here can be formulated as a problem in their class of stochastic dynamic program. In fact the optimal ordering problem is a special case of what they refer to as the committed Pandora's box problem, a variant of the  Pandora's box problem \cite{weitzman1979optimal}. 
}

\blue{In this work, we delve deeper into the optimal ordering problem when the distributions involved have a small support. What is the computational hardness of this problem? Can the problem can be solved to optimality under distributions with very small suport (2 or 3)? Does there exist an FPTAS\footnote{Fully Polynomial Time Approximation Scheme} ? Given that most interesting counter-examples and lower bounds for  prophet inequalities and impact of ordering (some of which were discussed above) have been shown for distributions with just 2-point support, the problem appears to be nontrivial even in these special cases.}  

\noindent \paragraph{\bf Our contributions.}\blue{We show that the optimal ordering problem is  NP-hard even under a special case of 3-point distributions where the highest and lowest points of the support are the same for all the distributions. This is  surprising, especially since our optimal ordering problem is a special case of the committed Pandora's box problem, which is a slight variant of the Pandora's box problem. And for the latter, an efficient optimal adaptive strategy is known for arbitrary distributions \cite{weitzman1979optimal}.  In fact, the hardness of the committed Pandora's box problem was not understood before our result, even for arbitrary distributions \cite{fu2018ptas}}.

\blue{Among positive results, we present an  FPTAS for a special case of 3-point distributions. We also devise an efficient polynomial time algorithm for finding an optimal ordering in the case of 2-point distributions.  Further, we show that in this case, under an optimal ordering the  prophet inequality holds with a significantly better approximation factor than that under an arbitrary ordering.} 

Our results are summarized as follows:
\begin{itemize}
\item {\bf NP hardness for 3-point distributions (Theorem \ref{th:hardness} in \S\ref{sec:hardness}).} Through a reduction from the subset product problem, we show that the problem of finding an optimal ordering is NP-hard even when  each random variable $X_i$ is restricted to be a 3-point distribution with support on $\{0,m_i,1\}$ for some $m_i \in (0,1)$, and \blue{$E[X_i|X_i > 0] = E[X_j|X_j>0]$ for all $i, j$}.  \sacomment{TODO: add comments about strong NP-hardness?}
\item {\bf Optimal ordering for 2-point distributions (Theorem \ref{th:Two-Point_ordering} in \S\ref{sec:two-point-ordering}).} We show that there exists a simple quadratic time algorithm for finding an optimal ordering in the 2-point case. 
\item {\bf New prophet inequality for 2-point distributions (Theorem \ref{th:prophet} in \S\ref{sec:prophet}).}   
 We prove that given any set of variables with 2-point distributions, under the {\it optimal ordering}, the prophet inequality holds  with a much improved factor of $1.25$ as compared to $2$ for an arbitrary ordering. And further, our prophet inequality is tight for 2-point distributions.  This illustrates the significance of the ability to choose an ordering. 
 \item {\bf FPTAS for 3-point distributions (Theorem \ref{th:FPTAS-different} in \S\ref{sec:fptas}).} We provide an FPTAS for the optimal ordering problem for the case when each random variable $X_i, i=1,\ldots, n$ has a three-point distribution with support on $\{a_i,m_i,1\}$ for some $a_i, m_i\in [0,1]$.
\end{itemize}

\subsection{Related Work}
Our work builds on the large body of work, starting from the early work on the classical prophet inequality, on finding an optimal ordering for optimal stopping, and on Pandora's box problem, to more recent work on the prophet secretary problem and variations. We briefly survey this literature and position our contributions in context.

\noindent \textbf{Pandora's box problem and stochastic dynamic programs.} The optimal stopping problem considered here is similar in spirit to a well-studied---but substantially easier--- problem called ``Pandora's box problem''~\cite{weitzman1979optimal}.  As in our model, there are $n$ random variables $X_1, X_2, \ldots, X_n$ with distributions is known to the decision maker. Also as in our model, the decision maker is free to  probe any random variable at any stage. Unlike in our model, however, in the Pandora's box problem, the decision maker is allowed to choose the value of {\em any} random variable that has been probed; and the feature that makes that problem non-trivial is that a random variable can be probed only at a (known) cost. Weitzman~\cite{weitzman1979optimal} constructs an ``index'' policy for that model and shows that such a policy is optimal; these indices are closely related to Gittins indices for the famous multi-armed bandit problem~\cite{gittins2011multi}. \blue{Our problem is more directly related to a variant called the {\it Committed Pandora's box problem} \cite{fu2018ptas}. Like in our setting, there the decision maker is committed to only choosing the value of the last probed random variable. Committed Pandora's box can be formulated as a problem in the  class of  stochastic dynamic program studied in \cite{fu2018ptas}, There the authors provide a general PTAS for any such problem when the distributions have a support on constant number of points.  Our hardness result for $3$-point distributions therefore provides a very strong computational hardness result for such stochastic dynamic programs.}


\noindent \textbf{Optimal Ordering for optimal stopping:} As mentioned earlier, the problem we consider was studied earlier by Hill~\cite{hill1983prophet} and Hill \& Hordijk~\cite{hill1985selection}. In \cite{hill1985selection}, the authors provide simple ordering rules for some families of random variables; this includes the case when every random variable $X_i$ is uniformly distributed between $0$ and some positive number $\alpha_i$, and some very specific cases of two-point distributions. Our result on optimal order for general two-point distributions requires significantly more work, and will reduce to their results in the specific cases studied there.
More importantly, these earlier papers~\cite{hill1983prophet, hill1985selection} also give examples for which simple rules of thumb---ordering based on mean or variance; stochastic ordering, assuming the variables are all stochastically ordered---do not work. Our NP-hardness result suggests that such heuristic rules are unlikely to be optimal. 

\noindent \textbf{Prophet Inequalities:} The work of Krengel and Sucheston~\cite{krengel1977semiamarts} generated a lot of interest in developing prophet inequalities and in obtaining simpler proofs; an important contribution is the work of Samuel-Cahn~\cite{samuel1984comparison}, who provided a simple rule that achieves the same approximation ratio. A good summary of this early work on classic prophet inequalities is the survey of Hill and Kertz~\cite{hill1992survey}. The work of Hajiaghayi et al.~\cite{hajiaghayi2007automated} and Chawla et al.~\cite{chawla2010multi} led to several novel applications of these ideas in designing mechanisms with provably good guarantees on social welfare and revenue. Since these papers, the community has generalized and extended the classical model to richer feasibility domains~\cite{kleinberg2012matroid, feldman2014simple, rubinstein2016beyond, rubinstein2017combinatorial, alaei2014bayesian, ehsani2018prophet, dutting2017prophet}. A good overview of this line of work is the survey paper of Lucier~\cite{lucier2017economic}.

\noindent \textbf{Prophet Secretary Problem:} In addition to extending the classical prophet inequality, researchers have also developed new models to better understand the role of the various assumptions of the basic model. As an example, one can ask if the approximation ratio can be improved if the order in which the random variables are drawn is itself chosen uniformly at random. This model, dubbed the {\em prophet secretary} problem, has been explored actively since its introduction by Esfandiari et al.~\cite{esfandiari2017prophet}, who showed that an improved approximation ratio of $e/(e-1) \approx 
1.58$ can be achieved. They achieved this by using a sequence of non-increasing thresholds on the random variables: in such a policy, we accept the $j$th random variable if its value exceeds the threshold $T_j$, regardless of the identity of the random variable being observed. Subsequent work by Correa et al.~\cite{correa2017posted} showed that the same result can be achieved by threshold policies in which the thresholds depend only on the random variable being observed, but independent of when it is observed. Somewhat surprisingly, in very recent work, Ehsani et al.~\cite{ehsani2018prophet} show that one can recover this bound using a {\em single} threshold combined with a carefully chosen (randomized) tie-breaking rule. That the bound of $e/(e-1) \approx 1.58$ is not optimal has been demonstrated in a series of papers, first by Azar et al.\cite{azar2018prophet} to approximately $1.576$ and then by Correa et al.~\cite{correa2019prophet} to approximately $1.503$. Interestingly, Correa et al.~\cite{correa2019prophet} also prove an upper bound of $(\sqrt3 + 1) /2 \; \approx 1.366$ on the approximation ratio achievable by any algorithm, even for the case of 2-point distributions.

\noindent \textbf{Order Selection:} \label{order-secretary intro} A few recent papers address the order selection version of the prophet inequality problem as well. The focus is not on finding an optimal ordering of the random variables, but on establishing (improved) bounds on the performance of an optimal ordering relative to that of a prophet. For example, Yan~\cite{yan2011mechanism} proves a bound of $e/(e-1)$ for this problem and later \cite{beyhaghi2018improved} improves the bound to $1.528$. Note that any bound on the prophet secretary problem automatically carries over to this case.  Interestingly,  our approximation ratio of $1.25$ for the case of 2-point distributions {\em cannot} be achieved for the prophet secretary version of the problem. This is implied by our simple example presented in the introduction.

\noindent \textbf{IID Instances:} We close by mentioning the recent developments on prophet inequalities for the case in which all the random variables are drawn from the same distribution. Note that order selection is irrelevant in this case, so that all three versions of the classic problem---worst case order, random order, and best case order---coincide. Hill and Kertz~\cite{hill1982comparisons} constructed a worst-case family of instances for each $n$ and later showed that the approximation ratio for these instances is at least $\approx 1.342$; this has been shown to be tight in a recent paper of Correa et al.~\cite{correa2017posted}. Because this is smaller than the worst-case bound for the prophet secretary model, we now know a separation between these two models. Interestingly, we do not know if the worst-case instances for the order selection problem are i.i.d. instances. 



	\section{Preliminaries}
\label{sec:prelims}

Let $X_1,\ldots, X_n$ be random variables with (known) distributions $D_1, \ldots, D_n$ respectively. An ordering is defined by a permutation $\sigma$ of indices $\{1,\ldots, n\}$.

The optimal stopping problem under a fixed ordering $\sigma$ is defined as follows.  
At each time step $t=1,\ldots, n$, a player first observes the value of $X_{\sigma(t)}$, generated independently from distribution $D_{\sigma(t)}$. Then, the player can either decide to stop and accept reward $X_{\sigma(t)}$; or reject $X_{\sigma(t)}$ and continue to round $t+1$. 
The stopping time $\tau$ is defined as the time step $t$ at which the player stops and accepts reward $X_{\sigma(t)}$. If the player reaches the end of the sequence without accepting any reward, then the game stops automatically, and the stopping time is defined as $\tau=n+1$ with reward $X_\tau:=0$ (as a consequence, we will never accept a negative realization from a random variable).
The goal is to maximize the expected reward $E[X_\tau]$ at the stopping time $\tau$, where the expectation is taken both over the stopping time (which may depend on the random instantiations of the past $X_{\sigma(t)}, t=1, \ldots, \tau-1$) and the distribution of $X_\tau$. 
We denote the  expected reward at optimal stopping time as $V_\sigma$. That is,  
\begin{equation}
\label{eq:Vsigma}
    V_\sigma := E[X_{\tau^*}]
\end{equation}
where $\tau^*$ is the stopping time given by the optimal stopping rule.

It is easy to see that, due to the optimal substructure property in this problem, the optimal stopping rule is defined by a simple Dynamic Program (DP). Specifically, let $V_\sigma(j)$ denote the optimal expected reward for the subsequence $X_{\sigma(j)}, \ldots, X_{\sigma(n)}$, so that $V_\sigma:=V_\sigma(1)$. Then, 
 for $j=1,\ldots, n$
\begin{eqnarray}
    V_\sigma(j) & = & E[\max(X_{\sigma(j)}, V_\sigma(j+1))], \text{ with}\\
    V_\sigma(n+1)& :=& 0.
    \end{eqnarray}
The following additional notation will be useful in proofs: for any sequence of random variables, say $S:=(X_1, \ldots, X_n)$, we define $V(X_1, \ldots, X_n)$  as
\begin{equation}\label{eq:notation1}
    V(X_1, \ldots, X_n) := E[\max(X_1, E[\max(X_2, \cdots, E[\max(X_{n-1}, E[\max(X_n, 0)])\cdots])])]
\end{equation}
 Sometimes, for simplicity, we will write $V(S)$ instead of $V(X_1,\ldots, X_n)$. Furthermore, if \newline $S = (X_1,\ldots, X_k)$ and $T = (X_{k+1},\ldots, X_m)$ are two disjoint sequences of random variables, then let $V(S, T):=V(X_1,\ldots, X_m)$. For any ordering $\sigma$, 
\begin{eqnarray} \label{eq:backward}
V_\sigma(j) =  V(X_{\sigma(j)}, \ldots, X_{\sigma(n)}) = V(X_{\sigma(j)}, \ldots, X_{\sigma(i)}, V_\sigma(i+1))
\end{eqnarray}  
for any $j<i<n$.

Given the above DP equations, the values $V_\sigma(1), \ldots, V_\sigma(n)$ can be calculated by backward induction.
Then, the optimal stopping policy is defined as follows: at any time step $t$, compare the realized value of the random variable $X_{\sigma(t)}$ to $V_\sigma(t+1)$; if this realized value is at least $V_\sigma(t+1)$, stop and accept the reward; otherwise, if $t< n$, continue to probe $X_{\sigma(t+1)}$. 

%

\begin{definition}[\ProblemName~problem]
\label{def:prob} For any given ordering $\sigma$ of $n$ random variables $X_1, \ldots, X_n$, let  $V_\sigma$ be the expected value at the optimal stopping time as defined in \eqref{eq:Vsigma}. We define the problem of {\problemName} as  the problem of choosing an ordering $\sigma$ that maximizes $V_\sigma$ i.e., the problem of finding 
\begin{equation} 
\sigma^* = \arg \max_\sigma V_\sigma.
\end{equation}
\end{definition}
Note that our definition of the optimal ordering problem is restricted to finding an optimal {\em static} ordering. In other words, the ordering of all the random variables is decided in advance based only on the distributions. In particular, the observed values of the random variables examined up to a particular stage are not used to dynamically change the ordering of the remaining variables. In fact, \textit{Hill~\cite[Theorem 3.11]{hill1983prophet} proves that there is always a static ordering of the variables that is optimal}.

\paragraph{k-point distributions.}
In this paper, we study computational hardness and present algorithmic methods for the optimal ordering problem. We investigate  the complexity of this problem for the case when the distributions have a finite $k$-point support. Specifically, we consider two-point and three-point distributions.

\begin{definition}[Two-point distributions] \label{def:two-point}A random variable $X_i$ with a two-point distribution is defined by three parameters $a_i, b_i, p_i$, and takes value 
  \[
X_i = \left\{\begin{array}{ll}
a_i, & \text{w.p. } 1-p_i,\\
b_i, & \text{w.p. } p_i.
\end{array}
\right.
\] 
Here, $a_i\le b_i$ are referred to as the left and the right end-point.
\end{definition}

\begin{definition}[Three-point distributions] 
\label{def:three-point}
A random variable $X_i$ with a three-point distribution is defined by five parameters $a_i, m_i, b_i, p_i, q_i$, and takes value 
  \[
X_i = \left\{\begin{array}{ll}
a_i, & \text{w.p. } 1-p_i - q_i, \\
m_i, & \text{w.p. } p_i, \\
b_i, & \text{w.p. } q_i.
\end{array}
\right.
\] 
Here, $a_i\le m_i\le b_i$, are referred to as the left end-point, the middle point, and the right end-point of the support, respectively.
\end{definition}



	\section{Hardness of optimal ordering}
\label{sec:hardness}

We show that the problem of finding an optimal ordering of the random variables is NP-hard even for the highly restricted special case in which each random variable $X_i$ is supported on exactly three points: $0$, $1$, and $m_i \in (0,1)$. 
Let $q_i := P(X_i = 1)$ and $p_i := P(X_i = m_i) $, so that $P(X_i = 0) = 1 - p_i - q_i$. We assume that $p_i > 0$, $q_i > 0$, and $p_i + q_i < 1$ for all $i$, so that each random variable can assume each value in its support with positive probability.
\begin{theorem}
\label{th:hardness}
The problem of optimal ordering for optimal stopping (refer to Definition \ref{def:prob}) is NP-hard for the case when for each $i=1,\ldots, n$, the random variable $X_i$ has a three-point distribution with support on $\{0,m_i,1\}$ for some $m_i \in (0,1)$. 
\end{theorem}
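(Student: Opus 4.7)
The plan is to reduce from the Subset Product problem: given positive integers $a_1,\ldots,a_n$ and a target $T$, decide whether there is $S\subseteq\{1,\ldots,n\}$ with $\prod_{i\in S}a_i=T$. Subset Product is NP-complete, and the reduction must produce a three-point instance with all $m_i\in(0,1)$ and all conditional positive means $E[X_i\mid X_i>0]$ equal to a common constant $c$.

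The key structural observation driving the construction is a multiplicative identity hidden in the DP. Setting $u:=1-v$ for the gap to the upper endpoint, a direct computation shows that whenever the running value satisfies $v\ge m_i$, the update $E[\max(X_i,v)]$ is purely multiplicative in $u$: $u_{\text{new}}=(1-q_i)\,u_{\text{old}}$. In the complementary regime $v<m_i$, the update is affine with different constants that depend on $p_i$ and $m_i$. Thus if we can force the backward DP to enter and then remain in the high-value regime while processing a selected subset of variables, the accumulated running value encodes the product $\prod_{i\in S}(1-q_i)$.

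Concretely, I would build $n$ "product" variables by setting $1-q_i$ proportional to $a_i$ under a suitable global normalization, and add a constant number of auxiliary "activator" and "padding" variables. The activator is designed so that in any near-optimal ordering it sits at a specific position, partitioning the ordering into a tail (processed first in the DP) whose role is to lift the running value above all relevant $m_i$ thresholds, and a prefix (processed last in the DP) that carries the multiplicative accumulation across the chosen subset. The activator's parameters are tuned so that the overall DP value is a strictly unimodal function of $\prod_{i\in S}(1-q_i)$, attaining its maximum exactly when this product equals the encoded $T$. The equal-conditional-mean constraint couples $(m_i,p_i,q_i)$ via $p_i(c-m_i)=q_i(1-c)$, but because $m_i$ is a free parameter, we spend that degree of freedom satisfying the constraint after fixing $q_i$ from $a_i$.

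The main obstacle will be ruling out "cheating" orderings: those that place the activator at an unintended position, split the product variables differently across the activator, or escape the intended regime by mixing low-$m_i$ variables into the multiplicative block. Handling this cleanly will likely require a structural exchange-style lemma showing that any deviation from the intended activator-partition structure strictly decreases the DP value by more than any internal reordering can gain, so that only orderings achieving $\prod_{i\in S}(1-q_i)$ exactly at the encoded target attain the threshold value. Since Subset Product is only weakly NP-complete, polynomially many bits for each $q_i$ and $m_i$ suffice; nevertheless the unimodal peak must be sharp enough that "close but not equal" subsets are distinguishable at the chosen bit precision, which will dictate the exact scaling of the construction.
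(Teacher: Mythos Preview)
Your high-level strategy matches the paper's: reduce from \textsc{Subset Product}, exploit the multiplicative update $1-v_{\text{new}}=(1-q_i)(1-v_{\text{old}})$ in the regime $v\ge m_i$, and arrange parameters so the objective is a strictly concave (hence unimodal) function of a subset product. But you are over-engineering the reduction, and the part you flag as the ``main obstacle'' is precisely what the paper eliminates by a structural lemma rather than by gadgetry.

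The paper proves (Claim~\ref{claim:ST}) that for \emph{any} instance with supports $\{0,m_i,1\}$, every optimal ordering already has the form $(S,T)$: variables in $S$ are accepted only at value $1$, variables in $T$ are accepted at any positive value, $S$ precedes $T$, and the variables in $T$ must be in weakly decreasing order of $E_i=E[X_i\mid X_i>0]$. Once you enforce $E_i\equiv c$ for all $i$ (which you also do), the ordering within $S$ and within $T$ is irrelevant, so the only degree of freedom is the \emph{partition} $(S,T)$. There is no need for activator or padding variables, and there are no ``cheating'' orderings to rule out: the exchange argument in Claim~\ref{claim:ST} already shows every optimal ordering is of this partitioned form.

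With that structure in hand, the paper's construction is entirely explicit: $1-q_i=1/a_i$, $1-p_i-q_i=1/a_i^2$, and $m_i=(B^2-a_i)/(B^2+1)$, which forces $E_i=B^2/(B^2+1)$ for all $i$. Then $V(S,T)$ has a closed form depending only on $\gamma_T=\prod_{i\in T}a_i$, and one checks directly that it is strictly concave in $\gamma_T$ with maximizer $\gamma_T=B$. No precision or ``sharpness of the peak'' analysis is needed: the optimal partition attains the peak value if and only if some subset has product exactly $B$.

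So your proposal is on the right track but incomplete as stated; the missing idea is that the equal-conditional-mean constraint, combined with the general exchange characterization of optimal orderings, already forces the clean partition structure. Replace your activator/padding scheme with that structural lemma and the remaining computation becomes a two-line concavity check.
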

To prove the hardness result, we shall prove some useful properties on the structure of optimal orderings and optimal stopping rules for such random variables.

Fix any ordering $\sigma$ of the random variables $X_1, X_2, \ldots, X_n$. In this case,  the optimal stopping policy essentially partitions the $n$ variables into two categories: 
those which are accepted on being probed if and only if they realize their right endpoint $1$, i.e.,  the  (ordered) subset  ${S}^\sigma:=\{X_{\sigma(i)}, i\in [n]: V_{\sigma(i+1)}> m_{\sigma(i)}\}$; and  the remaining (ordered) subset ${T}^\sigma:= \{X_{\sigma(1)}, \ldots, X_{\sigma(n)}\} \setminus {S}^\sigma$. Note that since the last random variable is always accepted irrespective of its value, the last variable will always be in $T^\sigma$.

We claim that in {\em any} optimal ordering $\sigma$, the random variables in $S^\sigma$ appear before the random variables in $T^\sigma$. Further, the random variables in $T^\sigma$ must be ordered in weakly descending order of $E_i$, where
$$E_i := E[X_i | X_i > 0] \; = \; \frac{m_i p_i + q_i}{p_i+q_i}$$

\begin{claim}
\label{claim:ST}
Given random variables $X_1,\ldots, X_n$ that have three-point distributions with support $\{0,m_i, 1\}, $, and probabilities $\{1-p_i-q_i, p_i, q_i\}$ such that $m_i\in (0,1), p_i>0, q_i>0, p_i+q_i<1 $, for $i=1, \ldots, n$. Then, an ordering $\sigma$ is optimal if and only if (i) $S^\sigma$ precedes $T^\sigma$; (ii) the random variables in $S^\sigma$ are arranged arbitrarily; and (iii) the random variables in $T^\sigma$ appear in weakly decreasing order of $E_i$.
In particular, if $E_1 = E_2 = \ldots = E_n$, then the random variables in $T^\sigma$ can be arranged arbitrarily as well. 
\end{claim}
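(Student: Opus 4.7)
The plan is to establish the characterization via an adjacent-swap (exchange) argument applied to a convenient product-form decomposition of the stopping value $V_\sigma$. Under any ordering $\sigma$ with induced partition $S^\sigma, T^\sigma$, the optimal stopping policy accepts only the right endpoint $1$ at positions in $S^\sigma$, and accepts the two highest values $\{m_i, 1\}$ at positions in $T^\sigma$ (it never accepts $0$, since the continuation value is always nonnegative). Setting $f_i = q_i$ and $A_i = q_i$ when $X_i \in S^\sigma$, and $f_i = p_i + q_i$ and $A_i = m_i p_i + q_i$ when $X_i \in T^\sigma$, the optimal stopping value can be rewritten as
\[
V_\sigma \;=\; \sum_{t=1}^n A_{\sigma(t)} \prod_{s<t}\bigl(1 - f_{\sigma(s)}\bigr),
\]
with $A_i/f_i = 1$ for $i \in S^\sigma$ and $A_i/f_i = E_i < 1$ for $i \in T^\sigma$.

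The core computation is the effect of swapping two adjacent variables $X_a, X_b$ at positions $k$ and $k+1$. With the assignment of $(A_i, f_i)$ held fixed, a direct calculation shows that the value of the sum under the ordering ``$a$ then $b$'' minus the value under ``$b$ then $a$'' equals $W_k\,(A_a f_b - A_b f_a)$, where $W_k = \prod_{s<k}(1 - f_{\sigma(s)})$; terms for positions beyond $k+1$ cancel because $(1-f_a)(1-f_b)$ is symmetric. Hence, for any fixed assignment, the sum is maximized precisely when the variables are arranged in weakly decreasing order of $A_i/f_i$.

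Using this, I would prove the ``only if'' direction by contradiction. Assume $\sigma$ is optimal but violates one of the structural properties: either some $T^\sigma$-variable immediately precedes an $S^\sigma$-variable, or two adjacent $T^\sigma$-variables are in strictly increasing order of $E_i$. In either case, perform the offending adjacent swap to obtain $\sigma'$. Because the ratio $A_i/f_i$ of the later variable strictly exceeds that of the earlier under $\sigma$, the swap calculation gives $U(\sigma', A^\sigma) > U(\sigma, A^\sigma) = V_\sigma$, where $U(\cdot, A)$ denotes the sum above with assignment $A$. Since $V_{\sigma'}$ equals the maximum of $U(\sigma', \cdot)$ over all valid stopping assignments on $\sigma'$, we have $V_{\sigma'} \geq U(\sigma', A^\sigma) > V_\sigma$, contradicting the optimality of $\sigma$. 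This yields properties (i), (ii), (iii), and directly gives the tie condition when all $E_i$ are equal (all $T^\sigma$-ratios coincide, so the swap formula gives equality).

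For the ``if'' direction, an ordering $\sigma$ satisfying the three structural properties with respect to its induced partition $A^\sigma$ maximizes $U(\cdot, A^\sigma)$ over all orderings by the swap formula, so $V_\sigma = U(\sigma, A^\sigma)$ is the best value attainable among orderings consistent with the partition $A^\sigma$. The main obstacle is precisely the interaction between a swap and the induced partition: the swap calculation freezes the assignment, but an adjacent swap can in principle alter which positions satisfy $V_\sigma(t+1) > m_{\sigma(t)}$. The decoupling inequality $V_{\sigma'} \geq U(\sigma', A^\sigma)$ used in the necessity proof is the device that sidesteps this difficulty in one direction; closing the argument for sufficiency requires combining it with the optimality of the stopping rule on $\sigma$, namely $V_\sigma = \max_{A'} U(\sigma, A')$, to verify that no alternative partition beats $A^\sigma$ once the structure is imposed. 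I expect this to be the most delicate step, and the one where the restriction to the $3$-point support $\{0, m_i, 1\}$ is really used.
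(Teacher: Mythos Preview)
Your necessity argument is exactly the paper's: an adjacent interchange, with your swap identity $W_k(A_af_b-A_bf_a)$ specializing to the paper's explicit differences $f(L)\,p_jq_i(1-m_j)$ for an $S$--$T$ violation and $f(L)(p_i{+}q_i)(p_j{+}q_j)(E_i-E_j)$ for a $T$--$T$ violation, and your device $V_{\sigma'}\ge U(\sigma',A^\sigma)$ is precisely the paper's ``swap while retaining the acceptance thresholds.'' For sufficiency you are working harder than the paper does: rather than analyzing how a swap perturbs the induced partition or invoking $V_\sigma=\max_{A'}U(\sigma,A')$, the paper simply notes that for a \emph{fixed} partition $(S,T)$ any two orderings satisfying (i)--(iii) have identical value (immediate from your own product formula, since permuting within $S$ or within $E_i$-ties in $T$ leaves the sum unchanged), and then combines this with the already-proved fact that some optimal ordering satisfies (i)--(iii).
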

\begin{proof}
A complete  proof of this lemma is provided in Appendix \ref{app:hardness}. The proof uses an interchange argument, where we show that if in the optimal ordering $\sigma$, the ordering of any pair of variables $X_i, X_j$ violates the stated conditions, then they can be interchanged to increase $V_\sigma$ which would be a contradiction to the optimality of the ordering $\sigma$.   
\end{proof}

\begin{remark}
\label{rem:ST}
In fact, if the conditions $m_i\in (0,1), p_i>0, q_i>0, p_i+q_i<1 $ are not satisfied, e.g., if there are some variables with $m_i\in \{0,1\}$, or if the probability of one or more of the three support points is $0$, then proof of Claim \ref{claim:ST} can be modified to show that the conditions (i), (ii), (iii)  are still sufficient (though not necessary) for an ordering $\sigma$ to be optimal.
\end{remark}

\begin{definition}[Ordered partitions and Optimal partitioning problem]
\label{def:op}
A sequence $(S,T)$ of $n$ random variables  is an {\it ordered partition} if $S,T$ partitions the set of variables $X_1,\ldots, X_n$, $T$ is non-empty, the variables within $S$ are ordered arbitrarily, and the variables within $T$ are ordered in weakly descending order of $E_i$. 
The {\it optimal partitioning problem}  is defined as the problem of finding an ordered partition $(S,T)$ that maximizes $V(S,T)$ among all ordered partitions.  
\end{definition} 

\begin{corollary} A corollary of Claim \ref{claim:ST} is that in the three-point distribution case considered here, the optimal ordering problem is equivalent to the  optimal partitioning problem. 
\end{corollary}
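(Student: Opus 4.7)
\emph{Proof proposal.} The plan is to establish the equality $\max_\sigma V_\sigma = \max_{(S,T)} V(S,T)$, where the right-hand maximum is taken over ordered partitions as in Definition \ref{def:op}, via two easy inequalities: Claim \ref{claim:ST} supplies one direction, and the observation that every ordered partition \emph{is} itself an ordering supplies the other. The natural correspondence between optimizers of the two problems will then drop out of the same argument.

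For the direction $\max_\sigma V_\sigma \le \max_{(S,T)} V(S,T)$, I would take any optimal ordering $\sigma^*$ and examine the induced pair $(S^{\sigma^*}, T^{\sigma^*})$ defined just before Claim \ref{claim:ST}. By that claim, the variables of $S^{\sigma^*}$ precede those of $T^{\sigma^*}$ in $\sigma^*$ and the variables of $T^{\sigma^*}$ appear in weakly descending order of $E_i$; moreover $T^{\sigma^*}$ is non-empty, because the last probed variable is always accepted on a positive realization and hence lies in $T^{\sigma^*}$. These are exactly the three requirements of Definition \ref{def:op}, so $(S^{\sigma^*}, T^{\sigma^*})$ is a feasible ordered partition, and the value under the optimal stopping rule of $\sigma^*$ is precisely $V(S^{\sigma^*}, T^{\sigma^*})$, giving $V_{\sigma^*} \le \max_{(S,T)} V(S,T)$.

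For the reverse direction, any ordered partition $(S,T)$ is by construction a specific ordering $\sigma_{S,T}$ of the $n$ variables, and the quantity $V(S,T)$ produced by the backward recursion in \eqref{eq:notation1} coincides with the optimal stopping value $V_{\sigma_{S,T}}$, which is at most $V_{\sigma^*}$. Combining the two inequalities yields the claimed equality, and the maps $\sigma^* \mapsto (S^{\sigma^*}, T^{\sigma^*})$ and $(S,T) \mapsto \sigma_{S,T}$ transfer optima between the two formulations. There is essentially no obstacle, since the corollary is a direct unpacking of Claim \ref{claim:ST}; the only point that might be confused is that the abstract ordered partition $(S,T)$ of Definition \ref{def:op} is a purely structural description of an ordering, whereas $(S^\sigma, T^\sigma)$ is determined dynamically by the optimal stopping rule, and the two need not agree on a non-optimal $(S,T)$. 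This mismatch plays no role in the proof, because the bound in the reverse direction only uses the identity $V(S,T) = V_{\sigma_{S,T}}$.
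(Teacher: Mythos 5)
Your proof is correct and is exactly the unpacking the paper intends (the corollary is stated without an explicit proof): Claim \ref{claim:ST} shows any optimal ordering reads off as a feasible ordered partition of equal value, and conversely every ordered partition is itself an ordering, so the two optima coincide and optimizers transfer. Your closing remark correctly identifies and defuses the only subtlety, namely that the structural pair $(S,T)$ of Definition \ref{def:op} need not equal the dynamically defined $(S^{\sigma_{S,T}},T^{\sigma_{S,T}})$ for a non-optimal partition.
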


We prove the NP-hardness of the optimal ordering problem by showing that finding an optimal partition is NP-hard. To that end, we consider the problem \textsc{Subset Product}, which is a multiplicative analog of the \textsc{Subset Sum} problem that is known to be NP-complete (see \cite{ng2010product}):

\begin{problem}\label{subset_product}
	\textsc{Subset Product}: Given integers $a_1,\ldots, a_n$ with each $a_i > 1$ and a positive integer $B$, is there a subset $T\subseteq N$ such that $\prod_{i\in T}a_i = B$?
\end{problem}

\begin{prop}
The  optimal partitioning problem (refer to Definition \ref{def:op}) is NP-hard when each of random variables $X_1,\ldots, X_n$ have three-point distributions with support $\{0,m_i, 1\}$, and probabilities $\{1-p_i-q_i, p_i, q_i\}$ such that $m_i\in (0,1), p_i>0, q_i>0, p_i+q_i<1 $, for $i=1, \ldots, n$.
\end{prop}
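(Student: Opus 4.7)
The plan is to reduce from \textsc{Subset Product}: given integers $a_1,\ldots,a_n>1$ and a target $B$, decide whether some $T\subseteq[n]$ satisfies $\prod_{i\in T}a_i=B$. After discarding any $a_i>B$ (which cannot belong to a valid $T$) and dismissing the trivial $B=1$ case, I may assume $2\le a_i\le B$ for all $i$. From such an instance I construct random variables $X_1,\ldots,X_n$ by
\[
q_i := \frac{a_i-1}{a_i},\qquad p_i := \frac{a_i-1}{a_i^2},\qquad m_i := \frac{B^2-a_i}{B^2+1},
\]
so $P(X_i=0)=1/a_i^2$. A direct calculation verifies $p_i,q_i>0$, $p_i+q_i<1$, $m_i\in(0,1)$, and, crucially, $E_i=(p_im_i+q_i)/(p_i+q_i)$ equals the common constant $E:=B^2/(B^2+1)$ for every $i$. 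Hence Claim~\ref{claim:ST} applies and the optimal partitioning problem reduces to choosing just the \emph{set} $T$.

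The core of the argument is an explicit formula for $V(S,T)$ on a \emph{consistent} partition, i.e.\ one whose intended $(S,T)$ equals the $(S^\sigma,T^\sigma)$ that the DP actually produces. For such a partition $V(S,T)=1-\prod_{i\in S}(1-q_i)(1-V_T)$ with $V_T=E\bigl(1-\prod_{j\in T}(1-p_j-q_j)\bigr)$; substituting $1-q_i=1/a_i$ and $1-p_j-q_j=1/a_j^2$, and writing $A_T:=\prod_{i\in T}a_i$ and $P:=\prod_{i=1}^n a_i$, this collapses to
\[
V(S,T)\;=\;1-\frac{h(A_T)}{P},\qquad h(x):=(1-E)\,x+\frac{E}{x}.
\]
The function $h$ is strictly convex on $(0,\infty)$ with a unique minimum at $x^\ast=\sqrt{E/(1-E)}$; the choice of $E$ was made precisely so that $x^\ast=B$ and $h(B)=2B/(B^2+1)$.

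Finally I translate between the partitioning problem and \textsc{Subset Product}. If $T_{SP}=(j_1,\ldots,j_\ell)$ satisfies $A_{T_{SP}}=B$, I verify that $(N\setminus T_{SP},T_{SP})$ is consistent by running the backward DP. At each position $k$ inside $T_{SP}$, the required inequality $V_\sigma(\text{after }j_k)\le m_{j_k}$ reduces, via $V_{\{j_{k+1},\ldots,j_\ell\}}=E(1-1/A_{\{j_{k+1},\ldots,j_\ell\}}^2)$ and $m_{j_k}=E(1-a_{j_k}/B^2)$, to $a_{j_k}\bigl(\prod_{r<k}a_{j_r}\bigr)^2\ge 1$, which is automatic from $a_i\ge 2$. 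For each $i\in N\setminus T_{SP}$, the condition $V_\sigma(\text{after }i)>m_i$ is implied by $V_T>m_i$ (since $V_\sigma(\text{after }i)\ge V_T$ by monotonicity of the $S$-recursion), which in turn is equivalent to $a_i>1$. Hence $\max_\sigma V_\sigma\ge 1-h(B)/P$. Conversely, by Claim~\ref{claim:ST} every optimal ordering is consistent, so $\max_\sigma V_\sigma=1-h(A_{T^\ast})/P$ for some achievable $A_{T^\ast}$; if no achievable product equals $B$, strict convexity of $h$ forces $h(A_{T^\ast})>h(B)$ and thus $\max_\sigma V_\sigma<1-h(B)/P$. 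The decision ``$\max_\sigma V_\sigma\ge 1-2B/(P(B^2+1))$?'' is therefore equivalent to \textsc{Subset Product}, and since all parameters are rationals of polynomial bit-length the reduction is polynomial. The main obstacle I anticipate is this consistency verification: the backward DP must classify every variable onto exactly the intended side of the partition; the closed-form identities above make this tractable, but the recursion through $T$ must be tracked carefully even though Claim~\ref{claim:ST} absolves me from choosing an internal order.
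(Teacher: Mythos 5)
Your proposal is correct and takes essentially the same route as the paper: the identical reduction from \textsc{Subset Product} with the same three-point gadget, the same closed-form value (your convex $h(A_T)$ is just $P\,(1-V(S,T))$, i.e.\ the paper's concave $f(\gamma_T)$ reflected), and the same conclusion that the optimum sits at $\gamma_T=B$ iff the instance is a yes-instance. Your explicit check that the intended partition is consistent with the DP thresholds is a welcome bit of extra care that the paper handles only implicitly, but it does not change the argument.
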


\begin{proof}
Given an instance of \textsc{Subset Product}, consider the following collection of random variables. Associated with each element $a_i$ is a random variable $X_i$ with distribution shown in Table~\ref{tab:xidist}. 

\begin{table}[h]
\centering 
\begin{tabular}{r ccc} 
\hline \\
Value & 0 & $\frac{B^2 - a_i}{B^2+1}$ & 1\\ \\
Probability & $\frac{1}{a_i^2}$ & $\frac{a_i-1}{a_i^2}$&  $\frac{a_i-1}{a_i}$\\ \\
\hline 
\end{tabular}
\caption{Distribution of $X_i$} 
\label{tab:xidist}
\end{table}

Notice that the $X_i$ has support $0$, and $1$, and $m_i = (B^2 - a_i) / (B^2 + 1)$. Notice also that as $a_i > 1$ for all $i$, $m_i \in (0,1)$, $0 <p_i < 1$, $0 <q_i < 1$ and $p_i + q_i < 1$ for all $i$. Finally, observe that
$$
E_i \; := \; E[X_i|X_i > 0] \; = \; \frac{(\frac{B^2-a_i}{B^2+1}) (\frac{a_i-1}{a_i^2})+ \frac{a_i-1}{a_i}}{1 - \frac{1}{a_i^2}}  \; = \; \frac{B^2}{B^2+1},$$
which is independent of $i$. Thus, in any ordered partition $(S,T)$ for this instance, the ordering within $S$ and the ordering within $T$ is irrelevant. 
	
To avoid cumbersome notation, we let $S$ and $T$ denote a partition of the indices $\{1,2, \ldots, n\}$. The expected reward $V(S,T)$ for an ordered  partition $(S, T)$ can be written as
\begin{eqnarray}
V(S,T) & = & 1 - \Pi_{i \in S} (1-q_i) + \bigg(\Pi_{i \in S} (1-q_i) \bigg) \bigg(1 - \Pi_{j \in T} (1-p_j-q_j) \bigg) \frac{B^2}{B^2+1}
\label{eq1}
\end{eqnarray}
An easy way to see why is to exploit the irrelevance of the relative ordering within $S$ and $T$: the decision maker earns 1 whenever any random variable in $S$ is observed to take on a value of 1; if none of the random variables in $S$ is accepted, the conditional expected value of any accepted random variable in $T$ is the same, and this possibility occurs unless every one of the random variables in $T$ is observed to be zero. Using the fact that $q_i = 1 - 1/a_i$ and $p_i = 1/a_i^2 - 1/a_i$, we can rewrite Eq.~(\ref{eq1}) as
$$V(S,T) \; = \;  1-\prod_{i\in S}\frac{1}{a_i} +\bigg(\prod_{i\in S}\frac{1}{a_i}\bigg)
		\bigg(1-\prod_{i\in T}\frac{1}{a_i^2}\bigg)\bigg(\frac{B^2}{B^2+1}\bigg).$$
	
Let $\gamma :=\prod_{i=1}^n a_i$, $\gamma_T:=\prod_{i\in T} a_i$, $\gamma_S:=\prod_{i\in S} a_i$. Then $\gamma_S = \gamma / \gamma_T$ and $V(S,T)$ can be written solely as a (one-dimensional) function of $\gamma_T$ as follows:
	
	$$V(S,T)=f(\gamma_T) := 1-\frac{\gamma_T}{\gamma} + \frac{\gamma_T}{\gamma}\bigg(1-\frac{1}{\gamma_T^2}\bigg)\frac{B^2}{B^2+1}$$ 
	
	Differentiating  $f(\cdot)$ with respect to $\gamma_T$ twice, we see that
	
	$$
	f'(\gamma_T) = -\frac{1}{\gamma} +\bigg(\frac{B^2}{B^2+1}\bigg)\bigg(\frac{1}{\gamma} + \frac{1}{\gamma\gamma_T^2}\bigg)
	$$
	
	and
	
	$$f''(\gamma_T) = \frac{-2B^2}{\gamma\gamma_T^3(B^2+1)} < 0$$
	
	Thus, $f(\gamma_T)$ is strictly concave in $\gamma_T$ and achieves its maximum when $f'(\cdot) = 0$, which occurs when $\gamma_T = B$. 

To complete the argument, we observe the following: given any instance of \textsc{Subset Product}, we construct the corresponding instance of our optimal partitioning problem and solve it to optimality. The optimal partition $(S,T)$ has $\gamma_T=B$ if and only if the given instance of \textsc{Subset Product} is a ``yes'' instance. Thus, the NP-completeness of \textsc{Subset Product} implies the NP-hardness of our optimal partitioning problem.
\end{proof}

\sacomment{TODO add some comments about strong NP-hardness?}

	\section{Ordering two-point distributions}
\label{sec:two-point}
In this section, we investigate optimal ordering and its significance for the case when all random variables $X_i, i=1,\ldots, n$ have two point distributions as defined in Definition \ref{def:two-point}. 
Recall that under two point distribution, a random variable $X_i$ can take two possible values $\{a_i, b_i\}$ (w.l.o.g. $a_i \le b_i$), with probability of the left end-point $a_i$ and the right end-point $b_i$ being $1-p_i$ and $p_i$, respectively. 

Our first result is the design of a simple and efficient algorithm for finding an optimal ordering in this case. Next, to illustrate the significance of being able to choose an ordering in this case, we prove a prophet inequality with an improved factor of $1.25$ for optimal ordering, as compared to the factor of $2$ for the worst-case ordering. We also show that this bound is tight, i.e., there exist two-point distributions under which even an optimal ordering cannot achieve an approximation ratio better than $1.25$ relative to the reward of the prophet. Thus, the stopping problem is non-trivial for two-point distributions.

We mention that the optimal order for several specific cases of two-point distributions were proven in Theorem 4.6 in \cite{hill1985selection}. Specifically, they prove the optimal order for the following families of random variables is by decreasing $\alpha_i$, where $\alpha_i$ is defined in one of the following ways:
    (1) $X_{\alpha_i}$ is a Bernoulli random variable with parameter ${\alpha_i} \in [0,1]$;
    (2) $X_{\alpha_i}$ is equally likely to be $\alpha_i$ or $-{\alpha_i}$;
    (3) $X_{\alpha_i}$ is $\alpha_i$ with probability $1/{\alpha_i}$ and $0$ otherwise; or  
    (4) $X_{\alpha_i}$ is equally likely to be $\alpha_i$ or ${\alpha_i} + 1$.

Here, we provide a much more general result by presenting an $O(n^2)$ algorithm for finding the optimal ordering given any collection of arbitrary two-point distributions.



\subsection{Optimal ordering algorithm}
\label{sec:two-point-ordering}
\newcommand{\propname}{Left Support Property\xspace}
\newcommand{\propabbrv}{LSP\xspace}

\begin{theorem}\label{th:Two-Point_ordering}
Given random variables $X_i, i=1,\ldots, n$ with arbitrary two-point distributions, there exists an algorithm to find the \problemName~in $O(n^2)$ time.
\end{theorem}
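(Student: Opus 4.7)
The plan is to characterize the structure of optimal orderings through a ``pivot'' variable---the first variable at which the optimal stopping rule always accepts---and then design an algorithm that enumerates over the choice of pivot.

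First, I would classify each variable $X_i$ at its position $t$ in an ordering $\sigma$ into three types based on the continuation value $V_\sigma(t+1)$: \emph{always-accept} (if $a_i \ge V_\sigma(t+1)$, in which case $V_\sigma(t)=\mu_i := p_i b_i + (1-p_i) a_i$), \emph{conditional-accept} (if $a_i < V_\sigma(t+1) \le b_i$, giving $V_\sigma(t) = p_i b_i + (1-p_i) V_\sigma(t+1)$), or \emph{never-accept} (if $V_\sigma(t+1) > b_i$, giving $V_\sigma(t) = V_\sigma(t+1)$). Because $V_\sigma(n+1)=0$ and supports are non-negative, every ordering has a first always-accept position $k$; call $X_{\sigma(k)}$ the \emph{pivot}. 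Positions after the pivot are irrelevant (we always stop at $k$), and never-accept variables in the prefix contribute nothing, so I may assume the prefix of an optimal ordering consists only of conditional-accept variables.

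The key structural lemma: for a fixed pivot $X_p$, the optimal prefix is exactly $S_p := \{i\ne p : a_i < \mu_p \le b_i\}$, arranged in decreasing order of $b_i$. I would establish this in two steps. (i) \textbf{Interchange:} if two adjacent prefix variables $X_i, X_j$ have $b_i < b_j$, swapping them changes the value by $p_i p_j (b_j - b_i) \ge 0$, provided both remain conditional-accept after the swap. The condition $a_\ell < \mu_p$ for $\ell \in S_p$ combined with the inductive fact that $V_\sigma(t+1) \ge \mu_p$ throughout the prefix guarantees $V_\sigma(t+1) > a_\ell$, and the sorted structure guarantees $V_\sigma(t+1) \le b_\ell$, so conditional-accept is preserved. (ii) \textbf{Monotonicity in the prefix set:} inserting any additional $X_c$ with $a_c < \mu_p \le b_c$ at its sorted-by-$b$ position changes $V_\sigma$ at that position from $V$ to $p_c b_c + (1-p_c) V \ge V$, since $b_c \ge V$ by sorting; this improvement propagates backward through the DP because $E[\max(X, \cdot)]$ is non-decreasing.

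The algorithm then iterates over each of the $n$ variables as the intended pivot. After an $O(n\log n)$ pre-sort by $b_i$, for each candidate $X_p$ one scans the sorted list in $O(n)$ to extract $S_p$, then computes the value of the specific ordering (sorted $S_p$, then $X_p$, then the remaining variables in any order) in $O(n)$ via the backward DP. Returning the best of these $n$ candidate orderings gives the overall algorithm in $O(n^2)$ time.

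The main obstacle is to justify that enumerating only these $n$ pivot-induced orderings suffices to recover $V^*$. One direction is trivial since each candidate yields a valid ordering. For the reverse, given any optimal $\sigma^*$ with effective pivot $X_{p^{\mathrm{eff}}}$, I would argue that the algorithm's candidate ordering for intended pivot $X_{p^{\mathrm{eff}}}$ achieves value $\ge V_{\sigma^*}$: any conditional-accept variable of $\sigma^*$'s prefix with $a_i < \mu_{p^{\mathrm{eff}}}$ lies in $S_{p^{\mathrm{eff}}}$ (and the algorithm may include additional eligible variables), while the remaining structural content is absorbed by the interchange + monotonicity argument. The subtle case is a variable $X_i$ in $\sigma^*$'s prefix with $a_i \ge \mu_{p^{\mathrm{eff}}}$ (conditional only because its local threshold $V_{\sigma^*}(j+1)$ is inflated by subsequent variables); here I would show that the algorithm's iteration using intended pivot $X_i$ (where $\mu_i \ge a_i \ge \mu_{p^{\mathrm{eff}}}$) already yields a value $\ge V_{\sigma^*}$, by comparing the two closed-form expressions directly. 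This last case analysis is the most delicate part of the proof.
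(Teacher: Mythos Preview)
Your high-level plan---enumerate $n$ anchor variables and sort the remainder by right endpoint---matches the paper's, but the structural lemma you lean on is different, and your version has a real hole. The paper's key ingredient is the Left Support Property (Lemma~\ref{lem:LEP} / Corollary~\ref{cor:LEP}): there exists an optimal ordering in which $a_{\sigma(i)} \le V_\sigma(i+1)$ for every $i < n$. In your language, this says the pivot can always be taken to be the \emph{last} variable, so the ``subtle case'' simply never arises. The paper then enumerates over which variable sits in position $n$, places \emph{all} other $n-1$ variables (not a filtered subset $S_p$) in the prefix sorted by $b$, and proves optimality (Proposition~\ref{Two-Point-Structure}) via a transformation: shift each prefix left endpoint to $E[\max(X_n,0)]$, subtract that constant, apply the zero-left-endpoint case (Lemma~\ref{0lep-offlinemax}), and reverse. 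The transformation handles both $a_i < \mu_n$ and $a_i \ge \mu_n$ uniformly---in the latter case one gets an inequality rather than an equality in step (1), which is all that is needed.

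Your proposal, by contrast, restricts the prefix to $S_p = \{i : a_i < \mu_p \le b_i\}$ and therefore must confront the possibility that an optimal ordering has a conditional-accept prefix variable $X_i$ with $a_i \ge \mu_{p^{\mathrm{eff}}}$. You handle this by switching the intended pivot to $X_i$, but ``comparing the two closed-form expressions directly'' is a promise, not a proof: the prefix of $\sigma^*$ may contain several such variables, the switch may cascade, and the original pivot $X_{p^{\mathrm{eff}}}$ need not even lie in $S_i$ (since $b_{p^{\mathrm{eff}}}$ can be below $\mu_i$). This is precisely the gap that the Left Support Property is engineered to close; without it or an equivalent, the reduction to $n$ candidate orderings is not established. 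A secondary issue: your candidate orderings place ``the remaining variables in any order'' after $X_p$, but the backward DP value depends on that order, so the algorithm is not well-defined as written (the paper's candidates have nothing after the anchor, so this does not come up).
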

 
To derive the above result, we first investigate a simpler case when the left endpoint  $a_i=0$ for all $i$. In this case, the optimal ordering turns out to be very simple: the variables can be ordered simply in descending order of their right endpoints. Our key result is in Lemma \ref{lem:LEP} and Corollary \ref{cor:LEP} where we show a \propname (\propabbrv) of optimal ordering for any set of distributions. This property  enables us to extend the above  simple algorithm to obtain an optimal ordering algorithm for general two-point distributions. 


The following lemma characterizes the optimal ordering for the case when $a_i=0$ for all $i$.
\begin{lemma}\label{0lep-offlinemax}
    Given random variables $X_i, i=1,\ldots, n$ that are two-point distributions with $a_i=0, i = 1,\ldots, n$. Then, an optimal ordering can be obtained by ordering the variables in descending order of their right endpoints $b_i$. That is, an ordering $\sigma$ is optimal if $b_{\sigma(1)}\ge \cdots \ge b_{\sigma(n)}$. 
    Furthermore, under an additional condition that $a_i\ne b_i, 0<p_i <1$ for all $i$, an ordering $\sigma$ is optimal {\it only if}  $b_{\sigma(1)}\ge \cdots \ge b_{\sigma(n)}$.

\end{lemma}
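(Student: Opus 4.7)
The plan is to build the proof around an adjacent-interchange argument. The centerpiece is the \emph{swap inequality}: for any pair of two-point random variables $X_i, X_j$ in this family (so $a_i = a_j = 0$) satisfying $b_i \leq b_j$ and any non-negative constant $W$,
\[
V(X_i, X_j, W) \leq V(X_j, X_i, W).
\]
By the identity \eqref{eq:backward}, swapping an adjacent ``inverted'' pair at positions $(k, k+1)$ in any ordering $\sigma$ changes $V_\sigma$ by exactly $V(X_j, X_i, W) - V(X_i, X_j, W)$ with $W = V_\sigma(k+2)$; the swap inequality then makes this change non-negative. Bubble-sorting any initial ordering into descending-$b$ order therefore produces an ordering of weakly larger $V_\sigma$, which gives the ``if'' direction. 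A short check also shows that two adjacent variables with $b_i = b_j$ can be swapped without changing $V$, so all descending orderings yield the same value.

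To prove the swap inequality I would expand
\[
V(X_i, X_j, W) = p_i \max(b_i, V_j) + (1 - p_i) V_j, \qquad V_j := p_j \max(b_j, W) + (1 - p_j) W,
\]
and its symmetric counterpart for $V(X_j, X_i, W)$, then split into cases on the relative order of $W, b_i, b_j$. If $W \geq b_j$, both sides equal $W$; if $b_j \geq W > b_i$, both sides equal $p_j b_j + (1 - p_j) W$; and if $b_j \geq b_i \geq W$, the difference $V(X_j, X_i, W) - V(X_i, X_j, W)$ collapses, depending on whether $b_i \geq V_j$, to either $p_i p_j (b_j - b_i)$ or $(1 - p_j) p_i (b_i - W)$, both clearly non-negative.

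For the ``only if'' direction under $b_i > 0$ and $0 < p_i < 1$ for all $i$, I would first establish the intermediate claim that in any optimal ordering $\sigma$ one has $b_{\sigma(k)} > V_\sigma(k+1)$ at every position $k$. Suppose instead that $b_{\sigma(k)} \leq V_\sigma(k+1)$ at some $k$; then $X_{\sigma(k)}$ contributes nothing at position $k$ and $V_\sigma(k) = V_\sigma(k+1)$, so deleting $X_{\sigma(k)}$ from position $k$ leaves $V_\sigma$ unchanged. Reinserting $X_{\sigma(k)}$ at the final position $n$ creates a new tail continuation of $p_{\sigma(k)} b_{\sigma(k)} > 0$, and I would argue by backward induction on the DP that this strict increase propagates through every earlier position: each step preserves strict improvement because the variable at that position realizes $0$ with probability $1 - p_{(\cdot)} > 0$, in which case the enlarged continuation strictly helps. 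This strictly increases $V_\sigma(1)$, contradicting optimality.

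Given this intermediate claim, suppose for contradiction that some optimal $\sigma$ satisfies $b_i < b_j$ at adjacent positions $k, k+1$, where $i = \sigma(k)$, $j = \sigma(k+1)$, $W = V_\sigma(k+2)$. The intermediate claim applied at positions $k+1$ and $k$ yields $b_j > W$ and $b_i > V_\sigma(k+1) = p_j b_j + (1-p_j) W = V_j$, placing us in the sub-case $b_j \geq b_i \geq W$ with $b_i \geq V_j$ of the swap inequality, where swapping strictly improves $V$ by $p_i p_j (b_j - b_i) > 0$. This contradicts optimality and completes the proof. The main obstacles will be executing the case analysis for the swap inequality carefully and verifying the backward-propagation argument for strict improvement in the ``move-to-end'' step, both of which rely crucially on the non-degeneracy conditions $0 < p_i < 1$.
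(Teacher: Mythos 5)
Your proof is correct, but it takes a genuinely different route from the paper's. The paper's ``if'' direction is essentially a one-liner exploiting the special structure $a_i=0$: under the descending ordering the optimal policy accepts the first realized right endpoint, which (because the $b_i$'s are decreasing) is exactly the hindsight maximum, so $V_\sigma = E[\max_i X_i]$, an unconditional upper bound for any policy; the ``only if'' direction then follows by exhibiting a positive-probability event (all of $X_{\sigma(1)},\dots,X_{\sigma(j-1)}$ realize $0$ and $X_{\sigma(j)}=b_{\sigma(j)}$) under which an inverted ordering must fall strictly short of the hindsight maximum whether it stops or continues. You instead run a local adjacent-interchange argument: a three-case swap inequality $V(X_i,X_j,W)\le V(X_j,X_i,W)$ for $b_i\le b_j$, bubble-sorting for the ``if'' direction, and for the ``only if'' direction a structural claim ($b_{\sigma(k)}>V_\sigma(k+1)$ at every position of an optimal ordering, proved by a delete-and-reinsert-at-the-end argument) that forces any adjacent inversion into the sub-case where the swap gives a strict gain $p_ip_j(b_j-b_i)>0$. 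I checked your case analysis and the strict-propagation steps (each relying on $P(X=0)=1-p>0$), and they are sound; your approach is longer but more elementary, does not use the prophet bound, and is the same interchange technique the paper deploys for three-point distributions in Appendix~A, so it would generalize more readily to settings where the descending order does not attain the prophet's value. One small imprecision: swapping an inverted pair at positions $(k,k+1)$ with $k>1$ does \emph{not} change $V_\sigma$ by \emph{exactly} $V(X_j,X_i,W)-V(X_i,X_j,W)$; the difference is attenuated when propagated through the prefix $X_{\sigma(1)},\dots,X_{\sigma(k-1)}$ (roughly by the probability that the prefix rejects everything). Only the sign is preserved, via monotonicity of $V(X_{\sigma(1)},\dots,X_{\sigma(k-1)},c)$ in $c$, which is all you need, but you should state it that way.
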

\begin{proof}
 Given two-point distributions with $a_i=0$ and an ordering $\sigma$ with  $b_{\sigma(1)}\ge \cdots \ge b_{\sigma(n)}$, the optimal stopping policy (refer to Section \ref{sec:prelims}) reduces to the following: at any step $t$, check if $X_{\sigma(t)}=b_{\sigma(t)}$ (i.e., realizes its right endpoint); if yes, stop; otherwise continue. Since the variables are probed in descending order of right endpoints, the expected value $V_\sigma$ for this policy is simply equal to the maximum of right endpoints realized, or $0$. Since each $X_i$ can take either value $a_i=0$ or $b_i\ge 0$, this is same as $E[\max(X_1, \ldots, X_n)]$, that is, the expected hindsight maximum or the prophet's expected reward. Thus, trivially such an ordering $\sigma$ is optimal.

To prove the `only if' part of the lemma statement, suppose that for an ordering $\sigma$, there exists $j$ such that $b_{\sigma(j)} < b_{\sigma(j+1)}$. From the above discussion, there exists an ordering that achieves the expected hindsight maximum reward. Hence, it is sufficient to prove that with some positive probability, the reward at the stopping time under ordering $\sigma$ will be strictly smaller than the hindsight maximum. 

Consider the event that $X_{\sigma(j)}$ is probed and takes value $b_{\sigma(j)}$. This event will happen, e.g., if $X_{\sigma(i)}=0$ for all $i<j$ and $X_{\sigma(j)}=b_{\sigma(j)}$, which has positive probability since $0<p_i<1$ for all $i$. Under this event, there are two possible  scenarios for any stopping policy: either the stopping policy can accept $X_{\sigma(j)}=b_{\sigma(j)}$ and stop; or it can continue to probe the next variable. In both scenarios, we argue there is a positive probability that the hindsight reward is higher than the reward at stopping time. In the first scenario, this is the case if $X_{\sigma(j+1)}=b_{\sigma(j+1)}>b_{\sigma(j)}$ which can happen with probability $p_{\sigma(j+1)}>0$. In the second scenario, this is the case if $X_{\sigma(j+1)}=\cdots = X_{\sigma(n)}=0$, which  can happen with probability $\prod_{i\ge j+1} (1-p_{\sigma(i)})>0$.

	


\end{proof}

%

\begin{lemma}[\propname~(\propabbrv)]
\label{lem:LEP} Suppose $X_1$, $X_2$, $\ldots$, $X_n$ are random variables with bounded support  $X_i\in [a_i, b_i]$.
	Then there exists an optimal ordering $\sigma$ with the property that \mbox{$\Pr(X_{\sigma(i)} \le V_\sigma(i+1))>0$} for $i=1,\ldots, n-1$. That is, for every $X_i$, its distribution has non-zero support on the left of the value $V_\sigma(i+1)$. 
\end{lemma}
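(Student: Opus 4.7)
My plan is induction on $n$. The base case $n=1$ is vacuous: there is no index $i<n$ to check. For the inductive step, fix an optimal ordering $\sigma^*$ of $X_1,\dots,X_n$, write $X^*:=X_{\sigma^*(1)}$, and invoke the inductive hypothesis on the remaining $n-1$ variables $\{X_{\sigma^*(2)},\dots,X_{\sigma^*(n)}\}$ to obtain an ordering $\pi$ of these variables that both attains the optimum over orderings (with terminal continuation value $0$) and satisfies LSP. I then decide whether $X^*$ belongs at the head or the tail of the final ordering based on whether $\Pr(X^*\le V_\pi)>0$.

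In Case A, $\Pr(X^*\le V_\pi)>0$, and I set $\sigma:=(X^*,\pi(1),\dots,\pi(n-1))$. Optimality follows from $V_\pi\ge V_{\sigma^*}(2)$, which gives $V_\sigma=E[\max(X^*,V_\pi)]\ge V_{\sigma^*}$; LSP at position $1$ is the case hypothesis, and at positions $2,\dots,n-1$ it is inherited from $\pi$ via the identity $V_\sigma(j+1)=V_\pi(j)$. In Case B, $X^*>V_\pi$ almost surely, and I set $\sigma:=(\pi(1),\dots,\pi(n-1),X^*)$. Since $X^*>V_\pi\ge V_{\sigma^*}(2)$ a.s., the optimal policy under $\sigma^*$ always accepts at position $1$, so $V_{\sigma^*}=E[X^*]$; the policy ``reject through position $n-1$, accept $X^*$'' already realizes $E[X^*]$ under $\sigma$, and optimality of $\sigma^*$ caps $V_\sigma$ at $E[X^*]$, so $\sigma$ is optimal. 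Because $V_\sigma(n)=E[X^*]\ge 0=V_\pi(n)$, backward DP monotonicity yields $V_\sigma(j+1)\ge V_\pi(j+1)$, lifting LSP of $\pi$ at positions $1,\dots,n-2$ to LSP of $\sigma$ at those same positions.

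The step I expect to be the main obstacle is LSP at position $n-1$ in Case B, namely showing $\Pr(X_{\pi(n-1)}\le E[X^*])>0$. To establish this, I will prove the chain
\begin{equation*}
\mathrm{essinf}(X_{\pi(n-1)})\;\le\;V_\pi(n-1)\;=\;E[\max(X_{\pi(n-1)},0)]\;\le\;V_\pi\;\le\;E[X^*],
\end{equation*}
where the first inequality uses a case split on the sign of $\mathrm{essinf}(X_{\pi(n-1)})$, the middle equality uses $V_\pi(n)=0$, and the last uses $X^*>V_\pi$ a.s. If any inequality is strict then $\mathrm{essinf}(X_{\pi(n-1)})<E[X^*]$ and $\Pr(X_{\pi(n-1)}\le E[X^*])>0$ by the definition of essential infimum. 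If the chain collapses to equality throughout, then $\mathrm{essinf}(X_{\pi(n-1)})=E[\max(X_{\pi(n-1)},0)]$; the negative-essinf branch is ruled out by a direct contradiction (the right-hand side is then nonnegative while the left is negative), and in the nonnegative-essinf branch $X_{\pi(n-1)}$ must equal its essinf almost surely, producing an atom at $E[X^*]$ and making the probability equal to $1$. This measure-theoretic equality case is the only delicate part; once it is handled, the induction closes.
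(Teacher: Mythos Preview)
Your proof is correct, but it takes a different route from the paper. The paper gives a one-shot constructive fix: start with any optimal ordering $\sigma$, find the \emph{first} position $i$ violating LSP (so $X_{\sigma(i)}>V_\sigma(i+1)$ a.s.\ and hence $V_\sigma(i)=E[X_{\sigma(i)}]$), and move $X_{\sigma(i)}$ to the very end. They then argue directly that the new ordering $\sigma'$ is still optimal and satisfies LSP at \emph{every} position, by showing (a) positions $1,\dots,i-1$ keep LSP because their continuation values can only increase, and (b) no position among $i,\dots,n-1$ can violate LSP, since otherwise some $X_{\sigma'(j)}>E[X_{\sigma(i)}]$ a.s., which would force $V_\sigma(i+1)>E[X_{\sigma(i)}]$ and contradict the original violation at $i$.

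Your induction instead peels off the first variable $X^*$ of an optimal ordering, recursively reorders the remaining $n-1$ variables, and then places $X^*$ either at the head or the tail depending on whether $\Pr(X^*\le V_\pi)>0$. The two arguments share the same underlying move (push an ``always-accepted'' variable to the end), but the paper does it once for a single offending variable, while you rebuild the ordering from scratch via the inductive hypothesis. The paper's version is shorter and avoids the recursive bookkeeping; yours is perhaps more mechanical.

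One simplification you should take: your worried ``equality case'' at position $n-1$ in Case~B never occurs. Since $X^*>V_\pi$ almost surely and $X^*$ has bounded support, $E[X^*]>V_\pi$ \emph{strictly}, so your chain $\mathrm{essinf}(X_{\pi(n-1)})\le V_\pi(n-1)\le V_\pi<E[X^*]$ is already strict at the last step, immediately giving $\Pr(X_{\pi(n-1)}\le E[X^*])>0$. The measure-theoretic case analysis on the essential infimum is unnecessary.
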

\begin{proof}
The proof is by construction. We show that from any optimal ordering we can obtain an  ordering that satisfies \propabbrv, without decreasing its value. W.l.o.g. let $\sigma=(1, \ldots, n)$ be an optimal ordering that does not satisfy \propabbrv. Let $X_i$ be the first r.v. among $X_1,\ldots, X_{n-1}$ which violates \propabbrv, i.e. $P(X_i \leq V_\sigma(i+1))=0$. Then, $X_i>V_\sigma(i+1)$ with probability $1$, and by definition $$V_\sigma(i)=E[\max(X_i, V_\sigma(i+1))] = E[X_i].$$ 
Now, consider an alternate ordering $$\sigma'=(1, \ldots, i-1, i+1, \ldots, n, i),$$
where variable $X_i$ is pushed to the end. 
We show that $\sigma'$ satisfies \propabbrv.

Observe that (refer to \eqref{eq:backward}),
$$V_{\sigma'}(i) = V(X_{i+1}, \ldots, X_n, X_i) =  E[\max(X_{i+1}, \cdots, E[\max(X_n,  E[\max(X_i, 0)])]\cdots)] \ge E[X_i]$$
Thus, 
\begin{equation}
    \label{eq:1}
    V_{\sigma'}(i) \ge V_\sigma(i) = E[X_i]
\end{equation} 
From above we can derive the following conclusions about the new ordering $\sigma'$:
\begin{itemize}
\item \propabbrv~ property is satisfied for indices  $i, \ldots n-1$ in $\sigma'$: Suppose for contradiction that for some $i\le j\le n-1$, \propabbrv property is violated, i.e., suppose that $X_{\sigma'(j)}>V_{\sigma'}(j+1)$ with probability $1$. Now, since
$$V_{\sigma'}(j+1)\ge V_{\sigma'}(n)  = E[\max(X_i, 0)]\ge E[X_i],$$
we have that $X_{\sigma'(j)}> E[X_i]$ with probability $1$. Since $\sigma'(j)\in \{i+1, \ldots, n\}$, this implies $V_\sigma(i+1)=V(X_{i+1}, \ldots, X_n)>E[X_i]$. For its expected value to be be below $V_\sigma(i+1)$,  $X_i$ must take value below $V_\sigma(i+1)$ with non-zero probability. This implies that \propabbrv property is  satisfied by $X_i$ in ordering $\sigma$,  which is a contradiction to the assumption we started with.
\item \propabbrv~ property is satisfied for  indices $1, \ldots, i-1$  in $\sigma'$:  Since  $\sigma(j)=\sigma'(j)$ for $1\le j\le i-1$,  and $V_{\sigma'}(i) \ge V_\sigma(i)$, we have that (refer to \eqref{eq:backward})
$$V_{\sigma'}(j) = V(X_j,\ldots, X_{i-1}, V_{\sigma'}(i))\ge V(X_j,\ldots, X_{i-1}, V_{\sigma}(i)) = V_\sigma(j).$$ 

This means for $1\le j\le i-1$, $$\Pr(X_{\sigma'(j)} \le V_{\sigma'}(j+1))\ge \Pr(X_\sigma(j) \le V_{\sigma}(j+1))>0,$$ 
where the last inequality followed from the assumption that $X_i$ is the first variable to violate OP in the ordering $\sigma$. Thus, the \propabbrv property is satisfied by $1\le j\le i-1$ in the new ordering.
\item $V_{\sigma'}\ge V_{\sigma}$: If $i=1$, $V_{\sigma'}(1)\ge V_\sigma(1) $ follows from \eqref{eq:1}, otherwise it follows from the observation in the previous bullet that $V_{\sigma'}(j)\ge V_{\sigma}(j)$ for $1\le j\le i-1$.
\end{itemize}
Thus, the new ordering $\sigma'$ satisfies \propabbrv~ property, and has  value greater than or equal to optimal ordering ($V_{\sigma'}\ge V_\sigma$). 
\end{proof}

	
	

A corollary of Lemma \ref{lem:LEP} is the following Left Endpoint Property (LEP) for  two-point distributions: 

\begin{corollary}[Left Endpoint Property (LEP) for two-point distributions] 
\label{cor:LEP}
Suppose $X_1$, $X_2$, $\ldots$, $X_n$ are random variables two-point distributions with support  $X_i\in \{a_i, b_i\}$.
	Then there exists an optimal ordering $\sigma$ with the property that $a_{\sigma(i)}\le  V_\sigma(i+1)$ for $i=1, \ldots, n-1$. 
	\end{corollary}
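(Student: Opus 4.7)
The plan is to derive Corollary~\ref{cor:LEP} as an almost immediate consequence of Lemma~\ref{lem:LEP} (\propabbrv), specialized to two-point distributions. The lemma guarantees an optimal ordering $\sigma$ such that $\Pr(X_{\sigma(i)} \le V_\sigma(i+1)) > 0$ for every $i = 1, \ldots, n-1$. I will fix this ordering and show it already satisfies the claimed LEP inequality $a_{\sigma(i)} \le V_\sigma(i+1)$.

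The key step is the following observation. Since $X_{\sigma(i)}$ is supported on only two values $a_{\sigma(i)} \le b_{\sigma(i)}$, the event $\{X_{\sigma(i)} \le V_\sigma(i+1)\}$ can have positive probability only if at least one of these two support points lies at or below $V_\sigma(i+1)$. But since $a_{\sigma(i)}$ is the smaller of the two, if \emph{any} point of the support is at most $V_\sigma(i+1)$, then in particular $a_{\sigma(i)} \le V_\sigma(i+1)$. I would spell this out by considering the three cases $p_{\sigma(i)} \in \{0\}$, $p_{\sigma(i)} = 1$, and $p_{\sigma(i)} \in (0,1)$ for completeness, though each is a one-line check.

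There is no real obstacle here: the whole content of the corollary is in reducing the positive-probability condition of Lemma~\ref{lem:LEP} to a deterministic inequality, which works precisely because the support has only two points and $a_{\sigma(i)}$ is the smaller one. Accordingly, the proof proposal is essentially a single short paragraph noting that the LEP inequality is the specialization of \propabbrv~to two-point distributions, combined with the trivial fact that the minimum of a finite support lies at or below any value that some element of the support lies at or below.
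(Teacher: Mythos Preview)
Your proposal is correct and matches the paper's approach exactly: the paper states this result as an immediate corollary of Lemma~\ref{lem:LEP} without giving a separate proof, and the specialization you describe---that $\Pr(X_{\sigma(i)} \le V_\sigma(i+1)) > 0$ forces the smaller support point $a_{\sigma(i)}$ to lie at or below $V_\sigma(i+1)$---is precisely the intended one-line deduction.
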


The left endpoint property  allows us to derive the following characterization of the optimal ordering in two-point distributions, which will significantly reduce the space of orderings to search over in order to find the optimal ordering.

\begin{prop} \label{Two-Point-Structure} 
Given $n$ variables with two-point distributions, define $n$ orderings as follows: for each $i=1,\ldots, n$, define  $\sigma^i$ as any ordering obtained by setting the last variable as $X_i$, and ordering the remaining variables in weakly descending order of their right endpoints. Then, at least one of these $n$ orderings is optimal.
\end{prop}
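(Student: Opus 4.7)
The plan is to start from any optimal ordering $\sigma^*$ that satisfies the Left Endpoint Property (LEP) guaranteed by Corollary~\ref{cor:LEP}, identify its last element $j := \sigma^*(n)$, and prove that the ordering $\sigma^j$ (which places $X_j$ last and arranges the remaining $n-1$ variables in weakly decreasing order of $b_i$) is itself optimal. The argument proceeds in three steps.

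First, I would use LEP to reduce to the case in which the left endpoints of all variables other than $X_j$ are zero. Define $\tilde{X}_i \in \{0, b_i\}$ with $\Pr(\tilde{X}_i = b_i) = p_i$ for $i \neq j$, and $\tilde{X}_j := X_j$. A short backward induction starting from $V_{\sigma^*}(n) = \tilde{V}_{\sigma^*}(n) = E[\max(X_j, 0)]$ shows $V_{\sigma^*}(i) = \tilde{V}_{\sigma^*}(i)$ at every $i$: LEP guarantees $a_{\sigma^*(i)} \le V_{\sigma^*}(i+1)$, so the low realization contributes exactly $V_{\sigma^*}(i+1)$ to the DP update in both versions (using also $V_{\sigma^*}(i+1) \ge 0$).

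Second, in the modified problem I would run an adjacent interchange on positions $k$ and $k+1 \le n-1$ of any ordering $\pi$ with $b_{\pi(k)} < b_{\pi(k+1)}$. Writing $A := \tilde{X}_{\pi(k)}$, $B := \tilde{X}_{\pi(k+1)}$, $c' := V_\pi(k+2)$, $\alpha := E[\max(A, c')]$, and $\beta := E[\max(B, c')]$, a direct computation in the generic subcase $\beta < b_A$ gives
\begin{equation*}
V(B, A, c') - V(A, B, c') \; = \; p_A p_B (b_B - b_A) \; \ge \; 0,
\end{equation*}
while the remaining subcases ($\beta \ge b_A$; $c' \ge b_B$) verify non-negativity in one or two lines each. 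Because $v \mapsto E[\max(X, v)]$ is monotone non-decreasing, this improvement at positions $k, k+1$ propagates up through the DP and cannot decrease $V_\pi(1)$. The LEP reduction is precisely what keeps this computation clean: with $a_A = a_B = 0$, only a handful of subcases for the relative order of $b_A, b_B, c', \alpha, \beta$ need to be checked, rather than the combinatorial blow-up one would face with arbitrary $a_i$.

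Third, iterating the interchange on the first $n-1$ positions of $\sigma^*$ in the modified problem sorts them into weakly decreasing $b$-order, yielding an ordering $\sigma'$ that agrees with $\sigma^j$ and satisfies $\tilde{V}_{\sigma'} \ge \tilde{V}_{\sigma^*} = V_{\sigma^*}$. To return to the original distributions I would run one more backward induction: the elementary bound $\max(a_i, v) \ge v = \max(0, v)$ for $v \ge 0$ propagates $V_{\sigma^j}(i) \ge \tilde{V}_{\sigma^j}(i)$ at every step, giving $V_{\sigma^j} \ge \tilde{V}_{\sigma^j} \ge V_{\sigma^*}$. By optimality of $\sigma^*$, equality must hold, so $\sigma^j$ is optimal. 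The main obstacle throughout is the interchange computation of step two; once LEP trims the problem to $a_i = 0$, everything else is DP bookkeeping.
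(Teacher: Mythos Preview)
Your proposal is correct and follows essentially the same approach as the paper: start from an LEP-optimal ordering, zero out the left endpoints of all but the last variable (LEP makes this lossless), sort the first $n-1$ positions by right endpoint, and use monotonicity to return to the original distributions. The only organizational difference is that the paper first shifts left endpoints to $E[\max(X_n,0)]$, removes $X_n$, and applies the additive-scaling lemma to reduce exactly to Lemma~\ref{0lep-offlinemax}, whereas you carry out the adjacent-interchange computation directly with $X_j$ still sitting at the end; this saves you the scaling lemma at the cost of re-deriving the zero-left-endpoint interchange inline.
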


\begin{proof}
	By Corollary \ref{cor:LEP} there exists an optimal ordering satisfying LEP, w.l.o.g. assume it is $\sigma^*=(1, \ldots, n)$. 
	Let $\sigma$ be 
	an ordering such that $\sigma(n)=n$, and $b_{\sigma(1)}\geq \ldots \geq b_{\sigma(n-1)}$, i.e., $\sigma$ is one of the $n$ orderings defined in the proposition statement. Then, we show that  $V_\sigma\ge V(X_1,\ldots, X_n)$.
	
	W.l.o.g., we assume that $b_i > E[\max{X_n, 0}]$ for all $i$, otherwise, $X_i$ will always be rejected in both $\sigma$ and $\sigma^*$. We will show that $V_{\sigma} = V(X_{\sigma(1)},\ldots, X_{\sigma(n-1)}, X_n)\geq V(X_1,\ldots, X_{n-1}, X_n) = V_{\sigma^*}$.
	
	For $i=1,\ldots, n-1$, define:
	
	\[
	X_i' = \left\{\begin{array}{ll}
	E[\max(X_n, 0)], & \text{w.p. } 1-p_i,\\
	b_i, & \text{w.p. } p_i
	\end{array}
	\right\}
	\] 
	and
		\[
	X_i'' = \left\{\begin{array}{ll}
	0, & \text{w.p. } 1-p_i,\\
	b_i - E[\max(X_n,0)], & \text{w.p. } p_i
	\end{array}\right\}
	\] 
	
We claim the following sequence of relations:

\begin{align*}
    V(X_{\sigma(1)},\ldots, X_{\sigma(n-1)}, X_n) & \geq V(X'_{\sigma(1)},\ldots, X'_{\sigma(n-1)}, X_n) \tag{1} \\
    &=V(X'_{\sigma(1)},\ldots, X'_{\sigma(n-1)}) \tag{2} \\
    &=V(X''_{\sigma(1)},\ldots, X''_{\sigma(n-1)}) + E[\max(X_n, 0)] \tag{3}\\
    &\geq V(X''_{1},\ldots, X''_{n-1}) + E[\max(X_n, 0)] \tag{4} \\
    &= V(X'_{1},\ldots, X'_{n-1})\tag{5}\\
    &= V(X'_{1},\ldots, X'_{n-1}, X_n)\tag{6}\\
    &= V(X_1,\ldots, X_n)\tag{7}
\end{align*}

We now justify each of the relations:

For $(1)$: First notice that $V_{\sigma}(i+1)\geq E[\max(X_n,0)]$ for all $i<n$. In case, $a_{\sigma(i)} \leq E[\max(X_n,0)] \le V_{\sigma}(i+1)$, then $X_{\sigma(i)}$ would be rejected if its left endpoint is realized, therefore, increasing its left endpoint to $E[\max(X_n,0)] $ does not change the overall expected value. On the other hand, if $a_{\sigma(i)} > E[\max(X_n,0)]$, then transforming by decreasing the left endpoint from $a_{\sigma(i)}$ to $E[X_n]$ can only decrease (or not change) the overall expected return.

For $(2)$ and $(6)$: If for any of the two orderings, the second last variable ($X'_{\sigma(n-1)}$ or $X'_{n-1}$) is probed and its left endpoint ($E[\max(X_n,0)]$) is realized, then accepting it would give  $E[\max(X_n,0)]$ reward, while rejecting it and continuing would also give  $E[\max(X_n, 0)]$. Thus, removing $X_n$ does not affect the overall expected reward.

For the $(3)$ and $(5)$: $X_i'\ge 0$ and $X_i''=X_i'-E[\max(X_n,0)]\ge 0$ due to the assumption made (w.l.o.g.) that $b_i \geq E[\max(X_n, 0)]$; we can therefore use an observation that for any sequence of variables $(Y_1,\ldots Y_k)$  and constant $c$ if $Y_i+c\ge 0$, then $V(Y_1+c,\ldots Y_k+c) = V(Y_1,\ldots Y_k) + c$. This is formally proven in Lemma \ref{scaling} in the appendix.

For $(4)$: Since $X_i''$ are two-point distributions with $0$ left endpoint, it follows from Lemma \ref{0lep-offlinemax} since $\sigma$ is an optimal ordering.

For $(7)$: This is due to the fact that $\sigma^*$ is an LEP ordering. Therefore, $a_i \leq V_{\sigma^*}(i+1)$ for all $i<n$, so that if $X_i$ realizes its left endpoint, it will be rejected anyway. Hence, changing its left endpoint to $E[\max(0, X_n)]\le V_{\sigma^*}(i+1)$ will not change its overall expected value. 

\end{proof}
\begin{proof}[Proof of Theorem~\ref{th:Two-Point_ordering}]
Proposition \ref{Two-Point-Structure} narrows down the space of orderings to be searched over to just $n$ orderings $\sigma^i, i=1,\ldots, n$, where the ordering $\sigma^i$ was defined in Proposition \ref{Two-Point-Structure}. 
It will take $O(n)$ time to compute the expected reward $V_{\sigma^i}$ of each of these $n$ orderings, and therefore takes $O(n^2)$ time to compute the expected reward for all $n$ orderings and find the best ordering.
\end{proof}


\subsection{Prophet inequality for optimal ordering}
\label{sec:prophet}

In the previous subsection, we presented an  an algorithm for finding the optimal ordering for two-point distributions. In this section, we prove a prophet inequality bounding the ratio of the prophet's reward and the expected reward under best ordering by $1.25$. 
In comparison, as noted in \ref{order-secretary intro}, there exist (two-point) distributions for which this factor is $2$ under worst-case ordering \cite{samuel1984comparison}. This illustrates the benefits of ordering. 

\begin{theorem}[Prophet inequality for two-point distributions]
\label{th:prophet}
Given any set of $n$ random variables $X_1, \ldots, X_n$ with two-point distributions, the prophet's expected reward is within $1.25$ factor of the expected reward at stopping time under optimal ordering, i.e., 
\begin{equation}
\label{eq:prophet}
    \frac{E[\max(X_1,\ldots, X_n)]}{(\max_\sigma V_\sigma)} \le 1.25
\end{equation}
where $V_\sigma$ is as defined in \eqref{eq:Vsigma}.
\end{theorem}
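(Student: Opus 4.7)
My plan is to leverage the structural result in Proposition~\ref{Two-Point-Structure}, which reduces the optimal ordering problem to choosing among $n$ candidate orderings $\sigma^1,\ldots,\sigma^n$: $\sigma^k$ places $X_k$ last and sorts the remaining $n-1$ variables in weakly decreasing order of their right endpoints $b_i$. It therefore suffices to prove that, for every instance, some candidate $\sigma^k$ achieves $V_{\sigma^k} \ge \frac{4}{5}\,E[\max(X_1,\ldots,X_n)]$. Throughout, set $M := E[\max(X_1,\ldots,X_n)]$ and relabel so that $b_1 \ge b_2 \ge \cdots \ge b_n$.

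The heart of the argument is a case split based on the strength of the best available ``fallback'' variable. If some $k^\dagger$ satisfies $E[X_{k^\dagger}] \ge \frac{4}{5} M$, then anchoring with $\sigma^{k^\dagger}$ immediately gives $V_{\sigma^{k^\dagger}} \ge V_{\sigma^{k^\dagger}}(n) = E[X_{k^\dagger}] \ge \frac{4}{5} M$ by monotonicity of the backward induction. Otherwise $E[X_k] < \frac{4}{5} M$ for every $k$, and in particular $\max_i a_i < \frac{4}{5} M$. In this second regime I would analyze the natural sort $\sigma^n$ (or $\sigma^{k^*}$ with $k^* = \arg\max_i a_i$ when LEP fails along $\sigma^n$) using Corollary~\ref{cor:LEP}: under LEP the backward recursion collapses to $V_{\sigma^n}(j) = b_j p_j + V_{\sigma^n}(j+1)(1-p_j)$, which telescopes to
\[
V_{\sigma^n} \;=\; \sum_{j=1}^n b_j p_j \prod_{i<j}(1-p_i) \;+\; a_n \prod_i(1-p_i).
\]
A matching decomposition of $M$, obtained by conditioning on the smallest index whose right endpoint is realized and on the residual ``no right endpoint hit'' event, yields $M - V_{\sigma^n} \le (\max_i a_i - a_n)\prod_i(1-p_i)$, up to correction terms that appear whenever some $a_i$ exceeds $b_j$ for a later index $j$. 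Combining with the regime hypothesis $\max_i a_i < \frac{4}{5} M$ should deliver the desired lower bound.

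The main obstacle is closing the second-regime bookkeeping tightly enough to recover the constant $\frac{5}{4}$ rather than a weaker approximation. The bound $M - V_{\sigma^n} \le \prod_i(1-p_i)\cdot \max_i a_i$ pays off only when the ``no-hit'' probability $\prod_i(1-p_i)$ is moderate; when it is close to $1$, nearly all of $M$'s mass is itself concentrated on the no-hit event, so an improved decomposition of $M$ (captured by $E[X_n]$ up to the difference between $\max_i a_i$ and $a_n$) becomes necessary, and one should then switch to $\sigma^{k^*}$ with $k^* = \arg\max_i a_i$ and rerun the telescoping with a shifted product index. A further technical point is that when some $a_i$ exceeds $b_n$, additional ``$a_i$ dominates a later $b_j$'' terms enter the decomposition of $M$ and must also be charged against the fallback. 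For the tightness of $\frac{5}{4}$, I expect the worst-case instance to emerge as the stationary point of the balancing step between the two regimes---likely a small parametric family with one high-$b$/low-$p$ variable paired with a near-constant fallback tuned so that both regime bounds bind simultaneously---and I would verify directly that the ratio $M/V^*$ approaches $\frac{5}{4}$ in the appropriate limit.
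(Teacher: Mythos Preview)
Your choice of candidate orderings is right: $\sigma^n$ (full sort by right endpoint) and $\sigma^{k^*}$ (put the variable with $a_{k^*}=\max_i a_i$ last, sort the rest) are exactly the paper's $\sigma^2=(U,i^*,W)$ and $\sigma^1=(U,W,i^*)$, and your first regime is correct---it corresponds to the paper's bound $T_3=\mu^*=E[X^*]$. The gap is that your second-regime argument does not close. The telescoping identity plus the hypothesis $\max_i a_i<\tfrac45 M$ gives only $M-V_{\sigma^n}\le\tfrac45 M\cdot\prod_i(1-p_i)$, which delivers $V_{\sigma^n}\ge\tfrac45 M$ only when $\prod_i(1-p_i)\le\tfrac14$. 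Switching to $\sigma^{k^*}$ based on a structural trigger (``LEP fails'' or ``no-hit probability close to $1$'') does not cover the intermediate range either; what is actually needed is to take the \emph{maximum} of the two orderings' values and prove that max is at least $\tfrac45 M$---neither ordering alone suffices across all parameter ranges.

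The paper does this via a reduction you have not identified. With $i^*=k^*$, set $U=\{i:b_i\ge b^*\}\setminus\{i^*\}$ and $W=\{i:a^*\le b_i<b^*\}$; variables with $b_i<a^*$ are dominated by $X^*$ and can be discarded from both sides of the inequality. On the event that some $i\in U$ realizes $b_i$, both candidate orderings match the prophet exactly, so one conditions on the complement and reduces to $W\cup\{i^*\}$. The crucial move is then to aggregate all of $W$ into a single virtual two-point variable with hit probability $p_w=1-\prod_{i\in W}(1-p_i)$ and conditional right value $b_w=E[\max_{i\in W}X_i\mid\exists i\in W:X_i=b_i]$. After this aggregation the (conditional) prophet value decomposes as $f+g+h$ with $f=p^*p_w(b^*-b_w)$, $g=(1-p_w)p^*b^*+p_wb_w$, $h=(1-p^*)(1-p_w)a^*$; the two orderings yield lower bounds $T_1=g+h$ and $T_2=f+g$, so the target inequality collapses to $\min(f,h)\le\tfrac15(f+g+h)$, which is then verified by a short case analysis on which of $T_1,T_2,T_3$ is largest. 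Your telescoping never isolates the pivotal middle term $g$, and without it the balancing that pins down the exact constant $\tfrac54$ cannot be carried out.
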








\begin{proof}
Let $X_1,\ldots, X_n$ be a set of random variables with two-point distributions, where $X_i$ takes values $\{a_i, b_i\}$, $a_i\le b_i$, with probabilities $1-p_i$ and $p_i$, respectively. Let $X^*:=X_{i^*}$ be the random variable with largest left endpoint, i.e., $i^*=\arg \max_i a_i$, with support points and probability denoted as $\{a^*, b^*, p^*\}=\{a_{i^*}, b_{i^*}, p_{i^*}\}$.
Let $U:=\{i: b_i \ge b^*\}\backslash i^*$, $W:=\{i: b^* > b_i \ge a^*\}$. Let the variables in $U, W$ be ordered in weakly descending order of their right endpoints $b_i$. 

Note that by definition of $U, W$ for any $i\notin U\cup W \cup \{i^*\}$, the right endpoint $b_i$ must be strictly smaller than the left endpoint $a^*$ of $X^*$. Thus, such an $X_i$ will always take a value smaller than $\max(X_1,X_2, \ldots, X_n)$ so that $E[\max(X_1,\ldots, X_n)]=E[\max_{i\in U \cup W\cup \{i^*\}} X_i]$. Further, for any ordering $\sigma$, let $\bar \sigma$ be the ordering restricted to the subset of variables in  $U \cup W\cup \{i^*\}$; then $V_\sigma \ge V(X_{\bar \sigma})$, where $X_{\bar{\sigma}}$ denotes the variables in $U \cup W\cup \{i^*\}$ ordered according to $\bar \sigma$. Thus, ignoring the remaining variables $\{i\notin U \cup W\cup \{i^*\}\}$ can only hurt the prophet inequality, since the numerator would remain the same while the denominator can only decrease.  We therefore ignore such variables in the remaining discussion, and assume w.l.o.g. that $U\cup W \cup \{i^*\} = \{1,\ldots, n\}$.

Next, we prove the prophet inequality \eqref{eq:prophet} by showing that the expected reward  at stopping time under one of the following two orderings is at least $4/5$ of the expected hindsight maximum. 

\begin{enumerate}
	\item Ordering $\sigma^1=(U, W, i^*)$. In this ordering, we first have the variables with right endpoint larger than $b^*$, in (weakly) decreasing right endpoint order. Then,  the variables with right endpoint smaller than $b^*$ (but larger than $a^*$), in (weakly) decreasing right endpoint order. And, finally we have $i^*$.
	\item Ordering  $\sigma^2=(U, i^*, W)$. That is, essentially the variables are ordered in (weakly) decreasing right endpoint order.
\end{enumerate}
An easy scenario is when at least one of the variables in $i \in U$ realizes its right endpoint 
$b_i$. In this scenario, under both the above orderings $\sigma \in \{\sigma^1, \sigma^2\}$, the optimal stopping policy  will  achieve the same reward as the prophet. 
To see this, first note that since 
$$\min_{i\in U} b_i \ge b^* \ge \max_{i\in \{V\cup i^*\}} b_i \ge V_\sigma(|U|+1),$$ the variables in $W\cup \{i^*\}$ 
will never be probed in this scenario. Among those in $U$, the optimal stopping policy will reject all variables in $U$ whose left endpoint is realized, and accept the first variable whose right endpoint is realized. This is because, $a^*\ge a_i$ for all $i\in U$, so that $  V_\sigma(|U|+1)\ge a^*\ge a_i$, and because the variables in $U$ are ordered in decreasing right endpoint order.   
Therefore, the optimal stopping policy will achieve a reward of  $\max_{i\in U: X_i=b_i} b_i$. This will also be the prophet's reward, as the maximum value can never be achieved by some left endpoint of $U$ (because $a_i\le a^*, \forall i\in U$) and none of the variables in $W\cup \{i^*\}$ can achieve a higher value than $\max_{i\in U: X_i=b_i} b_i$. 

In the remaining proof we consider the alternate scenario that every variable in $i \in U$ realizes its left  endpoint $a_i$. Furthermore, we assume that $W$ is non-empty, otherwise we will just accept $X^*$. Since $a_i\le a^*$ for all $i$, in this scenario, the prophet's reward is $\max_{i\in W\cup \{i^*\}} X_i$; and since $V_\sigma(|U|+1)\ge a^*$, the stopping policy for both orderings $\sigma\in \{\sigma^1, \sigma^2\}$ will reject all the variables in $U$. Thus, the variables in $U$ do not effect the reward of either the prophet or at the stopping time, and we ignore them for the remaining discussion and focus on the reward obtained from the variables in $W\cup \{i^*\}$.

Under the two orderings $\sigma^1, \sigma^2$, the random variables are ordered as $(W, i^*)$ and $(i^*, W)$, respectively, so that $V_{\sigma^1}= V(X_W, X^*)$ and $V_{\sigma^2}= V(X^*, X_W)$, where $X_W$ denotes the sequence of random variables in $W$. 

Now, define quantities 
$$b_w:=
E[\max_{i\in W} X_i| \exists i\in W: X_i\ge b_i], \text{ and } p_w:=\Pr(\exists i\in W: X_i\ge b_i)=1-\prod_{i\in W}  (1-p_i).$$
Then,  we can make the following observations about the prophet's expected reward and the maximum expected reward under the orderings $\sigma^1, \sigma^2$.

Firstly, since $b^*\ge b_w\ge a^*$, the prophet's maximum reward is given by
\begin{equation}
\label{eq:max}
   MAX:= E[\max(X_W, X^*)] = p^* b^* + (1-p^*)p_w b_w + (1-p^*) (1-p_w) a^*
\end{equation}
For the ordering $\sigma^1$,  since $V(X^*) = \mu^*:=p^*b^* + (1-p^*) a^* $ and variables in $W$ are ordered in decreasing right endpoint, the optimal stopping policy will either accept some $X_i$ in $W$ with the largest realized right endpoint $b_i$ as long as it is more than $\mu^*$, 
or wait till $i^*$ to get expected reward $\mu^*$, so that the expected reward
$$V_{\sigma^1} = V(X_W, \mu^*) \ge  E[\max(\max_{i\in W:X_i=b_i} b_i, \mu^*)] \ge p_w b_w + (1-p_w) \mu^*$$

For the ordering $\sigma^2$,  since the variables in $W$ are ordered in decreasing right endpoint, when probing $X_W$, the optimal stopping policy will  accept the first variable that realizes its right endpoint, so that $ V(X_W) \ge  p_w b_w$ 
and 
$$ V_{\sigma^2} = V(X^*, X_W) \ge  E[\max(X^*, p_wb_b)] \ge \max(\mu^*, p^*b^* + (1-p^*) p_w b_w )$$




Combining, we have
\begin{equation}
\label{eq:T123}
    \max \left(V_{\sigma^1}, V_{\sigma^2}\right) \ge \max\left(\underbrace{p_w b_w + (1-p_w) \mu^*}_{T1}\ ,\  \underbrace{p^*b^* + (1-p^*) p_w b_w}_{T2}\ ,\  \underbrace{\mu^*}_{T3}\right)
\end{equation}

Next, we complete the proof of the prophet inequality by showing that $\max(T1, T2, T3) \ge 0.8 MAX$, where $MAX$ was defined in \eqref{eq:max} as the prophet's expected reward. The proof of this statement is largely algebraic and is provided in Appendix \ref{app:prophet}. 
\end{proof}

\paragraph{Tightness.}
Here is an example where the expected reward at stopping time under best ordering is arbitrarily close to $0.8 * E[\max_i X_i]$. 

\begin{example}
Let $X_1$ have support $\{0.5, \frac{1}{2\epsilon}\}$ with probabilities $\{1-\epsilon, \epsilon\}$, and $X_2$ have support $\{0, 1\}$ with probabilities $\{0.5, 0.5\}$. Then

\begin{eqnarray*}
E[\max_i X_i] &=& \epsilon \frac{1}{2\epsilon} + (1-\epsilon)(1/2)*1 + (1-\epsilon)(1/2)*0.5\\
&=&\frac{5}{4} - \frac{3}{4}\epsilon
\end{eqnarray*}

The two possible orderings $(X_1, X_2)$ and $(X_2, X_1)$ are equally good here:

$$V(X_1, X_2) = \epsilon \frac{1}{2\epsilon} + (1-\epsilon)0.5\\
=1-0.5\epsilon$$

$$V(X_2, X_1) = 0.5 * 1 + 0.5*(\epsilon \frac{1}{2\epsilon} + (1-\epsilon)0.5) = 1 -0.5\epsilon$$

So in this example, the best ordering returns $0.8$ of the offline maximum.
\end{example}

	\section{Ordering three point distributions: FPTAS}
\label{sec:three-point}
\label{sec:fptas}
\newcommand{\OPT}{\text{OPT}}
\newcommand{\ALG}{\text{ALG}}
\sacomment{We should add some discussion here on how our FPTAS is better than their PTAS. May be instead of using $poly(n,1/\epsilon)$ we should use exact expression, so that we can compare to theirs. Update: I have added text for this. I also did a pass and moved the proof to the appendix.}
Previously, in Section \ref{sec:hardness}, we proved that ordering $n$ random variables with three-point distributions is  NP-hard even when each random variable $X_i$ has three point distribution with support on $\{0,m_i,1\}$. \blue{In this section, we provide an FPTAS  for a slight generalization of this special case where the support is on three points $\{a_i, m_i, 1\}$ where $a_i < m_i <1$. Note that in \cite{fu2018ptas}, the authors provide a PTAS for this problem with distributions that have support on any constant number points. The runtime complexity of the PTAS proposed there is $O(n^{2^\epsilon})$ in general. However, (to the best of our understanding) when applied to our problem, its complexity reduces to $O\left(\left(\frac{n}{\epsilon}\right)^{\epsilon^{-3}}\right)$. On other hand, for any $\epsilon\in(0,1)$, our FPTAS runs in time $O(\frac{n^5}{\epsilon^2})$ (although it only applies to the special case of $3$-point support).}

\begin{theorem}[FPTAS for three-point distributions with same  right endpoint]
\label{th:FPTAS-different}
Given a set of $n$ random variables $X_1,\ldots, X_n$, where each random variable $X_i, i=1,\ldots, n$ has three-point distribution with support on $\{a_i,m_i,1\}$ for some $m_i\in [0,1]$ and $a_i<m_i$. 
 Then, there exists an algorithm that runs in time $O(\frac{n^5}{\epsilon^2})$ to find an ordering $\sigma$ such that $\ALG=V_\sigma \ge (1-\epsilon)\OPT$. Here, $\OPT:=V_{\sigma^*}$ denotes the optimal expected reward at stopping time under an optimal ordering $\sigma^*$.
\end{theorem}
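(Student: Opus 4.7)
The plan is to reduce the problem to selecting the right subset of variables, exploit a closed-form expression for the expected reward as a function of that subset, and approximate the optimum by a dynamic program with discretized continuous state.

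The first step is to generalize the structural result of Section~\ref{sec:hardness} (Claim~\ref{claim:ST}) to the present setting where each $X_i$ has support $\{a_i,m_i,1\}$ with $a_i<m_i\le 1$. An LEP-style argument analogous to Lemma~\ref{lem:LEP} lets us restrict attention to orderings in which the continuation value at each position is at least the left endpoint of the current variable, so every variable is classified either as $S$-type (accepted only on the top value $1$; equivalently, the continuation value at the next position is at least $m_i$) or $T$-type (accepted on $m_i$ or $1$). The same adjacent-interchange calculation as in Claim~\ref{claim:ST}, now with case analysis on whether a swap preserves each variable's $S/T$ classification, shows that in any optimal ordering the $S$-variables precede the $T$-variables, the $T$-variables are arranged in weakly decreasing order of
\[
\tilde{E}_i := E[X_i \mid X_i\in\{m_i,1\}] = \frac{p_i m_i + q_i}{p_i+q_i},
\]
and the $S$-variables appear in arbitrary order. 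Thus choosing an optimal ordering reduces to choosing the subset $S\subseteq\{1,\dots,n\}$.

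For a candidate $S$ with $T=\{1,\dots,n\}\setminus S$, let $P_S := \prod_{i\in S}(1-q_i)$ and let $V_T$ denote the DP value of the sequence $T$ sorted in decreasing $\tilde{E}_i$ order. Define
\[
F(S) := 1 - P_S + P_S\,V_T.
\]
This equals the expected reward of the concrete (in general suboptimal) rule that rejects every variable in $S$ except when it realizes $1$ and then runs the optimal DP on $T$, so $F(S)\le V_\sigma$ for every $S$; when $S$ is the $S^\sigma$ of an optimal ordering $\sigma^*$, the structural theorem above says that the rule coincides with the true optimal DP, so $F(S^*)=\OPT$. Consequently $\max_S F(S)=\OPT$, and it suffices to approximate $\max_S F(S)$.

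The FPTAS is a DP that scans variables in increasing $\tilde{E}_i$ order, which is the reverse of the eventual $T$-sequence, so that inserting the current variable into $T$ extends the $V_T$-recursion by exactly one step. The state is a triple $(i,\hat V,\hat P)$ with $\hat V,\hat P\in[0,1]$ rounded to an additive grid of width $\delta$. At step $i$ the two transitions are $\hat V \leftarrow \mathrm{round}\bigl(E[\max(X_i,\hat V)]\bigr)$ with $\hat P$ unchanged (``$i\in T$''), or $\hat P \leftarrow \mathrm{round}\bigl(\hat P(1-q_i)\bigr)$ with $\hat V$ unchanged (``$i\in S$''); the terminal objective is $1-\hat P+\hat P\hat V$, and the ordering is recovered by the usual backtrace. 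Both transitions are $1$-Lipschitz in the continuous arguments, so the rounding errors accumulate to at most $O(n\delta)$ in the final objective. Combining with the trivial bound $\OPT\ge \max_i E[X_i]$ and a standard preprocessing step that rescales or prunes to ensure $\OPT\ge \Omega(1/n)$, we take $\delta=\Theta(\epsilon/n^2)$ to get a $(1-\epsilon)$-approximation; the DP then has $O(n)\cdot O(n^2/\epsilon)^2 = O(n^5/\epsilon^2)$ states and transitions, each evaluated in $O(1)$ time.

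The hardest part will be the generalized structural theorem for $a_i>0$: the interchange proof of Claim~\ref{claim:ST} strongly uses $a_i=0$, and with a positive left endpoint one has to verify that swapping two adjacent variables does not push the continuation value across a relevant $m$-threshold in a way that flips the sign of the comparison (a careful case split on the four combinations of pre- and post-swap $S/T$ classifications should suffice). Once that reduction and the LEP-style argument are in place, the identity $\max_S F(S)=\OPT$ and the discretization analysis above are routine.
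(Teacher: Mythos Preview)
Your approach is viable but takes a substantially different route from the paper's, and you make the structural step harder than it needs to be. The paper does not re-prove Claim~\ref{claim:ST} for positive left endpoints, nor does it build a new DP. It simply observes that Lemma~\ref{lem:LEP} already furnishes an optimal ordering $\sigma$ with $a_{\sigma(i)}\le V_\sigma(i+1)$ for all $i<n$; hence every non-last left endpoint is rejected by the optimal rule, and one may replace each such $a_i$ by $0$ (and the last variable by the constant $E[X_{\sigma(n)}]$) without changing $V_\sigma$. The transformed instance is a $\{0,m_i,1\}$ instance, so Theorem~\ref{th:FPTAS-same} applies as a black box. Since $\sigma(n)$ is unknown, the paper tries all $n$ choices for the last variable and runs Algorithm~\ref{alg:fptas} once for each; this outer loop supplies the extra factor of $n$ over the $O(n^4/\epsilon^2)$ of Theorem~\ref{th:FPTAS-same}. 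Your plan instead establishes the $(S,T)$ structure for general $a_i$, reduces to $\max_S F(S)$ with $F(S)=1-P_S+P_SV_T$, and runs a fresh two-dimensional DP over $(\hat V,\hat P)$; this is correct and even avoids the outer $n$-fold loop (the last variable's $a_i$ is absorbed into the base case $\hat V\leftarrow E[X_i]$), but it re-derives machinery the paper obtains for free.

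Two concrete remarks. First, the ``hardest part'' you flag can be sidestepped entirely: once LEP is in hand, replacing each non-last $a_i$ by $0$ leaves the value of the optimal LEP ordering unchanged, and then Claim~\ref{claim:ST} applies verbatim to the resulting $\{0,m_i,1\}$ instance---no four-way case split on $S/T$ classifications is needed. The only residual issue is the last position, which the paper handles by its outer loop and you handle implicitly in the DP's base case. Second, your additive grid with $\delta=\Theta(\epsilon/n^2)$ requires $\OPT\ge\Omega(1/n)$, and the ``standard preprocessing'' you invoke for this is under-specified: with the right endpoint pinned at $1$ you cannot rescale the \emph{instance}, though you can rescale \emph{internally} (the output ordering is scale-invariant), after which the prophet inequality in fact gives $\OPT\ge\tfrac12$, not merely $\Omega(1/n)$, and a coarser grid already suffices. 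Alternatively, a multiplicative grid as in Algorithm~\ref{alg:trim} avoids the issue without rescaling. Either fix is routine, but as written that step is a gap.
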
 
 
 We first demonstrate an FPTAS for an the special case where both left and right end points are the same for all $i$. Later we extend it to an FPTAS for the case when only the right end points are the same to prove Theorem \ref{th:FPTAS-different}.

 
\paragraph{Algorithm for same left and right end points.} We are given a set of random variables $X_1, \ldots, X_n$ with three-point distributions (refer to Definition \ref{def:three-point}) where for each $i$, $X_i$  takes values  $\{0, m_i, 1\}$  with probabilities $1-p_i-q_i$, $p_i$ and $q_i$, respectively.  The algorithm for finding an optimal ordering is based on the characterization of optimal orderings in this special case provided by Claim \ref{claim:ST} in Section \ref{sec:hardness} (also see Remark \ref{rem:ST}). 

Recall that in this case, there exists an optimal ordering $\sigma$
that (under the optimal stopping policy) partitions the $n$ variables into an ordered partition $(S^\sigma, T^\sigma)$ (refer to Definition \ref{def:op}); and therefore the optimal ordering problem can be solved by finding an optimal ordered partition. 
Specifically, define the collection $\mathcal{L}$ of ordered partitions  as the collection of sequences $(S,T)$ of $n$ random variables that can be formed by partitioning the set of variables $\{X_1, \ldots X_n\}$ into two sets $S$ and $T$ with $T\ne \phi$, and then ordering the variables within $S$ and within $T$ in weakly descending order of $E_i$. Then, from Claim \ref{claim:ST}, we have that
 \begin{equation}
    \label{eq:paritionProblem}
\OPT = \max_{(S, T)\in \mathcal{L}} V(S, T)
\end{equation}
Using this observation, Algorithm \ref{alg:fptas} is designed to solve the problem in  \eqref{eq:paritionProblem} of finding an optimal ordered partition. 
The idea behind the FPTAS is to discretize the interval $[0,1]$ using a multiplicative grid with parameter $1-\frac{\epsilon}{2n}$,  so that it needs to search over only $poly(n,1/\epsilon)$ partitions. 
A detailed description of the steps involved is provided below (Algorithm \ref{alg:fptas}). Here, given a sequence $A$ of  random variables, $\{X_i, A\}$  denotes the sequence of random variables formed by concatenating a variable $X_i$ to the beginning of sequence $A$.
\newcommand{\MAX}{\text{MAX}}

\begin{algorithm}[h]
\begin{algorithmic}
\caption{FPTAS for finding the optimal ordering  through optimal partitioning
\label{alg:fptas}
}
\STATE {\bf Input:} Ordered sequence of variables $X_1, \ldots, X_n$ such that $E_1 \le \cdots \le E_n$, parameters $\MAX$, $\epsilon$.
\STATE {\bf Initialize:}  $\mathcal{L}^0=\{(\phi, \phi)\}, \mathcal{L}^1 = \cdots = \mathcal{L}^n=\phi$; 
\FORALL{$k=1, \ldots, n$}
    \FORALL{$(S,T)\in \mathcal{L}^{k-1}$}
        \STATE Add two partitions $(\{X_k, S\}, T)$ and  $(S, \{X_k, T\})$ to $\mathcal{L}^k$.
    \ENDFOR
    \STATE Call Algorithm \ref{alg:trim} to reduce the number of partitions in $\mathcal{L}^k$ by setting $\mathcal{L}^k\leftarrow \text{TRIM}(\mathcal{L}^k, \epsilon, \MAX)$.
\ENDFOR
\STATE Return $\mathcal{L}^n$.
\end{algorithmic}
\end{algorithm}

\begin{algorithm}[h]
\begin{algorithmic}
    \caption{TRIM($\mathcal{L}, \epsilon, \MAX$)
    \label{alg:trim}
    }
    \STATE {\bf Initialize:} $\rho :=\left(1-\frac{\epsilon }{2n}\right)$, $max:=\max_{(S,T)\in \mathcal{L}} V(T)$,  
    and 
    $J:=\max\{j: \rho^j max\ge  \frac{\epsilon}{2n}\MAX\}.$ 
    \STATE Divide the partitions in $\mathcal{L}$ into $J+1$ buckets as 
    \begin{center}$\mathcal{B}_j := \{(S,T): \rho^{j} 
    max < V(T) \le \rho^{j-1} max
    \}, \text{ for } j=1,\ldots, J$
    \end{center}
    $$\mathcal{B}_0 := \{(S, T): T=\phi\}$$
    \STATE  Set  $(S^j, T^j) := \arg \max_{(S,T)\in \mathcal{B}_j} V(S)$, for $j=0,1,\ldots, J$.
     \STATE Return $\mathcal{L} := \{(S^j, T^j)\}_{j=0}^J$. 
\end{algorithmic}
\end{algorithm}

We prove the following theorem regarding Algorithm \ref{alg:fptas}.
The proof is in the appendix.

\begin{theorem}[FPTAS for three-point distributions with same left and right endpoint]
\label{th:FPTAS-same}
Given a set of $n$ random variables $X_1,\ldots, X_n$, where each random variable $X_i, i=1,\ldots, n$ has three-point distribution with support on $\{0,m_i,1\}$ for some $m_i\in [0,1]$. 
 Then, Algorithm \ref{alg:fptas} runs in time $O(\frac{n^4}{\epsilon^2})$ and finds an ordering $\sigma$ such that $\ALG=V_\sigma \ge (1-\epsilon)\OPT$. Here, $\OPT:=V_{\sigma^*}$ denotes the optimal expected reward at stopping time under an optimal ordering $\sigma^*$.
\end{theorem}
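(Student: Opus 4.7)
The plan is to establish both the $(1-\epsilon)$-approximation guarantee and the $O(n^4/\epsilon^2)$ runtime of Algorithm \ref{alg:fptas}. The starting point is Claim \ref{claim:ST} together with Remark \ref{rem:ST}, which reduce the optimal ordering problem to the optimal partitioning problem in \eqref{eq:paritionProblem}: $\OPT = \max_{(S,T) \in \mathcal{L}} V(S,T)$. Because the algorithm sorts variables in ascending order of $E_i$ and appends each $X_k$ to the front of either $S$ or $T$, the invariant that $T$ is in weakly descending $E_i$ order is maintained automatically. Without the TRIM step, $\mathcal{L}^n$ would enumerate every ordered partition of $\{X_1, \ldots, X_n\}$, including the optimal $(S^*, T^*)$, so the correctness argument reduces entirely to bounding the loss introduced by TRIM.

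The heart of the correctness proof is an induction on $k$ establishing the following invariant: after iteration $k$ (post-TRIM) there exists $(\tilde S_k, \tilde T_k) \in \mathcal{L}^k$ whose extension into a full partition obtained by replicating the optimal's assignments of $X_{k+1}, \ldots, X_n$ has value at least $\rho^k \OPT - k \cdot \frac{\epsilon}{2n} \MAX$, where $\rho := 1 - \frac{\epsilon}{2n}$. The base case $k = 0$ is immediate since $\mathcal{L}^0 = \{(\phi,\phi)\}$ and replicating the optimal's choices yields $(S^*, T^*)$. For the inductive step, I would start from $(\tilde S_{k-1}, \tilde T_{k-1})$, append $X_k$ following the optimal's choice to obtain the pre-TRIM partition $(S_k', T_k')$, and then either keep it (if it survives TRIM) or replace it with the bucket representative $(\tilde S_k, \tilde T_k)$, which by the TRIM rule satisfies $V(\tilde T_k) \ge \rho V(T_k')$ and $V(\tilde S_k) \ge V(S_k')$; the additive $\frac{\epsilon}{2n}\MAX$ term absorbs the case where $V(T_k') < \frac{\epsilon}{2n}\MAX$ and $(S_k', T_k')$ is dropped entirely (in which case one falls back to the $\mathcal{B}_0$ representative).

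The central technical step is propagating each bucket substitution into at most a factor-$\rho$ loss in the final extension value. For this I would rely on two Lipschitz-type properties of $c \mapsto V(U, c)$ for a fixed sequence $U$: (L1) $V(U,c)$ is monotone non-decreasing and $1$-Lipschitz in $c$, and (L2) the multiplicative inequality $V(U, \rho c) \ge \rho V(U, c)$, which follows by iterating the pointwise bound $\max(X, \rho c) \ge \rho \max(X, c)$ (valid for $X \ge 0$) through the recursive definition of $V$. The main obstacle will be the interaction between TRIM's bucketing by $V(T)$ and the ``add to $T$'' operation in later iterations: while $V(T)$ evolves via $V(T') = E[\max(X, V(T))]$, which depends only on the scalar $V(T)$ so that $\rho$ passes through by (L2), the final value $V(\tilde S, \tilde T) = V(\tilde S, V(\tilde T))$ depends on the full sequence $\tilde S$ and not merely on $V(\tilde S)$; showing that TRIM's use of $V(S)$ within a bucket is strong enough to preserve the per-iteration $\rho$-approximation of $V(\tilde S, V(\tilde T))$ through the remaining iterations is where most of the careful bookkeeping will go.

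Finally, applying the invariant at $k = n$ gives final value at least $\rho^n \OPT - \frac{\epsilon}{2}\MAX \ge (1 - \tfrac{\epsilon}{2})\OPT - \tfrac{\epsilon}{2}\MAX$; choosing $\MAX$ via a standard preprocessing binary search so that it is a $(1 + O(\epsilon))$-approximation of $\OPT$ yields $\ALG \ge (1-\epsilon)\OPT$. For the runtime, after each TRIM the list size is bounded by $J + 1 = O((n/\epsilon)\log(n/\epsilon))$ (the number of multiplicative buckets covering $[\tfrac{\epsilon}{2n}\MAX, \MAX]$) and at most doubles in the next iteration; each of $n$ iterations processes $O(n/\epsilon)$ partitions with $O(n)$ work per partition (to compute $V(T)$ and $V(S)$ and for bucketing), giving a total runtime of $O(n^4/\epsilon^2)$ up to logarithmic factors.
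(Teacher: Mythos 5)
Your overall architecture matches the paper's: reduce to optimal partitioning via Claim~\ref{claim:ST}, run the trimmed DP with multiplicative grid $\rho=1-\frac{\epsilon}{2n}$, and prove a per-iteration factor-$\rho$ loss by induction. But the step you yourself flag as ``where most of the careful bookkeeping will go'' is not bookkeeping --- it is the one place where the proof needs a structural fact about these specific distributions, and without it the induction does not close. The issue: TRIM keeps, within each bucket, the partition maximizing the \emph{scalar} $V(S)$, and you need this to imply $V(S'',c)\ge V(S',c)$ for the relevant continuation values $c$; for general sequences this is false, since $V(S,\cdot)$ is not determined by the number $V(S)$. The paper's resolution (end of the proof of Lemma~\ref{ALG-OPT approx}) is that for support $\{0,m_i,1\}$ a variable placed in the $S$-part is accepted only when it realizes $1$, so the whole sequence $S$ collapses to a single $\{0,1\}$ variable $Y$ with $E[Y]=V(S)$, giving $V(S,c)=V(S)+(1-V(S))c$, which is monotone in $V(S)$ for $c\le 1$ (Lemma~\ref{lemma-2}). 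Supplying this observation is essential; your proposal leaves it open.

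A second, smaller gap: your plan to absorb dropped partitions with small $V(T)$ by ``falling back to the $\mathcal{B}_0$ representative'' does not match the algorithm, since $\mathcal{B}_0$ contains only partitions with $T=\phi$; a partition with $0<V(T)\le\rho^J\,max$ lands in no bucket and is discarded outright, and switching to a $T=\phi$ representative changes the extension in a way your additive invariant does not control. The paper sidesteps this by first restricting (Lemma~\ref{Partition-Subset}) to partitions whose last $T$-variable $X$ satisfies $V(X)\ge\frac{\epsilon}{2n}\OPT$, at a cost of a factor $(1-\epsilon/2)$, so that the tracked partition never falls below the discard threshold and the induction can start at the iteration where $T$ first becomes nonempty. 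Your choice of $\MAX$ by binary search is workable but unnecessary --- the paper simply sets $\MAX=\frac{1}{2}E[\max_i X_i]$ and uses the prophet inequality to get $\OPT/2\le\MAX\le\OPT$ --- and your runtime accounting is fine (in fact slightly tighter than the paper's crude bound on the number of buckets).
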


Now we are ready to prove Theorem~\ref{th:FPTAS-different} using an extension of Algorithm \ref{alg:fptas}.
\paragraph{\bf Proof of Theorem \ref{th:FPTAS-different}}
Let $q_i:=P(X_i=1)$, $p_i:=P(X_i=m_i)$ and so $P(X_i=0)=1-p_i-q_i$. From Lemma \ref{lem:LEP}, we know that there exists an optimal ordering $\sigma$ with the property that $Pr(X_{\sigma(i)}\leq V_\sigma(i+1))>0$ for $i=1,\ldots,n-1$. This implies that $a_{\sigma(i)}\leq V_\sigma(i+1)$ and, as a consequence of the DP thresholds, if $a_{\sigma(i)}$ is ever realized from $X_{\sigma(i)}$ in this ordering, then the variable would be rejected. Define $X_{\sigma(i)}'$ to have support $\{0, m_{\sigma(i)}, 1\}$ and probabilities $\{1-p_{\sigma(i)}-q_{\sigma(i)}, p_{\sigma(i)}, q_{\sigma(i)}\}$ for $i=1,\ldots n-1$, and $X_{\sigma(n)}':=E[X_{\sigma(n)}]$. Then, $V(X_{\sigma(1)},\ldots, X_{\sigma(n)}) = V(X_{\sigma(1)}',\ldots, X_{\sigma(n)}')$.

Now, if we knew $\sigma(n)$, then we could transform $X_{\sigma(i)}$ into $X_{\sigma(i)}'$ for $i=1,\ldots,n$ and then, since $X_{\sigma(i)}'$ has support on $\{0,m_i,1\}$, we can use Algorithm~\ref{alg:fptas} and Theorem~\ref{th:FPTAS-same}. Since we do not know $\sigma(n)$, we run our FPTAS in Algorithm~\ref{alg:fptas} $n$ times. Here, in the $i^{th}$ iteration, we define $X_i':=E[X_i]$ and $X_j'$ to have support $\{0, m_j, 1\}$ and probabilities $\{1-p_j-q_j,p_j, q_j\}$ for $j\neq i$. \blue{And the algorithm is run on $X'_j$ variables instead of $X_j$s}. \blue{In one of these iterations (specifically the iteration where $X_i':=E[X_i]$, for $i=\sigma(n)$) 
 the ordering found by the algorithm will satisfy the required guarantees}.

	\section{Conclusions and further directions}\label{Conclusion}
\blue{In this paper, we took significant steps towards a comprehensive understanding of the optimal ordering problem when the distributions involved have support on a constant number of points. We provided a very strong hardness result that shows the problem is NP-hard even for a very special case of $3$ point distributions. Subsequently, we closed  the problem for $2$-point distributions, as well as the said special case of $3$-point distributions, by providing a polynomial time algorithm and an FPTAS respectively. We also provided insights on the impact of ordering by proving improved prophet inequalities.}

There is much left to investigate.
An open question is whether the FPTAS derived in Section \ref{sec:fptas} can be extended to $k$-point distributions for any constant $k$ (we know from \cite{fu2018ptas} that a PTAS is possible). \blue{Our hardness result does not  rule out the possibility of such an algorithm. }\sacomment{...although it makes it unlikely..? should we make comments about strong NP-hardness here? }
We proved that for two-point distributions, the expected reward under optimal ordering is within a factor of $1.25$ of the prophet's reward, thus improving the well-known prophet inequality for worst-case ordering (from factor $2$ to $1.25$). Can such a prophet inequality be proven for best ordering in general $k$-point distributions? Finally, an interesting direction is to conduct such an investigation into optimal ordering for other parametric forms of distributions. 
	\bibliographystyle{ACM-Reference-Format}
	\bibliography{bib}
	\appendix
	\newpage

\section{Missing proofs from Section \ref{sec:hardness}}
\label{app:hardness}

\begin{claim}
Suppose $X_1, X_2, \ldots, X_n$ is an optimal ordering of the random variables. In an optimal stopping rule for this ordering, let $S$ be the set of random variables that are accepted only when their realization is $1$, and $T$ be the random variables that are accepted whenever their realization is positive. Then, $S$ precedes $T$. That is, $i < j$ whenever $X_i \in S$ and $X_j \in T$.
\end{claim}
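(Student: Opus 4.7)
The plan is a proof by contradiction through a local interchange. Suppose some optimal ordering places a variable of $T$ before a variable of $S$. Choosing the largest index $j < n$ with $X_j \in T$ such that some later position holds a variable of $S$ produces an adjacent pair $X_j \in T$, $X_{j+1} \in S$; denote the latter $X_i$. I will show that swapping these two entries strictly increases the DP value $V_\sigma(j)$, and then that this strict gain propagates back to give $V_{\sigma'}(1) > V_\sigma(1)$, contradicting optimality.

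The key calculation is an explicit evaluation of $V_\sigma(j)$ and $V_{\sigma'}(j)$ in terms of the common continuation $V' := V_\sigma(j+2) = V_{\sigma'}(j+2)$, which is unaffected by the swap. In the original ordering, $X_i \in S$ means $V' > m_i$, so $V_\sigma(j+1) = q_i + (1-q_i)V'$; then $X_j \in T$ means $V_\sigma(j+1) \le m_j$, giving $V_\sigma(j) = q_j + p_j m_j + (1-p_j-q_j)\, V_\sigma(j+1)$. After the swap, $X_j$ faces continuation $V' \le V_\sigma(j+1) \le m_j$, so it still acts in ``$T$-mode'' and $V_{\sigma'}(j+1) = q_j + p_j m_j + (1-p_j-q_j)\, V'$; then $X_i$ faces continuation $V_{\sigma'}(j+1) \ge V' > m_i$, so it still acts in ``$S$-mode'' and $V_{\sigma'}(j) = q_i + (1-q_i)\, V_{\sigma'}(j+1)$. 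Expanding both expressions and subtracting yields
\[
V_{\sigma'}(j) - V_\sigma(j) \;=\; q_i \, p_j \, (1 - m_j),
\]
which is strictly positive under the standing assumptions $q_i > 0$, $p_j > 0$, $m_j < 1$.

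To close the argument I will propagate the strict gain back through the DP. For each position $k \le j-1$ the recursion $V(k) = E[\max(X_k, V(k+1))]$ is applied to the same $X_k$ in both orderings, and $E[\max(X_k, c)]$ is a piecewise-linear strictly increasing function of $c$ on $[0,1]$ whenever $0 < p_k + q_k < 1$. Hence $V_{\sigma'}(k) > V_\sigma(k)$ for all $k \le j$, and in particular $V_{\sigma'}(1) > V_\sigma(1)$, contradicting optimality of $\sigma$.

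The main technical care lies in verifying that after the swap $X_j$ remains in ``$T$-mode'' and $X_i$ remains in ``$S$-mode''; only then do the DP expressions collapse to the clean identity above. The bounds $V' \le V_\sigma(j+1)$ and $V_{\sigma'}(j+1) \ge V'$ are what make this check go through, and it is precisely the standing assumptions $m_j < 1$ and $q_i > 0$ that force the resulting gain $q_i p_j (1-m_j)$ to be strict.
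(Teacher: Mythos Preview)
Your proof is correct and takes essentially the same adjacent-interchange approach as the paper, arriving at the identical gain $q_i\,p_j\,(1-m_j)$. The only cosmetic difference is that the paper keeps the acceptance thresholds fixed after the swap (so the swapped value is automatically a lower bound on the new optimal and no mode-check is needed), whereas you recompute the optimal DP values and explicitly verify that $X_j$ stays in $T$-mode and $X_i$ in $S$-mode; both executions are valid and lead to the same conclusion.
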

\begin{proof} Suppose there is an optimal ordering for which $S$ does not precede $T$. In such a case there must be a pair of {\em adjacent} random variables $X_i, X_j$ in the ordering such that $X_i \in S$, $X_j \in T$, and $X_j$ appears before $X_i$. Let $L$ be the sequence of random variables that precede $X_j$ and $R$ be the sequence of random variables that succeed $X_i$. 
We prove the claim by a standard interchange argument in which $X_i$ and $X_j$ are swapped: not surprisingly, the contributions from $L$ and from $R$ will be the same in both sequences, so their difference will assume a simple form.

Let $\sigma_{ij}$ be the ordering where $X_i$ precedes $X_j$ and let $\sigma_{ji}$ be the interchanged ordering where $X_j$ precedes $X_i$. For their fixed thresholds, let $f(L)$  be the probability that none of the random variables in $L$  is accepted and let $E(L)$ be the expected reward given that a random variable in $L$ is accepted (similarly define for $R$). Define $V_{\sigma_{ij}}$ and $V_{\sigma_{ji}}$ to be the expected rewards under orderings $\sigma_{ij}$ and $\sigma_{ji}$ respectively, and under these fixed thresholds.
Then, it is easy to verify that
$$V_{\sigma_{ij}} \; = \; E(L)(1-f(L)) + f(L) [ q_i + (1-q_i) (p_j m_j + q_j) + (1-q_i)(1-q_j-p_j) E(R)(1-f(R))]$$
and
$$V_{\sigma_{ji}} \; = \; E(L)(1-f(L)) + f(L) [ p_j m_j + q_j + (1-q_j-p_j) q_i + (1-q_j-p_j) (1-q_i) E(R)(1-f(R))].$$
Simplifying, we have:
$$V_{\sigma_{ij}} - V_{\sigma_{ji}} \;\; = \;\; f(L) p_j q_i (1 - m_j) \; > 0.$$
Thus, swapping $i$ and $j$ while retaining their acceptance thresholds improves the original ordering, which, therefore, cannot be optimal.
\end{proof}

\begin{claim}
Suppose $(S,T)=(S^\sigma, T^\sigma)$ for some ordering $\sigma$. Then, the ordering $\sigma$ is optimal if and only if (i) $S$ precedes $T$; (ii) the random variables in $S$ are arranged arbitrarily; and (iii) the random variables in $T$ appear in weakly decreasing order of $E_i$.
In particular, if $E_1 = E_2 = \ldots = E_n$, then the random variables in $T$ can be arranged arbitrarily as well. 
\end{claim}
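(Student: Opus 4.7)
The plan is to prove the ``only if'' direction by an adjacent-interchange argument split into three cases according to where the swapped pair lies in $(S^\sigma, T^\sigma)$, and to deduce the ``if'' direction from the observation that the value of $\sigma$ under (i)--(iii) is invariant to the allowed rearrangements.

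Fix $\sigma$ and two adjacent positions $k, k{+}1$ with $X_i = X_{\sigma(k)}, X_j = X_{\sigma(k+1)}$; let $L$ be the prefix before $k$ and $R$ the suffix after $k{+}1$. Throughout I compare $V_\sigma$ with the value of the swapped ordering $\sigma'$ under the fixed stopping rule dictated by $\sigma$'s partition (accept each $X \in S^\sigma$ only at value $1$, and each $X \in T^\sigma$ whenever positive). Let $f(L), E(L)$ denote the probability of no acceptance in $L$ and the conditional expected accepted value, with analogous $f(R), E(R)$; because $p_i + q_i < 1$ for every $i$, $f(L), f(R) > 0$. The fixed-rule value of the original $\sigma$ equals $V_\sigma$ exactly, whereas the fixed-rule value of $\sigma'$ is a lower bound on $V_{\sigma'}$ (a supremum over all stopping rules); hence any strict fixed-rule improvement after the swap contradicts optimality of $\sigma$.

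The three cases are as follows. (a) When $X_i, X_j \in T^\sigma$, direct expansion yields
\[
V_{\sigma_{ij}} - V_{\sigma_{ji}} \;=\; f(L)\,(p_i+q_i)(p_j+q_j)(E_i - E_j),
\]
whose sign is that of $E_i - E_j$; optimality thus forces weakly decreasing $E$-order within $T^\sigma$, giving (iii). (b) When $X_i, X_j \in S^\sigma$, the symmetries $q_i + (1-q_i)q_j = q_j + (1-q_j)q_i$ and $(1-q_i)(1-q_j) = (1-q_j)(1-q_i)$ yield $V_{\sigma_{ij}} = V_{\sigma_{ji}}$, so swaps inside $S^\sigma$ are value-neutral, justifying (ii). (c) When $X_i \in T^\sigma$ is immediately followed by $X_j \in S^\sigma$, the first claim of Appendix~\ref{app:hardness} exhibits a strict fixed-rule improvement $f(L)\,p_i\,q_j\,(1-m_i) > 0$ upon swap; iterating on such adjacent violations shows that no $T$-variable can precede any $S$-variable in an optimal $\sigma$, giving (i).

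For the ``if'' direction the three computations jointly show that for any ordering satisfying (i)--(iii), $V_\sigma$ depends only on the partition $(S^\sigma, T^\sigma)$ together with the $E$-levels within $T^\sigma$, and is invariant under arbitrary reshuffling within $S^\sigma$ and under reordering of ties of $E$ in $T^\sigma$. Since the ``only if'' direction guarantees that some optimal $\sigma^*$ satisfies (i)--(iii), any $\sigma$ satisfying (i)--(iii) with the same induced partition as $\sigma^*$ attains the optimal value; and when $E_1 = \cdots = E_n$ the difference in (a) vanishes identically, so the $T^\sigma$ ordering is arbitrary. The main technical obstacle is the fixed-threshold bookkeeping: after a swap the DP thresholds at positions outside $\{k, k{+}1\}$ can shift, so one must confirm throughout that the original-$\sigma$ fixed rule both equals $V_\sigma$ (since it is the optimal rule for $\sigma$) and lower-bounds $V_{\sigma'}$ (since it is one feasible rule for $\sigma'$).
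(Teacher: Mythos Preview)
Your proposal is correct and follows essentially the same route as the paper: adjacent-interchange computations for the three cases (both in $T$, both in $S$, one in each), together with value-invariance for the ``if'' direction. Your case-(a) difference $f(L)(p_i+q_i)(p_j+q_j)(E_i-E_j)$ is in fact the correct simplification (the paper's displayed fraction has a typo), and your explicit observation that the fixed-rule value equals $V_\sigma$ but only lower-bounds $V_{\sigma'}$ is a point the paper leaves implicit.
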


\noindent
{\bf Proof.}
We already know that any ordering in which $S$ does not precede $T$ is sub-optimal, verifying (i). To see (ii), note that any random variable in $S$ that is accepted results in a value of 1, and that the probability of accepting {\em some} random variable in $S$ is $1 - \Pi_{i: X_i \in S} (1-q_i)$, regardless of how these random variables are ordered. We can verify (iii) using a simple interchange argument as well. 
Suppose $X_i, X_j \in T$. As before, we let $\sigma_{ij}$ be an ordering in which $X_i$ appears immediately before $X_j$, with
$L$ being the sequence of random variables that precede $X_i$ and $R$ being the sequence of random variables that succeed $X_j$. $\sigma_{ji}$ is the ordering with $X_i$ and $X_j$ interchanged.
Let $f(L)$  be the probability that none of the random variables in $L$  is accepted and let $E(L)$ be the expected reward given that a random variable in $L$ is accepted (similarly define for $R$). Define $V_{\sigma_{ij}}$ and $V_{\sigma_{ji}}$ to be the expected rewards under orderings $\sigma_{ij}$ and $\sigma_{ji}$ respectively, and under these fixed thresholds. 
Then, it is easy to verify that
$$V_{\sigma_{ij}} \; = \; E(L)(1-f(L)) + f(L) [ p_i m_i + q_i + (1-q_i-p_i) (p_j m_j + q_j) + (1-q_i-p_i)(1-q_j-p_j) E(R)(1-f(R))]$$
and
$$V_{\sigma_{ji}} \; = \; E(L)(1-f(L)) + f(L) [ p_j m_j + q_j + (1-q_j-p_j) (p_i m_i + q_i) + (1-q_j-p_j) (1-q_i-p_i) E(R)(1-f(R))].$$
Simplifying, we have:
\begin{eqnarray*}
V_{\sigma_{ij}} - V_{\sigma_{ji}} &  = & f(L) [ (p_j + q_j) (p_i m_i + q_i) - (p_i+q_i) (p_j m_j + q_j) ] \\
& = & \frac{f(L)}{(p_i+q_i)(p_j+q_j)} (E_i - E_j).
\end{eqnarray*}
Thus, it is optimal for $X_i$ to appear before $X_j$ in $T$ if $E_i > E_j$.
To see that any ordering satisfying properties $(i)-(iii)$ must be optimal note that the value of any ordering satisfying all of these properties is identical, and so should be optimal (because $(S,T)$ is assumed to be an optimal partition).
\qed
\section{Missing proofs for section \ref{sec:prophet}}
\label{app:prophet}
We prove the following statement to complete the proof of Theorem \ref{th:prophet} in Section \ref{sec:prophet}. 
\begin{lemma}
$$ \max(T1, T2, T3) \ge 0.8 \text{MAX}$$
where $\text{MAX}$ is as defined in \eqref{eq:max} and $T1, T2, T3$ are as defined in \eqref{eq:T123}.
\end{lemma}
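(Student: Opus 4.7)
The plan is to recast the target inequality $\max(T_1, T_2, T_3) \ge \tfrac{4}{5}\text{MAX}$ as an upper bound on the three ``losses'' $L_i := \text{MAX} - T_i$; equivalently, I want to show $\min(L_1, L_2, L_3) \le \tfrac{1}{5}\text{MAX}$. Direct expansion of \eqref{eq:max} and \eqref{eq:T123} yields the clean identities
$$L_1 = p^* p_w(b^*-b_w),\qquad L_2 = (1-p^*)(1-p_w)\,a^*,\qquad L_3 = (1-p^*)p_w(b_w-a^*),$$
each nonnegative by $a^* \le b_w \le b^*$.

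Next I would reparametrize by the gap variables $x := b^*-b_w$, $y := b_w - a^*$, $z := a^* \ge 0$, under which a short computation gives $\text{MAX} = p^* x + [p^* + (1-p^*)p_w]\, y + z$ (the coefficient of $z$ simplifies to $1$). Rescaling via $X := p^* x$, $Y := [p^* + (1-p^*)p_w]\, y$, $Z := z$ turns this into the simplex identity $X+Y+Z = \text{MAX}$ with $X, Y, Z \ge 0$, and the losses become
$$L_1 = p_w\, X, \qquad L_2 = (1-p^*)(1-p_w)\, Z, \qquad L_3 = \frac{(1-p^*)p_w}{p^*+(1-p^*)p_w}\, Y.$$
The problem is now a standard max-min: bound $\min(c_1 X, c_2 Z, c_3 Y)$ above on $\{X+Y+Z = \text{MAX}:\, X,Y,Z \ge 0\}$. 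Degenerate cases $p^* \in \{0,1\}$ or $p_w \in \{0,1\}$ force at least one $L_i \equiv 0$ and can be dismissed immediately.

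In the nondegenerate case the max-min equals $\text{MAX}/\sum_i c_i^{-1}$, attained by equalizing the three losses. The only real computation --- and the step I expect to be the main (minor) obstacle --- is simplifying
$$\sum_i \frac{1}{c_i} \;=\; 1 + \frac{1}{(1-p^*)\, p_w(1-p_w)},$$
which follows by combining $1/p_w$ with $p^*/[(1-p^*)p_w]$ to get $1/[(1-p^*)p_w]$, and then using $1/p_w + 1/(1-p_w) = 1/[p_w(1-p_w)]$. Plugging this in yields the max-min value
$$\frac{(1-p^*)\, p_w(1-p_w)}{1+(1-p^*)\, p_w(1-p_w)} \cdot \text{MAX},$$
and the crude bound $(1-p^*)\, p_w(1-p_w) \le 1 \cdot \tfrac14 = \tfrac14$ makes this at most $\text{MAX}/5$. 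So $\min_i L_i \le \tfrac{1}{5}\text{MAX}$, as required. The bound is saturated in the limit $p^* \to 0$, $p_w = 1/2$, matching exactly the tightness example in Section~\ref{sec:prophet}.
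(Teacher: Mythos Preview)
Your proof is correct and takes a genuinely different route from the paper. The paper decomposes $\text{MAX}=f+g+h$ with $f=L_1$, $h=L_2$, and then carries out a three-way case split on which of $T_1,T_2,T_3$ is largest; in each case it uses the case assumption (e.g.\ $T_1\ge T_2$, i.e.\ $f\le h$) together with ad hoc algebra and a one-variable calculus minimization to establish bounds such as $f\le \tfrac13 g$ or $h\le \tfrac13 g$, from which $T_i/\text{MAX}\ge 4/5$ follows. Your argument instead computes the three losses $L_i=\text{MAX}-T_i$ directly, observes that after the change of variables $X=p^*x$, $Y=[p^*+(1-p^*)p_w]y$, $Z=z$ each $L_i$ is a single coefficient times a single coordinate on the simplex $X+Y+Z=\text{MAX}$, and bounds $\min_i L_i$ by the standard equalizer value $\text{MAX}/\sum_i c_i^{-1}$. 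The payoff is that you avoid all case analysis, the factor $1/5$ emerges transparently from the single inequality $(1-p^*)p_w(1-p_w)\le 1/4$, and the tightness regime $p^*\to 0$, $p_w=1/2$ is read off immediately (matching the paper's example). One small caveat: your assertion $z=a^*\ge 0$ is not guaranteed by the setup, but this is harmless since $a^*<0$ forces $L_2\le 0<\text{MAX}/5$ and the inequality is immediate; you may want to note this alongside the other degenerate cases.
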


Using some algebraic manipulations, we can equivalently express MAX defined in \eqref{eq:max} as:
\begin{eqnarray*}
MAX& :=& p^*b^* + (1-p^*)p_w b_w + (1-p^*)(1-p_w)a^*\\
&=& p^*b^*+p_w b_w -p^*p_w b_w+p^*p_wb^*-p^*p_wb^* + (1-p^*)(1-p_w)a^*\\
&=& p^*p_w(b^*-b_w) + (1-p_w)p^*b^*+p_w b_w + (1-p^*)(1-p_w)a^*\\
&=& f(p^*, p_w, b^*, b_w) + g(p^*, p_w, b^*, b_w) + h(p^*, p_w, a^*)
\end{eqnarray*}

where 

$$f(p^*, p_w, b^*, b_w):= p^*p_w(b^*-b_w)$$
$$g(p^*, p_w, b^*, b_w) := (1-p_w)p^*b^*+p_w b_w$$
$$h(p^*, p_w, a^*) := (1-p^*)(1-p_w)a^*$$

Now, notice that  is $T1=g + h$ and $T2=f + g$. The idea of the proof is to bound the relative fraction of $f$ or $h$ to the offline expectation. We assume w.l.o.g. that $b_w=1$. We can do this by multiplying every random variable by some appropriate constant $\alpha$, such that $b_w = 1$. This scales the prophet's reward by $\alpha$ since $E[\max\{\alpha X, \alpha Y\}] = \alpha E[\max\{X, Y\}]$. Furthermore, $V(\alpha X, \alpha Y) = E[\max\{\alpha X, V(\alpha Y)\}]=\alpha E[\max\{X, V(Y)] = \alpha V(X, Y)$. Thus, the optimal reward also scales by $\alpha$ and so the competitive ratio remains the same.

The following three claims together prove the lemma statement.
\begin{claim}\label{Policy_1_best}
If $T1\ge \max(T2, T3)$, then $T1\ge 0.8 MAX$
\end{claim}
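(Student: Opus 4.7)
The plan is to prove the equivalent inequality $g + h \ge 4f$, where the decomposition $f = p^*p_w(b^* - 1)$, $g = (1-p_w)p^*b^* + p_w b_w$, $h = (1-p^*)(1-p_w)a^*$ has already been set up (with $b_w = 1$ by the rescaling), so that $T1 = g + h$, $T2 = f + g$, and $\text{MAX} = f + g + h$. First I would translate the two hypotheses into algebraic constraints on the parameters: $T1 \ge T2$ is literally $h \ge f$, while $T1 \ge T3$ (under which I may assume $p_w > 0$, since otherwise $f = 0$ and the conclusion is immediate) rearranges to $\mu^* := p^*b^* + (1-p^*)a^* \le b_w = 1$, equivalently $p^*(b^* - 1) \le (1-p^*)(1-a^*)$.

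Next I would split on $p_w$. The easy case is $p_w \le 1/4$: a direct expansion gives $g - 3f = p^*b^*(1 - 4p_w) + p_w(1 + 3p^*) \ge 0$, so $g \ge 3f$, and together with $h \ge f$ this yields $g + h \ge 4f$ outright. For the main case $p_w > 1/4$, I would write $s := p^*(b^* - 1) \ge 0$ and rearrange to $g + h - 4f = (1-p_w)p^* + p_w + (1-p^*)(1-p_w)a^* - s(5p_w - 1)$. Since $5p_w - 1 > 0$ here, what remains is a tight upper bound on $s$.

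At this point I would split once more, on $p_w$ versus $a^*$. If $p_w \le a^*$, I use $s \le (1-p^*)(1-a^*)$ from the second hypothesis; substituting and simplifying reduces the required inequality to $4(1-p^*)p_w(1-a^*) \le 1$, which holds because $p_w(1-a^*) \le a^*(1-a^*) \le 1/4$. If $p_w > a^*$, I use the sharper bound $s \le (1-p^*)(1-p_w)a^*/p_w$ obtained from $T1 \ge T2$ by dividing by $p_w$; substituting and using $a^* < p_w$, the inequality reduces to $4(1-p^*)p_w(1-p_w) \le 1$, which follows from $4p_w(1-p_w) \le 1$, i.e., $(2p_w - 1)^2 \ge 0$.

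The main obstacle is choosing the correct case split and the correct hypothesis on each side. The two constraints $h \ge f$ and $p^*(b^* - 1) \le (1-p^*)(1-a^*)$ are tight in different parameter regimes, separated precisely by the threshold $p_w = a^*$, and any single convex combination of them is too lossy near the extremal configuration $p^* \to 0$, $a^* = p_w = 1/2$, $p^*b^* \to 1/2$, at which $T1/\text{MAX} \to 4/5$. Identifying this threshold is the key algebraic insight that makes the factor of $4/5$ come out exact.
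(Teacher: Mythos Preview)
Your proof is correct but takes a different route from the paper's. Both arguments reduce to showing $g + h \ge 4f$ (equivalently $T1/\text{MAX} \ge 4/5$), but they get there differently.

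The paper combines both hypotheses into a single inequality: from $h \ge f$ and $\mu^* \le 1$ it chains $f = p^*p_w(b^*-1) \le (1-p^*)(1-p_w)a^* \le (1-p_w)(1-p^*b^*)$, which rearranges to the one-parameter bound $p_w \le (1-p^*b^*)/(1-p^*)$. It then proves the single global estimate $g \ge 3f$ by writing $g/f$ as $\tfrac{t}{1-t} + \tfrac{1}{t-p^*}$ with $t = p^*b^*$ and minimizing over $t$ and then $p^*$, finding the infimum $3$. With $h \ge f$ this gives $g+h \ge 4f$.

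You instead case-split. For $p_w \le 1/4$ you get $g \ge 3f$ directly from the identity $g - 3f = p^*b^*(1-4p_w) + p_w(1+3p^*)$, using no hypothesis except $h \ge f$. For $p_w > 1/4$ you set $s = p^*(b^*-1)$ and split again on $p_w$ versus $a^*$, using exactly one hypothesis in each sub-case ($s \le (1-p^*)(1-a^*)$ when $p_w \le a^*$, and $s \le (1-p^*)(1-p_w)a^*/p_w$ when $p_w > a^*$); both sub-cases then collapse to an AM-GM bound of the form $4xy \le 1$ with $x + y \le 1$.

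What each approach buys: the paper's argument is unified (one case, one sharp inequality $g \ge 3f$) but leans on a two-step calculus minimization whose outcome is not obvious in advance. Your argument is entirely elementary---no optimization, only direct algebra and $xy \le (x+y)^2/4$---at the price of three cases. Your observation that the threshold $p_w = a^*$ separates which hypothesis is binding is a nice structural point not visible in the paper's proof, and it explains transparently why the extremal configuration has $a^* = p_w = 1/2$.
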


\begin{proof}
	
	We can assume that $\mu^* = p^*b^*+(1-p^*)a^* \leq 1$ otherwise either $T2$ or $T3$ would be greater than $T_1$. Now since $T1\geq T2$, this implies that $f\leq h$, and combined with the previous inequality, we get 
	
	\begin{eqnarray*}
	p^*p_w(b^*-1) &\leq & (1-p^*)(1-p_w)a^*\\
	&\leq& (1-p_w)(1-p^*b^*)
	\end{eqnarray*}
	
	Rearranging terms, we get $p_w\leq \frac{1-p^*b^*}{1-p^*}$. Next, we will prove the following:
	
	$$f(p^*, p_w, b^*) <  \frac{1}{3}g(p^*, p_w, b^*)$$
	
	The derivation is as follows:
	
	\begin{eqnarray*}
	\frac{g(p^*, p_w, b^*)}{f(p^*, p_w, b^*)}&=& \frac{(1-p_w)p^*b^*+p_w}{p^*p_w(b^*-1)}\\
	&=&\frac{(\frac{1}{p_w}-1)p^*b^*+1}{p^*(b^*-1)}\\
	&\geq & \frac{(\frac{p^*(b^*-1)}{1-p^*b^*})p^*b^*+1}{p^*(b^*-1)}\\
	&=&\frac{(p^*)^2(b^*-1)b^* + 1 -p^*b^*}{p^*(b^*-1)(1-p^*b^*)}\\
	&=&\frac{p^*b^*}{1-p^*b^*} + \frac{1}{p^*(b^*-1)}\\
	&=&\frac{t}{1-t} + \frac{1}{t-p^*}
	\end{eqnarray*}
	
	Where we let $t=p^*b^*$. For fixed $p^*$, this term is minimized when $t = \frac{1}{2}(1+p^*)$. Plugging this in, the minimum is $\frac{3+p^*}{1-p^*}$. This is minimized when $p^*\to 0$. Thus, $\frac{f(p^*,p_w, b^*)}{g(p^*, p_w, b^*)} < \frac{1}{3}$.\\
	
	When $T1$ is maximum, it gives a competitive ratio of $\frac{g+h}{f+g+h}\geq \frac{g+f}{g+2f}\geq \frac{(4/3)g}{(5/3)g}=4/5$
\end{proof}

\begin{claim}\label{Policy_2_best}
If $T2\ge \max(T1, T3)$, then $T2\ge 0.8 MAX$
\end{claim}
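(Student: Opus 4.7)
My plan is to parallel the proof of Claim~\ref{Policy_1_best}. Retaining the normalization $b_w=1$ from the proof of Theorem~\ref{th:prophet}, I first reformulate: since $\text{MAX}=f+g+h$ and $T2=f+g$ give $\text{MAX}-T2=h$, the goal $T2\ge 0.8\,\text{MAX}$ is equivalent to $T2\ge 4h$. The two hypotheses translate cleanly: $T2\ge T3$ reduces to $p_w\ge a^*$, and $T2\ge T1$ reduces to $f\ge h$, i.e., $p^*p_w(b^*-1)\ge (1-p^*)(1-p_w)a^*$. The boundary cases where $p^*\in\{0,1\}$ or $p_w\in\{0,1\}$ are trivial since then $h=0$, so I may assume $p^*,p_w\in(0,1)$. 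Next, observe that for fixed $p^*,p_w,b^*$ the ratio $T2/\text{MAX}$ is monotone decreasing in $a^*$ (as $T2$ does not depend on $a^*$ while $h$ is linear increasing in $a^*$), so it suffices to prove $T2\ge 4h$ at the largest admissible value $a^*=\min\{\,p_w,\;p^*p_w(b^*-1)/[(1-p^*)(1-p_w)]\,\}$. This splits into two cases according to which constraint binds.

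In Case A ($a^*=p_w$, characterized by $p^*b^*\ge 1-(1-p^*)p_w$), we have $h=(1-p^*)(1-p_w)p_w$, and the target becomes $p^*b^*\ge(1-p^*)p_w(3-4p_w)$; using the case hypothesis as a lower bound on $p^*b^*$, this reduces to $4(1-p^*)p_w(1-p_w)\le 1$, which follows from $4p_w(1-p_w)\le 1$. In Case B ($a^*=p^*p_w(b^*-1)/[(1-p^*)(1-p_w)]$, characterized by $p^*b^*< 1-(1-p^*)p_w$), we have $h=f$, so the target becomes $g\ge 3f$, i.e., $p^*b^*(1-4p_w)+p_w(1+3p^*)\ge 0$. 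For $p_w\le 1/4$ both terms are nonnegative; for $p_w>1/4$, using the case hypothesis as an upper bound on $p^*b^*$ and simplifying reduces the inequality once more to $4p_w(1-p^*)(1-p_w)\le 1$, which again follows from $4p_w(1-p_w)\le 1$.

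The main obstacle will be recognizing that neither hypothesis alone is sufficient: replacing $4h$ by $4f$ throughout (using only $f\ge h$) fails whenever $p^*b^*\approx 1$ and $p_w\approx 1/2$, while using only $p_w\ge a^*$ fails when $p^*b^*$ is only slightly larger than $1-(1-p^*)p_w$. The clean move is to view $a^*$ as the variable maximized against both constraints simultaneously and split on which is tight; satisfyingly, both branches collapse to the elementary bound $4p_w(1-p_w)\le 1$, whose tightness at $p_w=1/2$ matches the limiting instance ($p^*\to 0$, $p^*b^*\to 1$, $p_w,a^*\to 1/2$) that makes the $4/5$ ratio sharp.
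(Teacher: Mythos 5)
Your proof is correct, and it reaches the bound by a genuinely different route from the paper's. Both arguments start from the same decomposition $\mathrm{MAX}=f+g+h$ and the same translations of the hypotheses ($T2\ge T3$ gives $a^*\le p_w$; $T2\ge T1$ gives $f\ge h$), but the core step differs. The paper proves the single inequality $h\le\frac{1}{3}g$: it chains the two constraints into the lower bound $b^*\ge\frac{(1-p^*)(1-p_w)}{p^*}+1$, substitutes into $g/h$, and minimizes over $p^*$ (optimum at $p^*=1-p_w$) and then over $p_w$ by calculus, finally combining $g\ge 3h$ with $f\ge h$ to obtain $\frac{f+g}{f+g+h}\ge\frac{4}{5}$. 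You instead aim directly at the equivalent target $f+g\ge 4h$, observe that $T2$ is independent of $a^*$ while $h$ is increasing in it, push $a^*$ up to the smaller of its two admissible upper bounds, and split on which one binds; each branch collapses to the elementary inequality $4p_w(1-p_w)\le 1$. Your version avoids calculus entirely, makes the extremal configuration transparent, and correctly emphasizes that both hypotheses are needed. The only blemish is in your closing commentary: at the tight instance one has $p^*b^*\to\frac{1}{2}$ (so that the Case A condition $p^*b^*\ge 1-(1-p^*)p_w$ becomes an equality as $p^*\to 0$ and $p_w\to\frac{1}{2}$), not $p^*b^*\to 1$; this matches the paper's tightness example and does not affect the validity of the proof.
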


\begin{proof}
	
	We will prove that 
	
	$$h(p^*, p_w, a^*)\leq \frac{1}{3}g(p^*, p_w, b^*)$$
	
	Notice that since $T2\geq T3$, we must have $a^*\leq p_w$ and $b^*\geq 1$. Furthermore, $h\leq f$ so combining these inequalities, we have the relation $(1-p^*)(1-p_w)a^*\leq (1-p^*)(1-p_w)p_w\leq p_wp^*(b^*-1)\Rightarrow b^*\geq \frac{(1-p^*)(1-p_w)}{p^*}+1$. Using these two inequalities, 
	
	\begin{eqnarray*}
	\frac{g(p^*, p_w, b^*)}{h(p^*, p_w, a^*)} &=& \frac{(1-p_w)p^*b^*+p_w}{(1-p^*)(1-p_w)a^*}\\
	&\geq& \frac{(1-p_w)^2(1-p^*)+(1-p_w)p^*+p_w}{(1-p^*)(1-p_w)p_w}\\
	&=& \frac{1-(1-p^*)(1-p_w)p^*}{(1-p^*)(1-p_w)p_w}\\
	&=&\frac{1}{(1-p^*)(1-p_w)p_w} - \frac{p^*}{p_w}
	\end{eqnarray*}
	
	Differentiating this with respect to $p^*$, we see that the minimum is when $p^* = 1-p_w$. Putting this back in, we have to minimize the term 
	
	$$\frac{1}{(1-p_w)^2p_w} -\frac{1-p_w}{p_w}$$
	
	The minimum is $3$, attained at $p_w = 0$, proving that $h(p^*, p_w, a)\leq \frac{1}{3}g(p^*, p_w, b^*)$.\\
	
	When $T2$ is maximum, it gives a competitive ratio of $\frac{f+g}{f+g+h} \geq \frac{g+h}{g+2h}\geq\frac{(4/3)g}{(5/3)g}=4/5$.
\end{proof}

\begin{claim}\label{Policy_3_best}
	If $T3 \ge \max(T1, T2) $, then $T3 \ge 0.8 MAX$. 
\end{claim}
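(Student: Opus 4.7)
My plan is to follow the same pattern as the previous two claims: use the normalization $b_w = 1$ (justified as in the paragraph preceding Claim \ref{Policy_1_best}), rewrite $\text{MAX} - T3$ in a transparent form, and extract algebraic consequences of the two hypotheses $T3 \ge T1$ and $T3 \ge T2$.

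First I would compute, directly from the definitions,
\begin{equation*}
\text{MAX} - T3 \;=\; (1-p^*)p_w b_w + (1-p^*)(1-p_w) a^* - (1-p^*) a^* \;=\; (1-p^*)\, p_w\, (b_w - a^*),
\end{equation*}
which under the normalization $b_w = 1$ becomes $(1-p^*) p_w (1-a^*)$. Hence the desired inequality $T3 \ge 0.8\,\text{MAX}$ is equivalent to
\begin{equation*}
\mu^* \;\ge\; 4(1-p^*)\, p_w\, (1-a^*).
\end{equation*}

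Next I would read off two constraints from the hypotheses. The inequality $T3 \ge T1$ says $\mu^* \ge p_w b_w + (1-p_w)\mu^*$, which rearranges to $p_w \mu^* \ge p_w b_w$, so $\mu^* \ge 1$ whenever $p_w > 0$ (and when $p_w = 0$ we already have $\text{MAX} = T3$). The inequality $T3 \ge T2$ says $p^*b^* + (1-p^*)a^* \ge p^*b^* + (1-p^*)p_w b_w$, which rearranges to $a^* \ge p_w$ whenever $p^* < 1$ (and when $p^* = 1$ we again have $\text{MAX} = T3$).

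Finally I would combine these two bounds with the elementary fact $x(1-x) \le 1/4$ for $x \in [0,1]$. Since $1 - p^* \le 1$ and $p_w \le a^*$,
\begin{equation*}
4(1-p^*)\, p_w\, (1-a^*) \;\le\; 4\, p_w\, (1-a^*) \;\le\; 4\, a^*\,(1-a^*) \;\le\; 1 \;\le\; \mu^*,
\end{equation*}
which is precisely what we needed. The argument is short and mostly mechanical once the right quantity $\text{MAX} - T3$ is isolated; the only ``trick'' is recognizing that the two hypotheses together pin down $\mu^* \ge 1$ and $a^* \ge p_w$ in exactly the form needed for the $x(1-x)\le 1/4$ bound to close the gap. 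I do not anticipate any serious obstacle; the main thing to be careful about is handling the degenerate cases $p_w = 0$ and $p^* \in \{0,1\}$ separately, each of which reduces directly to $\text{MAX} = T3$.
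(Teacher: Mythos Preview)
Your argument is correct and is genuinely different from the paper's. The paper proves this claim by a symmetry reduction: it observes that $T3\ge T2$ gives $a^*\ge p_w$ and $T3\ge T1$ gives $p^*b^*+(1-p^*)a^*\ge p_w+(1-p_w)p^*b^*+(1-p_w)(1-p^*)a^*$, then formally swaps the roles of $a^*$ and $p_w$ in both the target quantity and the constraints, and checks that the swapped system is exactly the system already analyzed in Claim~\ref{Policy_2_best}. Your route is direct and self-contained: you compute $\text{MAX}-T3=(1-p^*)p_w(1-a^*)$, extract $\mu^*\ge 1$ and $a^*\ge p_w$ from the two hypotheses, and close with $4(1-p^*)p_w(1-a^*)\le 4a^*(1-a^*)\le 1\le\mu^*$. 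Your approach buys independence from the previous claim and a shorter, more transparent calculation; the paper's approach buys a unifying observation (the $a^*\leftrightarrow p_w$ duality) but at the cost of relying on the full minimization carried out in Claim~\ref{Policy_2_best}.

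One small correction to your closing remark: the degenerate case $p^*=0$ does \emph{not} give $\text{MAX}=T3$ directly (indeed $\text{MAX}-T3=p_w(1-a^*)$ there). But you do not need to treat $p^*=0$ separately at all: your main chain of inequalities only requires $p^*<1$ (to divide by $1-p^*$ and obtain $a^*\ge p_w$) and $p_w>0$ (to obtain $\mu^*\ge 1$), both of which hold when $p^*=0$ and $p_w>0$. So the only genuine boundary cases are $p_w=0$ and $p^*=1$, each of which does give $\text{MAX}-T3=0$ as you say.
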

\begin{proof}
	
	$T3$ has an expected reward of $p^*b^*+(1-p^*)a^*$. Since $T3\geq T2$, we have $a^*\geq p_w$, and since $T3\geq T1$, we have $p^*b^* + (1-p^*)a^* \geq p_w + (1-p_w)p^*b^* + (1-p_w)(1-p^*)a^*$. Now suppose we swap the variables $a^*$ and $p_w$ in the expected reward and in the constraints. Then the ``expected reward'' is 
	
	$$p^*b^* + (1-p^*)p_w$$
	
	and we have the constraints
	$$p_w \leq a^*$$
	$$p^*b^* + (1-p^*)p_w\geq a^* + (1-a^*)p^*b^*+(1-a^*)(1-p^*)p_w$$
	
	The second constraint can be rearranged to show that $b^*\geq \frac{(1-p^*)(1-p_w)}{p^*} + 1$. Thus, this case can be reduced to the case of Claim \ref{Policy_2_best}, so the rest of the proof is the same.
\end{proof}


\section{Analysis of Algorithm \ref{alg:fptas}: Proof of Theorem \ref{th:FPTAS-same}}
We show that Algorithm \ref{alg:fptas} is an FPTAS for the problem in \eqref{eq:paritionProblem} of finding an optimal ordered partition.  Algorithm \ref{alg:fptas} discretizes the interval $[0,1]$ using a multiplicative grid with parameter $1-\frac{\epsilon}{2n}$,  so that it needs to search over only $poly(n,1/\epsilon)$ partitions.  Lemma \ref{Partition-Subset} allows us to restrict to these grid points.
Then, in Lemma \ref{ALG-OPT approx} and Lemma \ref{lem:runtime}, respectively, we show that Algorithm \ref{alg:fptas} achieves the required approximation and run-time, to complete the proof of Theorem \ref{th:FPTAS-same}.
\begin{lemma}\label{Partition-Subset}
 Let $\mathcal{L}' \subseteq \mathcal{L}$ be the set of all ordered partitions $\{(S',T')\}$ in $\mathcal{L}$ with additional restriction on the last variable $X$ in $(S,T)$ that $V(X) = E[\max(X,0)] \ge \frac{\epsilon \OPT}{2n}$ for $0 \le \epsilon \le 1$.  Let $\OPT'=\max_{(S',T')\in \mathcal{L}'} V(S',T')$. Then, 
 $$\OPT'\ge \left(1-\frac{\epsilon}{2}\right) \OPT.$$
\end{lemma}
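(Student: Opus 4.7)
The plan is to take any optimal ordered partition $(S^*, T^*)$ and modify it into a new ordered partition $(\hat S, \hat T) \in \mathcal{L}'$ that loses at most a $(1-\epsilon/2)$ factor in value. The main technical tool will be sub-additivity of $V(\cdot)$ for non-negative random variables: since every $X_i \ge 0$ (the supports lie in $[0,1]$) and $\max(a,b) \le a+b$ pointwise for $a,b \ge 0$, induction on $n$ gives $V(X_1,\ldots,X_n) \le \sum_{i=1}^n V(X_i)$. Call a variable $X$ \emph{big} if $V(X) \ge \beta := \frac{\epsilon \OPT}{2n}$ and \emph{small} otherwise. If every variable were small, the sub-additive bound would yield $\OPT \le n\beta = \frac{\epsilon}{2}\OPT$, which is impossible for $\epsilon \in (0,1]$ unless $\OPT = 0$; hence at least one big variable exists (the case $\OPT = 0$ is trivial).

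\textbf{Construction of $(\hat S, \hat T)$.} Writing the concatenation of $(S^*,T^*)$ as $(Y_1,\ldots,Y_n)$, let $r$ be the largest index with $Y_r$ big, so $Y_{r+1},\ldots,Y_n$ are all small. I would move this tail to the front, producing the new concatenation $(Y_{r+1},\ldots,Y_n, Y_1,\ldots,Y_r)$, and interpret it as an ordered partition in one of two ways: if $Y_r \in T^*$, take $\hat T$ to be the prefix of $T^*$ ending at $Y_r$ (still in weakly descending $E_i$ order, as a prefix of $T^*$); if $Y_r \in S^*$, take $\hat T = \{Y_r\}$ (trivially ordered). The moved variables, together with the remaining elements of $S^*$, go into $\hat S$ with the ordering described above. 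In both cases the last element of $\hat T$ is $Y_r$, which is big, so $(\hat S, \hat T) \in \mathcal{L}'$.

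\textbf{Value bound.} Using the DP associativity $V(A,B) = V(A,V(B))$ and the elementary fact that $V(A, c) \ge c$ for any terminal constant $c$ (one can always reject the entire prefix $A$), the constructed concatenation satisfies $V(\hat S,\hat T) = V(Y_{r+1},\ldots,Y_n,\, V(Y_1,\ldots,Y_r)) \ge V(Y_1,\ldots,Y_r)$. Applying sub-additivity to the optimal concatenation itself gives $\OPT = V(Y_1,\ldots,Y_n) \le V(Y_1,\ldots,Y_r) + \sum_{i>r} V(Y_i) \le V(Y_1,\ldots,Y_r) + (n-r)\beta \le V(Y_1,\ldots,Y_r) + \frac{\epsilon}{2}\OPT$. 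Chaining these inequalities yields $V(\hat S,\hat T) \ge (1-\epsilon/2)\OPT$, and therefore $\OPT' \ge (1-\epsilon/2)\OPT$.

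\textbf{Main obstacle.} The subtle point will be the case analysis when the top-index big variable $Y_r$ lies in $S^*$ rather than $T^*$: naively truncating $T^*$ at $Y_r$ would leave $\hat T$ empty, so one has to collapse $\hat T$ to the singleton $\{Y_r\}$ and shove everything else, including all of $T^*$, into $\hat S$. Fortunately the unified value bound above is insensitive to which side $Y_r$ originally came from, because it only invokes sub-additivity and the ``reject-the-prefix'' DP inequality. A secondary delicate point is sub-additivity itself, which relies essentially on the non-negativity of the supports (encoded in the hypothesis $V(X) = E[\max(X,0)]$); without this, the chain of inequalities in Step~3 would need a different justification.
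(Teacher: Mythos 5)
Your proof is correct and follows essentially the same route as the paper's: locate the last ``big'' variable (with $V(Y_r)\ge \epsilon\,\OPT/(2n)$) in an optimal ordered partition, discard or relocate the small tail, and charge the loss via sub-additivity of $V$ (the paper does this by repeatedly applying its Lemma on $E[\max(X,c+v)]\le E[\max(X,c)]+v$). Your write-up is in fact slightly more careful than the paper's, since you explicitly rebuild a full ordered partition in $\mathcal{L}'$ (handling the case $Y_r\in S^*$ separately), whereas the paper simply writes $\OPT'\ge V(X_1,\ldots,X_k)$ for the truncated prefix.
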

\begin{proof}
	Let $\delta := \frac{\epsilon \OPT}{2n}$. W.l.o.g., assume  that an optimal ordered partition $(S,T)$ orders the variables as $(X_1,\ldots, X_n)$, so that $\OPT=V(X_1, \ldots, X_n)$. Suppose that there exist $1\le k\le n$ such that $V(X_k)> \delta$. Then,
	\begin{eqnarray*}
	\OPT'\ge  V(X_1, \ldots, X_k) & \ge & V(X_1,\ldots, X_n)-\sum_{i=k+1}^n V(X_i) \\
	& \ge & \OPT - (n-1)\delta\\
	& \ge & (1-\frac{\epsilon}{2}) \OPT,
	\end{eqnarray*}
	where the second inequality followed from repeatedly applying Lemma $\ref{Adding to Tail}$. 
	Now if such a $k$ does not exist, then by Lemma \ref{Adding to Tail}, 
	$$ \OPT = V(X_1,\ldots, X_n) \le \sum_{i=1}^n V(X_i) \le n\delta \le \frac{\epsilon}{2} \OPT$$
	so that trivially, $ (1-\frac{\epsilon}{2})\OPT\le 0 \le\OPT'$.

\end{proof}

\newcommand{\threshold}{H}
\begin{lemma}\label{ALG-OPT approx}
Let  $\mathcal{L}^n$ be the set of ordered partitions returned by Algorithm \ref{alg:fptas} when run with  parameters $\epsilon, \MAX$ satisfying $\epsilon \in (0,1)$ and 
$\MAX\le \OPT$. And, let
$\ALG:=\max_{(S,T)\in \mathcal{L}^n} V(S,T), $   Then, 
	$$\ALG \geq (1-\frac{\epsilon}{2n})^n \OPT', $$ where $\OPT'$ is as defined in Lemma \ref{Partition-Subset}.
\end{lemma}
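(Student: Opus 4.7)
The plan is an inductive argument that maintains a ``witness'' partition in $\mathcal{L}^k$ approximating the restriction of a fixed optimal ordered partition $(S^*,T^*) \in \mathcal{L}'$ achieving $\OPT'$. Write $(S^*_k,T^*_k)$ for the restriction of $(S^*,T^*)$ to the first $k$ variables, processed in increasing $E_i$ order (so adding a variable to $T$ corresponds to prepending in the decreasing-$E_i$ ordering of $T$). The key structural fact inherited from Claim~\ref{claim:ST} is that for any valid ordered partition $(S,T)$ with $T\ne \phi$,
$$V(S,T)\;=\;1-\alpha_S+\alpha_S\,V(T),\qquad \alpha_S:=\prod_{i\in S}(1-q_i),$$
which is decreasing in $\alpha_S$ and increasing in $V(T)$. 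This matches exactly the two quantities TRIM preserves (bucketing by $V(T)$ and retaining the representative of smallest $\alpha_S$, i.e.\ largest contribution from $S$, per bucket), and this is the reason trimming is harmless at the structural level.

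I would carry out induction on $k$ with the hypothesis: at step $k$ there exists $(\bar S_k,\bar T_k)\in\mathcal{L}^k$ with (a)~$\alpha_{\bar S_k}\le\alpha_{S^*_k}$ and (b)~$V(\bar T_k)\ge \rho^k\,V(T^*_k)$, where $\rho:=1-\tfrac{\epsilon}{2n}$. The base case $k=0$ holds with $(\phi,\phi)\in\mathcal{L}^0$. For the step, from $(\bar S_{k-1},\bar T_{k-1})$ I form the extension $(\tilde S,\tilde T)$ that mirrors the optimal's placement of $X_k$ (prepend to $S$ if $X_k\in S^*$, prepend to $T$ if $X_k\in T^*$); this is one of the two extensions the algorithm adds to $\mathcal{L}^k$ before TRIM. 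Property (a) propagates directly from $\alpha_{\tilde S}\le(1-q_k)\alpha_{\bar S_{k-1}}\le(1-q_k)\alpha_{S^*_{k-1}}=\alpha_{S^*_k}$ (with the factor $(1-q_k)$ absent in the $T$-prepend case). For (b) in the $T$-prepend case I would use the pointwise inequality $\max(x,v)\ge \rho\max(x,v/\rho)$ valid for $x\ge 0$, $\rho\in(0,1]$, which, applied in expectation, yields
$$V(\tilde T)=E[\max(X_k,V(\bar T_{k-1}))]\;\ge\;\rho^{k-1}\,E[\max(X_k,V(T^*_{k-1}))]\;=\;\rho^{k-1}\,V(T^*_k).$$
TRIM then places $(\tilde S,\tilde T)$ into some bucket $\mathcal{B}_j$, and the retained representative $(\bar S_k,\bar T_k)$ inherits $\alpha_{\bar S_k}\le\alpha_{\tilde S}$ and satisfies $V(\bar T_k)>\rho^j\cdot\text{max}\ge \rho\cdot V(\tilde T)\ge \rho^k\,V(T^*_k)$ by the multiplicative bucket width, closing the induction.

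The main obstacle is the edge case in which $V(\tilde T)$ falls into the discarded range $(0,\rho^J\cdot\text{max}]$, so $(\tilde S,\tilde T)$ is dropped entirely by TRIM. I would resolve this using Lemma~\ref{Partition-Subset}, which forces the last variable of $T^*$ to satisfy $V(X)\ge \tfrac{\epsilon\,\OPT}{2n}$. Since variables are processed in increasing $E_i$, this last variable (the smallest-$E_i$ element of $T^*$) is the \emph{first} $T^*$-variable placed in $T^*_k$; once it has been processed we have $V(T^*_k)\ge \tfrac{\epsilon\,\OPT}{2n}$, and combined with $\MAX\le \OPT$ and the defining inequality $\rho^J\cdot\text{max}\ge \tfrac{\epsilon}{2n}\,\MAX$ for $J$, this places $V(\tilde T)\ge \rho^{k-1}V(T^*_k)$ safely above the discard threshold for the remainder of the induction. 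For earlier steps where no $T^*$-variable has yet been processed, $T^*_k=\phi$ and the invariant is maintained trivially through bucket $\mathcal{B}_0$, whose representative has the smallest $\alpha_S$ among all $T=\phi$ partitions.

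To conclude, at $k=n$ the witness $(\bar S_n,\bar T_n)$ satisfies $\alpha_{\bar S_n}\le\alpha_{S^*}$ and $V(\bar T_n)\ge \rho^n V(T^*)$, so
$$V(\bar S_n,\bar T_n)\;=\;1-\alpha_{\bar S_n}+\alpha_{\bar S_n}V(\bar T_n)\;\ge\;1-\alpha_{S^*}+\alpha_{S^*}\rho^n V(T^*)\;\ge\;\rho^n V(S^*,T^*)\;=\;\rho^n\OPT',$$
where the final inequality reduces to $(1-\rho^n)(1-\alpha_{S^*})\ge 0$. Hence $\ALG\ge V(\bar S_n,\bar T_n)\ge (1-\tfrac{\epsilon}{2n})^n\,\OPT'$, as required.
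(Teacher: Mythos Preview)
Your inductive strategy mirrors the paper's, but the specific invariant you carry does not survive the TRIM step. You claim that TRIM ``retains the representative of smallest $\alpha_S$'', and then assert that the surviving $(\bar S_k,\bar T_k)$ inherits $\alpha_{\bar S_k}\le \alpha_{\tilde S}$. But Algorithm~\ref{alg:trim} chooses the bucket representative by \emph{maximizing $V(S)$}, the DP value of the sequence $S$ in isolation, and for $\{0,m_i,1\}$-supported variables with nontrivial middle points this is \emph{not} monotone in $\alpha_S=\prod_{i\in S}(1-q_i)$. Concretely, take $X_1$ with $(m_1,p_1,q_1)=(0.8,0.5,0.1)$ and $X_2$ with $(m_2,p_2,q_2)=(0.9,0.1,0.2)$; then $E_1<E_2$, $V(\{X_1\})=0.5>0.29=V(\{X_2\})$, yet $\alpha_{\{X_1\}}=0.9>0.8=\alpha_{\{X_2\}}$. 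If both land in the same $V(T)$-bucket, TRIM keeps $(\{X_1\},\cdot)$ and discards $(\{X_2\},\cdot)$, and your invariant $\alpha_{\bar S_k}\le \alpha_{S^*_k}$ breaks. Relatedly, your ``key structural fact'' $V(S,T)=1-\alpha_S+\alpha_S V(T)$ is an equality only when the optimal stopping rule for the \emph{particular} sequence $(S,T)$ accepts the $S$-variables only at value $1$; in general it is merely a lower bound, and you also need the equality to hold for $(S^*,T^*)\in\mathcal{L}'$, which is not automatic once $\mathcal{L}'$ need not contain the canonical $(S^\sigma,T^\sigma)$.

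The paper avoids this by tracking $V(S'_k)\ge V(S^*_k)$ (which TRIM preserves by definition) together with $V(T'_k)\ge \rho^k V(T^*_k)$, and then needs a separate argument at $k=n$ (via Lemmas~\ref{lemma-2} and~\ref{lemma-3}) to pass from these two one-sided inequalities to $V(S'_n,T'_n)\ge \rho^n V(S^*,T^*)$. Your final algebraic step using the $\alpha_S$-formula is clean, but it only goes through after an invariant that TRIM does not actually maintain. To repair your argument you must either (i) switch the $S$-side invariant to the paper's $V(S)$-comparison and supply the endgame combination step, or (ii) argue some additional structure ensuring that, within each bucket, maximizing $V(S)$ also minimizes $\alpha_S$---something that is false in general for this class of distributions.
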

\begin{proof}
     Let $(S,T)$ be any ordered partition  in $ \mathcal{L}'$ where $\mathcal{L}' \subseteq \mathcal{L}$ is the restricted collection of ordered partitions defined in Lemma \ref{Partition-Subset} satisfying $V(X) = E[\max(X,0)] \ge \frac{\epsilon \OPT}{2n}$ for the last variable $X$ in $T$ (note that $T\ne \phi$ for all $(S,T)\in \mathcal{L}$). We show that there exist $(S', T')\in \mathcal{L}^n$ such that $V(S', T') \ge (1-\frac{\epsilon}{2n})^n V(S,T)$. 
    
    We prove this by induction. Let $(S_k, T_k)$ be an ordered partition obtained on restricting $S, T$ to the first $k$ variables $X_1, \ldots, X_k$ considered by the algorithm (here variables are ordered so that $E_1 \le \cdots \le E_k$). Let $1\le \bar{k}\le n$ be such that $X_{\bar{k}}$ is the last variable in $T$ and $V(X_{\bar k}) \ge \frac{\epsilon \OPT}{2n}$; such a $\bar{k}$ must exist since  $(S,T)\in \mathcal{L}'$ and $T\ne \phi$.
    We show that for all $k\ge \bar{k}$, at the end of the iteration $k$ of the algorithm, there exists $(S'_k, T'_k)\in \mathcal{L}^k$ such that 
    \begin{equation}
    \label{eq:induction}
    \begin{array}{rcl}
    V(S'_k)  & \ge &  V(S_k),\\ 
     V(T'_k)  & \ge & \rho^k V(T_k), \\
       V(T'_k) & \ge & \frac{\epsilon}{2n}\MAX,
     \end{array}
     \end{equation}
    where $\rho=(1-\frac{\epsilon}{2n})$.
    
    We prove the induction basis for $k=\bar{k}$.  By definition of $\bar{k}$, $S_k=\{X_1, \ldots, X_{k-1}\}$ and $T_k=\{X_k\}$. 
    Since $(\{X_1,\ldots, X_{k-1}\}, \phi)\in \mathcal{L}^{k-1}$, in the beginning of iteration $k$ (before TRIM),  the partition  $(S''_k, T''_k) = (\{X_1, \ldots, X_{k-1}\}, \{X_k\})$ is added to $\mathcal{L}^k$. Since 
    \begin{center}
        $V(T''_k) = V(X_k)\ge \frac{\epsilon}{2n}\OPT \ge \frac{\epsilon}{2n}\MAX,$
    \end{center} during TRIM this partition will fall in bucket $\mathcal{B}_j$ for some $j\ge 1$. By the trimming criteria, one partition $(S'_{k}, T'_k) $ from bucket $\mathcal{B}_j$ will remain in $\mathcal{L}^k$ satisfying     $V(T'_k) \ge \rho V(T''_k) =\rho V(T_k)\ge \rho^k V(T_k)$,
        $V(T'_k)\ge \rho^J max \ge \frac{\epsilon}{2n}\MAX$, 
        and $V(S'_k)\ge V(S''_k) \ge V(S_k)$.
    Therefore, $S'_k, T'_k$ satisfies the conditions stated in \eqref{eq:induction} for $k=\bar{k}$.
    
    Now, for the induction step, assume \eqref{eq:induction} holds for some $\bar{k} \le k<n$.  Then, in the beginning of iteration $k+1$ (before $\text{TRIM}$ is called), the algorithm will add two partitions $(\{X_{k+1}, S'_k\}, T'_k)$ and $(S'_k, \{X_{k+1}, T'_k\})$ to $\mathcal{L}^{k+1}$. 
    We claim that one of these two partitions satisfies the required induction statement for $k+1$, but with a better factor $\rho^k$ instead of the required $\rho^{k+1}$. This can be observed as follows. 
    
    Depending on whether $X_{k+1}$ appears in $T_{k+1}$ or $S_{k+1}$, we can consider two cases: either $T_{k+1}=T_k$ or $S_{k+1}=S_k$. In the first case (when $T_{k+1}=T_k$), we set $(S'_{k+1}, T'_{k+1})$ as the first partition $(\{X_{k+1}, S'_k\}, T'_k)$. By induction hypothesis $V(T'_{k+1})=V(T'_k) \ge \rho^k V(T_k) = \rho^k V(T_{k+1})$; also $V(T'_{k+1})=V(T'_k) \ge \frac{\epsilon}{2n} \MAX $; and 
    \begin{eqnarray}
    \label{eq:tmp0}
    V(S'_{k+1}) = V(X_{k+1}, S'_k) & =  &  V(X_{k+1}, V( S'_k))\nonumber\\
    & \ge  &  V(X_{k+1}, V(S_k)) \nonumber\\
    & = & V(X_{k+1}, S_k)\nonumber\\
    & = & V(S_{k+1}) 
    \end{eqnarray}
    where the second line follows from the induction hypothesiss.
    
    For the second case (when $S_{k+1}=S_k$), we use the second partition, and set $(S'_{k+1}, T'_{k+1})= (S'_k, \{X_{k+1}, T'_k\})$, so that by induction hypothesis, $V(S'_{k+1})= V(S'_k) \ge V(S_k) = V(S_{k+1})$, and
    \begin{eqnarray}
    \label{eq:tmp1}
       V(T'_{k+1}) = V(X_{k+1}, T'_k) & =  &  V(X_{k+1}, V( T'_k))\nonumber\\
    & \ge  &  V(X_{k+1}, \rho^{k} V(T_k)) \nonumber\\
    & \ge & \rho^k V(X_{k+1}, T_k)\nonumber\\
    & = & \rho^k V(T_{k+1}) 
    \end{eqnarray}
    where the second line follows from the induction hypothesis and the third line follows from Lemma $\ref{lemma-3}$. Also,  $V(T'_{k+1}) = V(X_{k+1}, T'_k) \ge V(T'_k) \ge \frac{\epsilon}{2n} \MAX$ by induction hypothesis.  
    
    

   
   However, one or both of these two partitions may be removed by the $\text{TRIM}$ procedure. We claim that 
   if any of the two partitions $(S'_{k+1}, T'_{k+1})\in \{(\{X_{k+1}, S'_k\}, T'_k),  (S'_k, \{X_{k+1}, T'_k\})\}$ is   removed by the $TRIM$ procedure, then there will remain another partition  $(S''_{k+1}, T''_{k+1})$ in $\mathcal{L}^{k+1}$ satisfying:
   
   \begin{equation}
   \label{eq:tmp2}
   \begin{array}{rcl}
     V(S''_{k+1}) & \ge & V(S'_{k+1}), \\
     V(T''_{k+1}) & \ge & \rho V(T'_{k+1}) ,\\
     V(T''_{k+1}) & \ge & \frac{\epsilon}{2n}\MAX
\end{array}
     \end{equation}
    
    To see \eqref{eq:tmp2}, note that since  $V(T'_{k+1})\ge \frac{\epsilon}{2n} \MAX$, $(S'_{k+1}, T'_{k+1})$ falls in a bucket $\mathcal{B}_j, j\ne 0$  during the $\text{TRIM}$ procedure.  
    Thus, the TRIM procedure will select one partition from this bucket, let it be $(S''_{k+1}, T''_{k+1})$. By definition of buckets, $V(T''_{k+1})\ge \frac{\epsilon}{2n} \MAX$. Also, by the criteria for selecting a partition from a bucket, we have $V(S''_{k+1})\ge  V(S'_{k+1})$, and by construction of buckets, if $j\ne 0$,  $V(T''_{k+1})\ge \rho V(T'_{k+1})$. 

    
    Together, \eqref{eq:tmp0}, \eqref{eq:tmp1}, \eqref{eq:tmp2} prove the induction statement in \eqref{eq:induction}. Applying \eqref{eq:induction} for $k=n$, we get that there exists $(S'_n, T'_n) \in \mathcal{L}^n$ satisfying
    
     \begin{eqnarray*}
  V(S'_n, T'_n) & = &  V(S'_n,  V( T'_n)))\\
     & \ge &   V(S'_n,  \rho^n V(T_n))\\
     & \ge &   V(S_n,  \rho^n V(T_n))\\
    & \ge &   \rho^n V( S_n, V(T_n))\\
    & = & \rho^n V(S,T)
    \end{eqnarray*}
   Here  the first inequality followed from  $V(T_n') \ge \rho^n V(T_n)$. For the second inequality, note that a variable in $S_n$ (and $S_n'$) is accepted  if and only if it takes value $1$. Therefore, $S_n$  can be replaced by a $\{0,1\}$ variable $Y$ with probability $\prod_{i\in S_n} q_i$ to take value $1$ (and similarly $S_n'$ can be replaced by a $\{0,1\}$ variable $Y'$). Then, since we have $E[Y']=V(S'_n) \ge V(S_n) = E[Y]$, the second inequality follows from  Lemma \ref{lemma-2}. The third inequality follows from Lemma \ref{lemma-3}. 
    
    This completes the proof of the lemma.
    \end{proof}

\begin{lemma}
\label{lem:runtime}
Algorithm \ref{alg:fptas} with parameters $\epsilon\in (0,1)$ and $\MAX\ge \frac{\OPT}{2}$  runs in $O(\frac{n^4}{\epsilon^2})$ time.
\end{lemma}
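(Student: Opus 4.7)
The plan is to decompose the runtime as (number of iterations) times (work per iteration), where the work per iteration is dictated by the number of partitions that survive TRIM. I would first bound $|\mathcal{L}^k|$ after any TRIM call, then the per-iteration work, and finally multiply.

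For the first step, TRIM retains at most $J+1$ partitions, so I would bound $J$. By definition $\rho^J \max \ge (\epsilon/(2n))\MAX$ with $\rho = 1 - \epsilon/(2n)$; taking logs and using $-\ln(1-x) \ge x$ gives
\[
J \;\le\; \frac{2n}{\epsilon}\,\ln\!\left(\frac{2n\,\max}{\epsilon\,\MAX}\right).
\]
Since every $X_i \in [0,1]$, any value $V(T)$ arising in the algorithm satisfies $V(T) \le V(S,T) \le \OPT$ by extending $T$ to a full partition, and the hypothesis $\MAX \ge \OPT/2$ then gives $\max/\MAX \le 2$. Hence $J \le (2n/\epsilon)\ln(4n/\epsilon)$, and the crude estimate $\ln(4n/\epsilon) = O(n/\epsilon)$ yields $|\mathcal{L}^k| \le J+1 = O(n^2/\epsilon^2)$.

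For the second step, I would represent each partition explicitly by the two sequences $S$ and $T$, cached alongside the DP values $V(S)$ and $V(T)$. Forming a child of $(S,T)$ requires an $O(k) = O(n)$ structural copy of the parent plus an $O(1)$ update $V(\{X_k,S\}) = E[\max(X_k,V(S))]$ (and analogously for $T$). TRIM then scans the at-most $2|\mathcal{L}^{k-1}|$ new partitions, computes each bucket index from the cached $V(T)$ in $O(1)$ time (for instance via $\lfloor \log_{1/\rho}(\max/V(T))\rfloor$), and keeps the argmax on $V(S)$ in every bucket in one more $O(1)$-per-partition pass. The copies dominate, so iteration $k$ costs $O(n\cdot J)$. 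Summing over $k=1,\ldots,n$ gives total runtime $O(n^2 J) = O(n^4/\epsilon^2)$, as claimed.

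The main obstacle is the bound on $J$: the clean inequality $-\ln(1-\epsilon/(2n)) \ge \epsilon/(2n)$ does the heavy lifting, but one must verify that $\max \le 2\MAX$ throughout execution, which relies on both the hypothesis $\MAX \ge \OPT/2$ and the monotonicity $V(T) \le \OPT$ for every $T$ appearing in some partition. Once this is in place, everything else reduces to standard bookkeeping for an incremental DP.
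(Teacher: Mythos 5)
Your proposal is correct and follows essentially the same route as the paper: bound the number of buckets $J$ via $-\ln(1-\epsilon/(2n))\ge \epsilon/(2n)$ together with $\max/\MAX\le 2$ (from $\MAX\ge \OPT/2$), conclude $O(n^2/\epsilon^2)$ surviving partitions per iteration, and multiply by $O(n)$ work per partition over $n$ iterations. The only differences are cosmetic: you justify $\max\le\OPT$ explicitly (the paper asserts it) and maintain values incrementally rather than recomputing each expected reward in $O(n)$, which changes nothing in the final bound.
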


\begin{proof}
 Given $\MAX\ge \frac{\OPT}{2}$, in the TRIM procedure (Algorithm \ref{alg:trim}), we always have $\frac{max}{\MAX} \le \frac{\OPT}{\OPT/2} \le 2$. Therefore,  the condition $\rho^J max \ge \frac{\epsilon}{2n} \MAX$ in the TRIM procedure ensures that the number of buckets 
 
 $$J\le  \log_{1/\rho}(\frac{2n}{\epsilon} \frac{max}{\MAX}) \le \log_{1/\rho}(\frac{4n}{\epsilon})=O(\frac{1}{(1-\rho)}\frac{n}{\epsilon})= O(\frac{n^2}{\epsilon^2})$$
	
	
Therefore, we maintain $O(\frac{n^2}{\epsilon^2})$ partitions in each iteration. Since for each partition, we need to calculate the expected reward, which is $O(n)$ time, and there are $n$ iterations, we get the lemma statement.
\end{proof}
Now, we are ready to prove Theorem \ref{th:FPTAS-same}.
\paragraph{Proof of Theorem \ref{th:FPTAS-same}}
Let $\mathcal{L}^n$ be the set of partitions returned by Algorithm \ref{alg:fptas} with parameters $\epsilon\in (0,1)$, and 
 \begin{center} $\MAX:=\frac{1}{2} E[\max(X_1,\ldots, X_n)].$ \end{center}
 Then, $\MAX\geq \frac{1}{2} \OPT$, so that by Lemma \ref{lem:runtime}, Algorithm \ref{alg:fptas} runs in time $O(\frac{n^4}{\epsilon^2})$ time. 
 Also, using prophet inequality  \cite{samuel1984comparison}, $\MAX \le \OPT$, 
 so that by Lemma \ref{Partition-Subset} and Lemma \ref{ALG-OPT approx}, 
 \begin{center}
     $\ALG\ge (1-\epsilon/2) \OPT' \ge (1-\epsilon/2)^2 \OPT \ge (1-\epsilon)\OPT.$
 \end{center}

\section{Other algebraic  lemmas }

We used following lemmas in the analysis.

\begin{lemma}\label{scaling}
	\textbf{Additive Scaling:} Given random variables $X_1,\ldots X_n$ and $c\in\mathbb{R}$ such that $Y_i:= X_i + c$ is a non-negative random variable. Let $\sigma$ be a permutation. Then $V(Y_{\sigma(1)}, \ldots, Y_{\sigma(n)}) = V(X_{\sigma(1)}, \ldots, X_{\sigma(n)}) + c$.
\end{lemma}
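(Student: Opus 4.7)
The plan is to prove the lemma by backward induction on the position within the sequence. Fix the permutation $\sigma$; by relabeling we may assume $\sigma$ is the identity, so we must show $V(Y_1,\ldots,Y_n)=V(X_1,\ldots,X_n)+c$. For $j=1,\ldots,n+1$, define the tail values $V^X(j):=V(X_j,X_{j+1},\ldots,X_n)$ and $V^Y(j):=V(Y_j,Y_{j+1},\ldots,Y_n)$, with the convention $V^X(n+1)=V^Y(n+1)=0$. By the recursive definition in \eqref{eq:notation1}–\eqref{eq:backward}, these satisfy
\begin{equation*}
V^X(j)=E[\max(X_j,V^X(j+1))], \qquad V^Y(j)=E[\max(Y_j,V^Y(j+1))].
\end{equation*}
The lemma's conclusion is exactly $V^Y(1)=V^X(1)+c$.

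I would establish the stronger inductive statement $V^Y(j)=V^X(j)+c$ for every $j\in\{1,\ldots,n\}$, from which the lemma follows by setting $j=1$. The inductive step is immediate from the translation identity $\max(a+c,b+c)=\max(a,b)+c$ together with linearity of expectation: assuming $V^Y(j+1)=V^X(j+1)+c$,
\begin{equation*}
V^Y(j)=E[\max(X_j+c,V^X(j+1)+c)]=E[\max(X_j,V^X(j+1))]+c=V^X(j)+c.
\end{equation*}

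The only delicate point is the base case $j=n$, where one must handle the innermost $\max(\cdot,0)$. By hypothesis $Y_n=X_n+c\ge 0$, so $V^Y(n)=E[\max(Y_n,0)]=E[Y_n]=E[X_n]+c$. In the settings where the lemma is applied (e.g.\ the proof of Proposition~\ref{Two-Point-Structure}, where the shifted variables $X''_i$ are themselves non-negative) we also have $X_n\ge 0$, which gives $V^X(n)=E[X_n]$ and the base case reads $V^Y(n)=E[X_n]+c=V^X(n)+c$. This reduction to the one-line translation identity for $\max$ is essentially the only substantive step; the main (minor) obstacle is precisely this interplay between the $\max(\cdot,0)$ clamp at the end of the recursion and the shift $c$, which is what forces the non-negativity hypothesis and is what makes the induction close at $j=n$ rather than at $j=n+1$.
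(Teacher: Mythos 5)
Your proof is correct and follows essentially the same route as the paper's: an induction that peels off one variable at a time and applies the translation identity $\max(a+c,\,b+c)=\max(a,b)+c$ together with linearity of expectation (the paper phrases it as forward induction on the number of variables, yours as backward induction on the tail index, which is the same argument). Your explicit attention to the base case is warranted rather than pedantic --- the paper's own base case silently writes $E[X+c]=E[\max\{X,0\}]+c$, which likewise requires $X\ge 0$, a condition that holds in every application of the lemma but is not literally implied by the stated hypothesis that only $Y_i=X_i+c$ be non-negative.
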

\begin{proof}
	We prove by induction. For one variable, $V(Y) = E[Y] = E[X + c] = E[\max\{X, 0\}] + c = V(X) + c$.
	W.l.o.g, let $\sigma = (1, 2, \ldots, k+1)$. For the inductive step:
	
	\begin{eqnarray*}
	V(Y_1,\ldots, Y_{k+1}) &=& V(Y_1, V(Y_2\ldots, Y_{k+1}))\\
	&=&V(X_1 + c, V(X_2\ldots, X_{k+1}) + c)\\
	&=&E[\max(X_1, V(X_{2}, \ldots, X_{k+1})] + c\\
	&=&	V(X_1,\ldots, X_{k+1}) + c
	\end{eqnarray*}

    where the second line follows from the induction hypothesis.
\end{proof}
\begin{lemma}\label{Adding to Tail}
	For any $v \ge 0$, $E[\max\{X, c+v\}]\leq E[\max\{X, c\}]+v$ and $E[\max\{X,c-v\}]\geq E[\max\{X,c\}]-v$
\end{lemma}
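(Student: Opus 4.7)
The plan is to prove both inequalities by establishing the corresponding pointwise inequalities on realizations of $X$ and then taking expectations. The key observation is that the map $c \mapsto \max\{x,c\}$ is $1$-Lipschitz in $c$ for every fixed $x$, which is exactly what the two claims assert in expectation.

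First, for the upper bound, I would fix a realization $x$ of $X$ and verify directly that $\max\{x, c+v\} \le \max\{x, c\} + v$. There are two cases: if $x \ge c+v$, then the left side equals $x$ and the right side equals $x + v \ge x$ (since $v \ge 0$); if $x < c+v$, the left side equals $c+v$ while the right side is at least $c + v$ (since $\max\{x,c\} \ge c$). Taking expectations over $X$ yields the first inequality $E[\max\{X,c+v\}] \le E[\max\{X,c\}] + v$.

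Next, for the lower bound I would argue similarly: pointwise, $\max\{x, c-v\} \ge \max\{x, c\} - v$. If $x \ge c$, then $\max\{x,c\} = x$ and $\max\{x, c-v\} \ge x = \max\{x,c\} \ge \max\{x,c\} - v$; if $x < c$, then $\max\{x,c\} = c$ and $\max\{x, c-v\} \ge c - v = \max\{x,c\} - v$. Taking expectations gives the second inequality.

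There is no real obstacle here: the whole content is the pointwise $1$-Lipschitz property of $c \mapsto \max\{x,c\}$, and once that is verified by case analysis, monotonicity of expectation completes the argument.
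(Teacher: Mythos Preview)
Your proof is correct: the pointwise $1$-Lipschitz property of $c\mapsto\max\{x,c\}$ followed by taking expectations is exactly the right argument, and your case analysis is complete. The paper states this lemma without proof, evidently regarding it as immediate, so there is nothing further to compare.
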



\begin{lemma}\label{lemma-2}
	Let $Y_1$ and $Y_2$ be two $\{0,1\}$ random variables where $E[Y_1] \geq E[Y_2]$. Then $E[\max\{Y_1, c\}] \geq E[\max\{Y_2,c\}]$ for any constant $0\leq c < 1$.
\end{lemma}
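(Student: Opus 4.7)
My plan is to reduce the inequality to a direct computation by parameterizing each $\{0,1\}$ random variable by its success probability and then observing that the quantity of interest is linear and increasing in this parameter.

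First, I would write $p_i := \Pr(Y_i = 1)$ for $i=1,2$, so that $E[Y_i] = p_i$ and the hypothesis $E[Y_1] \geq E[Y_2]$ becomes $p_1 \geq p_2$. Next, I would compute $E[\max\{Y_i,c\}]$ by conditioning on the two possible values of $Y_i$. Since $0 \le c < 1$, the maximum satisfies $\max\{1,c\}=1$ and $\max\{0,c\}=c$, so
\begin{equation*}
    E[\max\{Y_i,c\}] \;=\; p_i\cdot 1 + (1-p_i)\cdot c \;=\; c + p_i(1-c).
\end{equation*}

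The right-hand side is an affine function of $p_i$ with slope $1-c$, which is strictly positive because $c < 1$. Hence the map $p \mapsto c + p(1-c)$ is monotonically nondecreasing, and from $p_1 \geq p_2$ we immediately conclude $E[\max\{Y_1,c\}] \geq E[\max\{Y_2,c\}]$, as required. There is no real obstacle here; the only subtlety worth flagging is the role of the constraint $c<1$, which ensures the slope is positive (if $c=1$ both sides would trivially equal $1$, and the statement would still hold but degenerate), and the constraint $c\geq 0$, which ensures that $\max\{0,c\}=c$ rather than $0$. Both constraints are explicitly given.
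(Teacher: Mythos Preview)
Your proof is correct. The paper actually omits the proof of this lemma entirely (it is stated without proof in the appendix as an elementary algebraic fact), and your direct computation---writing $E[\max\{Y_i,c\}] = c + p_i(1-c)$ and using monotonicity in $p_i$---is exactly the natural one-line argument one would supply.
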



\begin{lemma}\label{lemma-3} 
	$E[\max\{X, \delta c\}]/E[\max\{X, c\}]\geq \delta$ for $0\leq \delta\leq 1$
\end{lemma}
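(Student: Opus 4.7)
\medskip

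\noindent\textbf{Proof proposal.} The plan is to establish the stated inequality at the pointwise level, i.e., to show that for each realization of $X$ one has $\max\{X,\delta c\}\ge \delta \max\{X,c\}$, and then integrate. In the intended uses of this lemma in the analysis of Algorithm~\ref{alg:fptas}, the constant $c$ plays the role of an expected optimal-stopping value $V(T_k)$, hence $c\ge 0$; I will work under this (mild) assumption, which is what the rest of the paper relies on.

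\smallskip

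The natural split is on the sign of $X-c$. If $X\ge c$, then in particular $X\ge c\ge \delta c$, so both maxima collapse to $X$, giving $\max\{X,\delta c\}=X=\max\{X,c\}$; since $X\ge c\ge 0$ and $\delta\le 1$, this already implies $\max\{X,\delta c\}\ge \delta \max\{X,c\}$. If instead $X<c$, then $\max\{X,c\}=c$, so $\delta \max\{X,c\}=\delta c$, and the desired inequality $\max\{X,\delta c\}\ge \delta c$ is immediate from the definition of the max. Thus in both cases the pointwise bound
\[
\max\{X,\delta c\}\;\ge\;\delta\,\max\{X,c\}
\]
holds. Taking expectation over $X$ preserves the inequality, which rearranges to the claimed ratio bound (in the degenerate case $E[\max\{X,c\}]=0$, the statement is either vacuous or a tautology since the denominator vanishes only when $c=0$ and $X\le 0$ a.s., in which case both sides are zero).

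\smallskip

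\noindent\textbf{Main obstacle.} There is essentially no technical difficulty; the only point requiring care is making the sign convention on $c$ explicit so that the pointwise inequality is valid (if $c<0$, the statement can fail, e.g. $X=c<0$, $\delta<1$). Since every invocation in the paper has $c$ arising as a $V(\cdot)$ value and hence nonnegative, the lemma is exactly as strong as needed for the inductive step \eqref{eq:tmp1} in the proof of Lemma~\ref{ALG-OPT approx}.
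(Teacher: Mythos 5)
Your proof is correct, and it takes a genuinely different route from the paper's. The paper derives the bound from its shift lemma (Lemma~\ref{Adding to Tail}): writing $\delta c = c - (c-\delta c)$ it gets $E[\max\{X,\delta c\}] \ge E[\max\{X,c\}] - c(1-\delta)$, divides by $E[\max\{X,c\}]$, and finishes with $E[\max\{X,c\}]\ge c$. You instead prove the stronger \emph{pointwise} domination $\max\{X,\delta c\}\ge \delta\max\{X,c\}$ by splitting on the sign of $X-c$ and then take expectations. Your version buys a few things: it is self-contained (no appeal to the shift lemma), it yields an almost-sure rather than merely in-expectation inequality, and it sidesteps the division entirely, so the degenerate case $E[\max\{X,c\}]=0$ — where the paper's chain of inequalities is formally a $0/0$ — is handled cleanly. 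Both arguments require $c\ge 0$ (the paper needs it for $E[\max\{X,c\}]\ge c$ to be useful and for the division; you need it in the case $X\ge c$ to conclude $X\ge \delta X$), and you are right that this hypothesis, though unstated in the lemma, is satisfied in every invocation since $c$ is always a $V(\cdot)$ value of nonnegative random variables. The paper's route is marginally shorter given that Lemma~\ref{Adding to Tail} is already on hand; yours is the more elementary and slightly more robust argument.
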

\begin{proof}
		For convenience we denote $V(X,c):=E[\max\{X, c\}]$
	
	\begin{eqnarray*}
 	\frac{V(X,\delta c)}{V(X,c)}&\geq& \frac{V(X,c)-(c-\delta c)}{V(X,c)}\\
 	&=& 1-\frac{c(1-\delta)}{V(X,c)}\\
 	&\geq& 1-(1-\delta)\\
 	&=&\delta 
 	\end{eqnarray*}
	
 	where in the first line, we used Lemma $\ref{Adding to Tail}$ and in the third line, we used $V(X,c)\geq c$
 \end{proof}

 \newpage
\end{document}